\documentclass[sigconf]{acmart}

\copyrightyear{2021} 
\acmYear{2021} 
\setcopyright{acmlicensed}\acmConference[SIGMOD '21]{Proceedings of the 2021 International Conference on Management of Data}{June 20--25, 2021}{Virtual Event, China}
\acmBooktitle{Proceedings of the 2021 International Conference on Management of Data (SIGMOD '21), June 20--25, 2021, Virtual Event, China}
\acmPrice{15.00}
\acmDOI{10.1145/3448016.3457298}
\acmISBN{978-1-4503-8343-1/21/06}

\begin{CCSXML}
<ccs2012>
<concept>
<concept_id>10003752.10010070.10010111.10011710</concept_id>
<concept_desc>Theory of computation~Data structures and algorithms for data management</concept_desc>
<concept_significance>500</concept_significance>
</concept>
</ccs2012>
\end{CCSXML}

\ccsdesc[500]{Theory of computation~Data structures and algorithms for data management}

\keywords{Personalized PageRank; Power Iteration; Forward Push}

\usepackage{booktabs} 


\usepackage{amsmath}
\usepackage{balance}
\usepackage{euscript}
\usepackage{graphicx}
\usepackage{subcaption}
\usepackage{hyperref}
\usepackage{mathrsfs}
\usepackage{mdframed}
\usepackage{microtype} 
\usepackage{multirow}
\usepackage[skins,breakable]{tcolorbox}
\usepackage[normalem]{ulem}
\usepackage{xspace}
\usepackage{enumitem}
\usepackage[ruled, vlined, linesnumbered]{algorithm2e}

\newcommand{\nosemic}{\renewcommand{\@endalgocfline}{\relax}}
\newcommand{\dosemic}{\renewcommand{\@endalgocfline}{\algocf@endline}}
\let\oldnl\nl
\newcommand{\nonl}{\renewcommand{\nl}{\let\nl\oldnl}}



\usepackage{amsfonts}



\def\figcapup{\vspace{-1mm}}






\def\eps{\epsilon}
\def\lmd{\lambda}

\def\-{\mbox{-}}

\def\la{\langle}

\def\ra{\rangle}

\def\*{\star}








\def\figcapup{\vspace{-3mm}}

\newtheorem{observation}{Observation}

\def\hpi{\hat{\pi}}
\def\vpi{\vec{\pi}}
\def\hpis{\hat{\pi_s}}
\def\rmax{r_{\max}}
\def\rsum{r_{\text{sum}}}
\def\rbefore{\rsum^{\text{before}}}
\def\rafter{\rsum^{\text{after}}}
\def\pis{\vec{\pi_s}}
\def\epis{\hat{\pi_s}}
\def\abserror{\|\epis - \pis\|_1}
\def\mA{\mathbf{A}}
\def\mM{\mathbf{M}}

\def\mP{\mathbf{P}}
\def\ves{\vec{e_s}}
\def\vrs{\vec{r_s}}
\def\ve{\vec{e}}
\def\vga{\vec{\gamma_s}}
\def\itj{^{(j)}}
\def\itjnext{^{(j + 1)}}

\def\vpisjnext{\vpi_s^{(j+1)}}

\def\powitr{\emph{PowItr}}
\def\fwdpush{\emph{FwdPush}}
\def\powforpush{\emph{PowerPush}}
\def\simfwdpush{\emph{SimFwdPush}}
\def\itrfwdpush{\emph{FIFO-FwdPush}}
\def\speedppr{\emph{SpeedPPR}}

\def\fora{\emph{FORA}}
\def\mc{\emph{MonteCarlo}}
\def\bepi{\emph{BePI}}
\def\resacc{\emph{ResAcc}}
\def\speedppri{\emph{SpeedPPR-Index}}
\def\forai{\emph{FORA-Index}}


\settopmatter{printacmref=true}

\begin{document}
\fancyhead{}

\begin{sloppy}

\title[Unifying the Global and Local Approaches: An Efficient Power Iteration with Forward Push]
{Unifying the Global and Local Approaches: \texorpdfstring{\\}{Lg} An Efficient Power Iteration with Forward Push}


\author{Hao Wu}
\email{whw4@student.unimelb.edu.au}
\affiliation{
  \institution{The University of Melbourne$^*$}
  \authornote{~School of Computing and Information Systems}
  \country{}
}

\author{Junhao Gan}
\email{junhao.gan@unimelb.edu.au}
\affiliation{
 \institution{The University of Melbourne$^*$}
  \country{}
}

\author{Zhewei Wei}
\email{zhewei@ruc.edu.cn}
\affiliation{
	\institution{Renmin University of China$^\dagger$}
    \authornote{Gaoling School of Artificial Intelligence}
     \country{}
}

\author{Rui Zhang}
\email{rui.zhang@ieee.org}
\affiliation{
 \institution{Tsinghua University$^\ddagger$}
 \authornote{\href{https://ruizhang.info}{www.ruizhang.info}}
  \country{}
}

\begin{abstract}
	{\em Personalized PageRank} (PPR) is a critical measure of the {\em importance} of a node $t$ to a source node $s$ in a graph.
	The {\em Single-Source PPR} (SSPPR) query computes the PPR's of all the nodes with respect to $s$ on a directed graph $G$ with $n$ nodes and $m$ edges; 
	and it is an essential operation widely used in graph applications.
	In this paper, we propose novel algorithms for answering two variants of SSPPR queries: (i) {\em high-precision} queries and (ii) {\em approximate} queries.
	
	For high-precision queries, {\em Power Iteration} ($\powitr$) and {\em Forward Push} ($\fwdpush$) are two fundamental approaches.
	Given an absolute error threshold $\lmd$ (which is typically set to as small as $10^{-8}$), the only known bound of $\fwdpush$ is $O(\frac{m}{\lmd})$, much worse than the $O(m \log \frac{1}{\lmd})$-bound of $\powitr$.
	Whether $\fwdpush$ can achieve the same running time bound as $\powitr$ does still remains an open question in the research community.
	We give a positive answer to this question. We show that 
	the running time of a common implementation of $\fwdpush$ is actually bounded by $O(m \cdot \log \frac{1}{\lmd})$.
	Based on this finding, we propose a new algorithm, called {\em Power Iteration with Forward Push} ($\powforpush$), which incorporates the strengths of both $\powitr$ and $\fwdpush$.
	
	For approximate queries (with a relative error $\eps$), 
	we propose a new algorithm, called $\speedppr$,  with overall expected time bounded by $O(n \cdot \log n \cdot \log \frac{1}{\eps})$ on scale-free graphs.
	This improves the state-of-the-art $O(\frac{n \cdot \log n}{\eps})$ bound.

	We conduct extensive experiments on six real datasets.
	The experimental results show that $\powforpush$ outperforms the state-of-the-art high-precision algorithm $\bepi$ by up to an order of magnitude in both efficiency and accuracy. 
	Furthermore, our $\speedppr$ also outperforms the state-of-the-art approximate algorithm $\fora$ by up to an order of magnitude in all aspects including query time, accuracy, pre-processing time as well as index size. 

\end{abstract}

\maketitle

\section{Introduction}\label{sec:intro}

As a natural data model, graphs are playing a more and more important role in real-world applications nowadays.
In a graph, it is often useful to measure the {\em relevance} between nodes.
One of the most important relevance measurements is the {\em importance} of a node $t$ to a node $s$,
for which the {\em Personalised PageRank} (PPR) is a widely adopted indicator.

Consider a directed graph $G = \la V, E \ra$ with $n$ nodes and $m$ edges, a source node $s$ and a target node $t$ in $V$;
the \textit{PPR} of $t$ with respect to $s$, denoted by $\pi(s, t)$, 
is the probability that an $\alpha$-random walk from $s$ stops at $t$. 
Specifically, an $\alpha$-random walk (for some constant $\alpha \in [0, 1)$, e.g., $\alpha = 0.2$) from $s$ is proceeded as follows: 
starting from $s$, the walk may stop at the current node $v$ (initially $v=s$) with the probability of $\alpha$, or with the probability of $1 - \alpha$, the walk may move to one of $v$'s out-neighbors uniformly at random.

Of particular interest is the {\em Single Source PPR} ({SSPPR}) query;  its goal is to compute $\pi(s,v)$ for every node $v \in V$ with respect to a given source node $s$. 
The answer to a SSPPR query is a {\em vector} in $\mathbb{R}^{1 \times n}$, denoted by $\pis$, of which the $i$-th coordinate is the PPR $\pi(s, v_i)$, where $v_i$ is the $i$-th node in $G$.
The SSPPR query has many important traditional applications such as computing PageRank and Who-to-Follow recommendation in social networks (e.g., Twitter). Moreover, the SSPPR query provides essential and primitive features widely used in representation learning for graphs, which is attracting huge attention in the machine learning community at the moment.
For example, the PPR information has been adopted in graph embedding methods such as HOPE~\cite{OuCPZ016}, STRAP~\cite{YinW19} and Verse~\cite{TsitsulinMKM18}, and 
graph attention networks such as ADSF~\cite{0001ZWZ20}.

Therefore, it is imperative to have highly efficient algorithms for answering SSPPR queries. 
It is known that an SSPPR query can be {\em precisely} solved by solving the following linear equation system~\cite{page1999pagerank}:
{\small
\begin{align}\label{eq:pprsystem}
	\pis = \alpha \cdot \ves + (1 - \alpha) \cdot \pis \cdot \mP\,,
\end{align}
}%
where $\ves \in \mathbb{R}^{1 \times n}$ is an indicator vector which has $1$ on the $s$-th dimension and $0$ for others,
and $\mP \in \mathbb{R}^{n\times n}$ is the so-called {\em transition matrix} of $G$.
However, solving Equation~\eqref{eq:pprsystem} requires to compute the inverse of an $n\times n$  matrix related to $\mP$, which is expensive. 
In practice, to trade for better efficiency, people instead compute an estimation $\hpis$ of $\pis$, which meets a certain error criteria.
Along this direction, SSPPR queries can be categorized into two variants: (i) {\em High-Precision SSPPR} queries and (ii) {\em Approximate SSPPR} queries.

In this paper, we propose novel algorithms for answering these two types of queries.
Our algorithms are efficient both in theory and in practice.
Before illustrating our results, we first set up the context of the relevant state-of-the-art algorithms.

\vspace{1mm}
\noindent
{\bf High-Precision SSPPR.} 
The goal of this type of queries is to compute a {\em high-precision} estimation $\epis$ of $\pis$ such that the $\ell_1$-error $\abserror \leq \lmd$, 
where $\lmd$ is a specified threshold and $\lmd$ is often set to as small as $10^{-8}$. 
{\em Power Iteration} ($\powitr$) and {\em Forward Push} ($\fwdpush$) are two fundamental approaches to answer high-precision SSPPR queries. 


\vspace{1mm}
\noindent
\underline{\em Power Iteration ($\powitr$).}
$\powitr$ is an iterative algorithm for solving Equation~\eqref{eq:pprsystem}.
More specifically, it refines an estimation $\hpis$ of $\pis$ iteration by iteration;
in each iteration, $\abserror$ decreases by a factor of $(1- \alpha)$.
It is known that the  overall running time of $\powitr$ is bounded by $O(m \cdot \log \frac{1}{\lmd})$ \cite{Berkhin05}.

\vspace{1mm}
\noindent
\underline{\em Forward Push ($\fwdpush$).}
$\fwdpush$ is another feasible approach to answer high-precision SSPPR queries.
It is well-known that the running time of $\fwdpush$ is bounded by $O(\frac{1}{\rmax})$, where $\rmax$ is a parameter that controls the stop condition of the algorithm.
However, 
at the time when $\fwdpush$ was first proposed in 2006~\cite{AndersenCL06}, the $\ell_1$-error bound of this approach was unclear. 
In 2017, Wang et. al \cite{WYXWY17} officially documented that the $\ell_1$-error is bounded by
{\small
\begin{align}\label{eq:mrmax}
\|\hat{\pi_s} - \vec{\pi_s}\|_1 \leq m \cdot r_{\max} \,.
\end{align}
}%
Therefore, in order to guarantee $\abserror \leq \lmd$, one needs to set $\rmax = \lmd / m$ leading to an overall time complexity $O(m / \lmd)$.
Unfortunately, this bound is not quite useful. Given that $\lmd$ is often as small as $10^{-8}$, this bound would imply a huge cost when the graph is large, e.g., on the billion-edge Twitter graph.
However, interestingly, despite of the $O(m /\lmd)$-bound, $\fwdpush$ is found to
be more efficient
than the bound suggests in certain applications (e.g., the Approximate SSPPR queries as discussed below). 

Therefore, a significant {\em knowledge gap} still exists between the practical use and the theoretical understanding of $\fwdpush$.
In particular, the following question:
\begin{quote}
	{\em Does $\fwdpush$ admit a tighter running time bound with a weaker dependency on the $\ell_1$-error threshold $\lmd$?}
\end{quote}
remains {\em open} to the research community.

\vspace{1mm}
\noindent
{\bf Approximate SSPPR.}
The aim of approximate SSPPR is to compute an estimation $\hpi(s, v)$ bounded by a {\em relative error} $\eps$, i.e., $|\hpi(s,v) - \pi(s, v)| \leq \eps \cdot \pi(s,v)$,
for every node $v$ whose $\pi(s,v) \geq 1/n$, and the algorithm must be correct with probability of at least $1 - {1}/{n}$.

\vspace{1mm}
\noindent
\underline{\em FORA.} 
$\fora$ \cite{WangTXYL16} is a representative of the state-of-the-art approximate SSPPR algorithms.
%
The basic idea of $\fora$ is to combine $\fwdpush$ and the $\mc$  method.
Specifically, there are two phases:
in the first phase, 
$\fora$ runs $\fwdpush$ to obtain an estimation $\hat{\pi_s}$ such that $ \abserror \leq m \cdot r_{\max}$.
In the second phase, the $\mc$ method based on $\hpis$ is adopted to refine the estimations to 
be within a relative error $\eps$ for every node $v$ with $\pi(s,v) \geq 1/n$. 
The overall expected running time is bounded by $O(\frac{1}{r_{\max}} + m \cdot r_{\max} \cdot \frac{n \log n}{\eps^2})$.
By setting $r_{max}$ carefully to ``balance'' the two terms and assuming the graph is {\em scale-free}, i.e., $m = O( n\log n)$,
the complexity is minimized to $O(\frac{n \log n}{\eps})$.
%
In the literature, no existing work~\cite{WangTXYL16, wei2018topppr, YWXWLY019, lin2020index} can overcome this $O(\frac{n \log n}{\eps})$-barrier.

Besides, $\fora$ admits an index version, called $\fora$+, where the results of the $\alpha$-random walks that would be needed in the $\mc$  phase are pre-generated.
With the index, the actual running time of $\fora$+ can be further reduced.
However, 
since $\fora$ has to set $\rmax$ to minimize the complexity,
the number of random walks required to be pre-generated in $\fora$+ depends on the relative error $\eps$.
Thus, the index constructed for one $\eps$ value may not be sufficient for answering a query with another smaller $\eps$ value.
This weakness significantly limits the applicability of $\fora$+.

\vspace{1mm}
\noindent
{\bf Our Contributions.} We make the following contributions:
\vspace{-1mm}
\begin{itemize}[leftmargin = *]
	\item \underline{An Equivalence Connection.} 
		We show that there essentially exists an {\em equivalence connection} between the global-approach $\powitr$ and the local-approach $\fwdpush$.
	\vspace{1mm}
	\item \underline{A Positive Answer to the Open Question.} 
		Embarking from this connection, we prove that 
		the running time of a common $\fwdpush$ implementation is actually bounded by $O(m \cdot \log \frac{1}{\lambda})$ with $\rmax  = \lmd / m$, 
		rather than the widely accepted $O(\frac{m}{\lmd})$-bound. 
			
	\vspace{1mm} 
	\item \underline{A New Algorithm for High-Precision SSPPR.} 
		Based on our finding, we 
		propose a new implementation for $\powitr$ (and hence, also for $\fwdpush$), called
		{\em Power Iteration with Forward Push} ($\powforpush$).
		Our $\powforpush$ is carefully designed such that it incorporates both the strengths of $\powitr$ and $\fwdpush$ (detailed discussions are in Section~\ref{sec:powforpush}).
		Therefore, it outperforms $\powitr$ and $\fwdpush$ in all cases in our experiments while still achieving the $O(m \cdot \log \frac{1}{\lmd})$ theoretical bound.
	
		\hspace{6pt}
		Moreover, unlike the state-of-the-art algorithm, $\bepi$~\cite{jung2017bepi},
		which requires a substantial pre-processing time and space for index storage,  
		$\powforpush$ is completely on-the-fly without needing any pre-processing or index pre-computation.
		Even though the advantage of pre-processing is taken, 
		in our experiment, on a medium-size data, {\em Orkut}, 
		$\bepi$ requires  $672$ seconds for a query. 
		Our $\powforpush$ answers the same query in less than $40$ seconds, $17$ times faster than $\bepi$.

		\hspace{6pt}
		Besides, although $\powforpush$ is a high-precision algorithm, in our experiments, in some cases, it even outperforms the state-of-the-art approximate SSPPR algorithms in running time.

		\hspace{6pt}
		Finally, given the fact that $\powitr$ is an important fundamental method, we believe that our $\powforpush$ would be of independent interests in other  applications beyond  the SSPPR queries.

	\vspace{1mm}
	\item \underline{A New Algorithm for Approximate SSPPR.}
		Based on the support of $\powforpush$, we further design a new algorithm, called {\em SpeedPPR}, for answering approximate SSPPR queries. 
		On scale-free graphs with $m = O(n\cdot\log n)$, the overall expected time of $\speedppr$ is bounded by $O(n\cdot \log n \cdot \log \frac{1}{\eps})$,
		improving the state-of-the-art $O(\frac{n \cdot \log n}{\eps})$-bound.
		Furthermore, $\speedppr$ {\em always} admits an index of at most $m$ $\alpha$-random walk results. 
				Hence, the space consumption of the index is at most as large as the graph itself. 
				More importantly, the index size of {\em SpeedPPR} is independent to the values of $\eps$. 
				In other words, once the index is built, it suffices to answer queries with any $\eps$. 
				This feature of {\em SpeedPPR} is considered as an important improvement over $\fora$+.
				In particular, for small $\eps$ values, {\em SpeedPPR} consumes $10$ times less space than $\fora$+ does for index storage.
	\vspace{1mm}
	\item \underline{Extensive Experiments.}
		We conduct extensive experiments on six real datasets which are widely adopted in the literature. 
		The experimental results show that our $\powforpush$ outperforms the state-of-the-art high-precision SSPPR algorithms by up to an order of magnitude. Our {\em SpeedPPR} outperforms all the state-of-the-art competitors for approximate SSPPR by up to an order of magnitude in terms of 
		query efficiency and result accuracy; for index-based version, $\speedppr$ also achieves up to $10$ times  improvements on both pre-processing time and index size. 

\end{itemize}

\vspace{-2mm}
\noindent
{\bf Paper Organization.}
Section~\ref{sec:problem} defines the problems and notations. 
Section~\ref{sec:pre-high} introduces $\powitr$ and $\fwdpush$ in detail.
In Section~\ref{sec:analysis}, we show the time complexity of $\fwdpush$.
In Section~\ref{sec:powforpush}, $\powforpush$ is proposed along with some crucial optimizations.
Section~\ref{sec:speedppr} shows our $\speedppr$.
Section~\ref{sec:related-work} is about related work and Section~\ref{sec:exp} shows the experimental results.
Finally, Section~\ref{sec:conclusion} concludes the paper.

\vspace{-3mm}
\section{Problem Formulation}\label{sec:problem}



Consider a directed graph $G = \la V, E \ra$ with $n = |V|$ nodes and $m = |E|$ edges.
Without loss of generality, we assume that the nodes in $V$ are in order such that $v_i$ is the $i$-th node in $V$, where $i \in [n]$ and $[n] = \{1, 2, \ldots, n\}$.
For a node $v \in V$, denote the set of the {\em out-neighbors} of $v$  
by $N_{out}(v) = \{u \mid (v, u) \in E\}$, and $d_v = |N_{out}(v)|$ is defined as the {\em out-degree} of $v$. 
Clearly, $m = \sum_{v\in V} d_v$.
In this paper, we assume that there is no ``dead-end'' nodes, i.e., $d_v \geq 1$ holds for all $v \in V$, in the graph $G$.
As we explain below, this assumption is without loss of generality.

\vspace{1mm}
\noindent
{\bf Indicator Vector.}
Denote by $\ve_{v_i} \in \mathbb{R}^{1 \times n}$ the {\em indicator vector} which has coordinate $1$ on the $i$-th dimension and $0$ on the others, where $v_i \in V$.
It is easy to verify that for any $n\times n$ matrix $\mM$, the result of $\ve_{v_i} \cdot \mM$ is exactly the $i$-th row of $\mM$.

\vspace{1mm}
\noindent
{\bf $\ell_1$-Norm.}
For any $n$-dimensional vector $\vec{x}$, the $\ell_1$-norm of $\vec{x}$ is computed as $\|\vec{x}\|_1 = \sum_{i = 1}^n |x_i|$, where $x_i$ is the $i$-th coordinate of $\vec{x}$.

\vspace{1mm}
\noindent
{\bf Adjacent Matrix.}
The adjacent matrix $\mA$ of a directed graph $G$ is an $n\times n$ matrix, 
where the $i$-th row of $\mA$, denoted by $\vec{A}_{v_i}$, is a row vector which has coordinate $1$ on the $j$-th dimension if $(v_i, v_j) \in E$ and $0$ otherwise, for $j \in [n]$.

\vspace{1mm}
\noindent
{\bf Transition Matrix.} 
The transition matrix $\mP$ of a directed graph $G$ with an adjacent matrix $\mA$ is 
an $n\times n$ matrix, 
where the $i$-th row of $\mP$, denoted by $\vec{P}_{v_i}$, satisfies $\vec{P}_{v_i} = \frac{1}{d_{v_i}} \cdot \vec{A}_{v_i}$, 
and hence, $\|\vec{P}_{v_i}\|_1 = 1$  for all $i \in [n]$.
Furthermore, it can be verified that for any vector $\vec{x} \in \mathbb{R}^{1 \times n}$, it holds that $\|\vec{x} \cdot \mP^k\|_1 = \|\vec{x}\|_1$ for all integer $k \geq 1$. An example of a transition matrix is shown in Figure~\ref{fig:graph}.

\begin{figure}[t]
	\begin{minipage}{0.45\linewidth}
		\centering
		\includegraphics[height = 22mm]{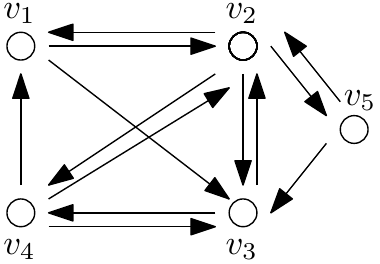}
	\end{minipage}%
	\begin{minipage}{0.5\linewidth}
	\begin{align*}
	\mP = 
	\begin{bmatrix}
	0   & 1/2 & 1/2 & 0   & 0 \\
	1/4 &  0  & 1/4 & 1/4 & 1/4 \\
	0   & 1/2 & 0   & 1/2 & 0 \\
	1/3 & 1/3 & 1/3 & 0   & 0 \\
	0   & 1/2 & 1/2 & 0   & 0
	\end{bmatrix}
	\end{align*}
	\end{minipage}
	\vspace{-2mm}
	\caption{A directed graph $G$ and its transition matrix $\mP$}
	\label{fig:graph}
	\vspace{-5mm}
\end{figure}

\vspace{1mm}
\noindent
{\bf $\alpha$-Random Walk.} 
Consider a {\em constant} parameter $\alpha \in [0, 1)$ which is set to $0.2$ by default in the literature; 
an $\alpha$-random walk from a node $s \in V$ is defined as follows:
let $v$ be the current node and initially the current node $v$ is $s$; 
at every step, the walk stops at $v$ with probability $\alpha$, 
and with probability $1 - \alpha$, the walk moves one-step forward depending on either of the following two cases:
(i) if $N_{out} \neq \emptyset$, the walk {\em uniformly at random}, i.e., with equal probability $\frac{1}{d_v}$, moves to an out-neighbor $u \in N_{out}(v)$ 
(that is, the current node $v$ now becomes $u$);
(ii) otherwise (i.e., $N_{out} = \emptyset$), the walk jumps back to $s$ (the current node $v$ becomes $s$).
Effectively, this is equivalent to {\em conceptually} add an edge from each ``dead-end'' node (whose out-degree is $0$) to the source node $s$,
and hence, one can assume that no dead-end node exists in the graph.
Moreover, without stated otherwise, all the {\em random walks} considered in this paper are $\alpha$-random walks.

\vspace{1mm}
\noindent
{\bf Alive Random Walk.}
If an $\alpha$-random walk at the current node $v$ does not stop yet, then we say this random walk is {\em alive} at $v$.

\vspace{1mm}
\noindent
{\bf Personalized PageRank (PPR).} Consider a node $s \in V$ and a node $t \in V$; the PPR of $t$ with respect to $s$, denoted by $\pi(s, t)$,
is defined as the probability that an $\alpha$-random walk from $s$ stops at $t$.

\vspace{1mm}
\noindent
{\bf Single Source Personalized PageRank (SSPPR).}
Given a source node $s \in V$, the goal of a SSPPR query is to compute 
the {\em PPR} vector $\pis$, where the $i$-th coordinate in $\pis$ is the PPR $\pi(s, v_i)$ of $v_i$.
Essentially, $\pis$ is the {\em probability distribution} over all the nodes that an $\alpha$-random walk from $s$ stops at a node.
Thus, $\|\pis\|_1 = 1$.

\vspace{1mm}
\noindent
{\bf High-Precision SSPPR (HP-SSPPR).}
Given an $\ell_1$-error threshold $\lmd \in (0, 1]$, the goal of a High-Precision SSPPR query is to compute an estimation $\hpis$ of $\pis$ such that
$\abserror \leq \lmd$.
In general, the value of $\lmd$ is set to $\min\{\frac{1}{m}, 10^{-8}\}$.

\vspace{1mm}
\noindent
{\bf Approximate SSPPR (Approx-SSPPR).}
Given an relative error threshold $\eps > 0$ and a PPR value threshold $\mu \in (0, 1]$, an Approximate SSPPR query aims to compute
an estimation $\hpi(s, v)$ for each node $v$ with $\pi(s, v) \geq \mu$ such that $| \hpi(s, v) - \pi(s, v)| \leq \eps \cdot \pi(s,v)$
with high probability $1 - \frac{1}{n}$.
In the literature, $\mu$ is conventionally set to the average over all the PPR values with respect to $s$, i.e., $\frac{1}{n}$.

\begin{figure*}
	\centering
	\includegraphics[width = 0.7\linewidth]{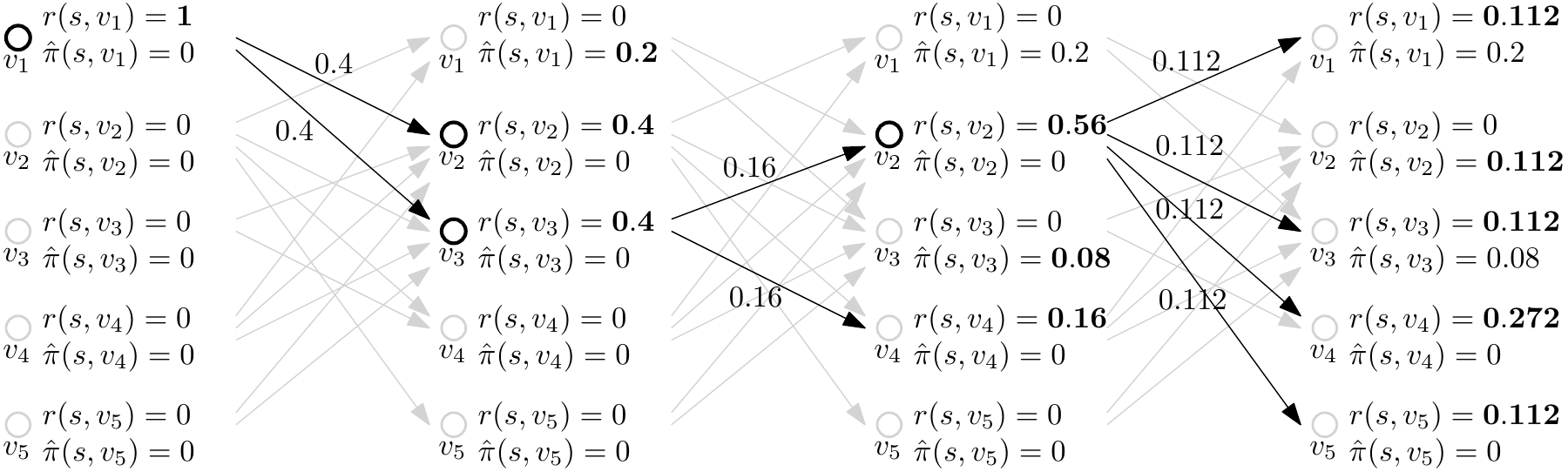}
	\vspace{-2mm}
	\caption{A running example of Forward Push on $G$ in Figure~\ref{fig:graph} with $s = v_1$, $\alpha =0.2$ and $\rmax = 0.099$. 
		The arrows are the edges of $G$ but presented in a bipartite sense.
		Active nodes in each iteration are highlighted in bold.
	}
	\label{fig:fwdpush}
	\vspace{-2mm}
\end{figure*}

\section{Preliminaries of HP-SSPPR}\label{sec:pre-high}

In this section, we introduce the details of the two most relevant existing approaches to this paper for answering Hight-Precision SSPPR queries: 
{\em Power Iteration} ($\powitr$) and {\em Forward Push} ($\fwdpush$). 
In the literature, 
they are often considered as two different types of methods, 
respectively as {\em global} and {\em local} approaches.
Thus, in the previous work, these two approaches are usually understood and explained from different perspectives.
%
%
Next, we explain both $\powitr$ and $\fwdpush$ from a {\em unified} perspective of alive random walks.
Our explanations would hint some ideas on the {\em equivalence connection} between these two approaches as discussed in Section~\ref{sec:analysis}.

\subsection{The Power Iteration Approach}\label{sec:powitr}


Define $\vga\itj \in \mathbb{R}^{1 \times n}$ as the vector, of which the $i$-th coordinate is the probability that 
an $\alpha$-random walk from $s$ is {\em alive} at $v_i$ at the $j$-th step.
Clearly, $\vga^{(0)} = \ves$: a random walk from $s$ can be {alive} only at $s$ at the initial state, i.e., at the $0$-th step. 
According to the definition of $\alpha$-random walks, it can be verified that 
\begin{equation}\label{eq:aliveitr} \small
	\vga\itjnext = (1 - \alpha) \cdot \vga\itj \cdot \mP
	= (1 - \alpha)^{j+1} \cdot \ves \cdot \mP^{j+1}\,.
\end{equation}
Therefore, the $i$-th coordinate of the vector
$\alpha \cdot \vga\itj$ 
is the probability that 
a random walk from $s$ stops at $v_i$ at {\em exactly} the $j$-th step.

Intuitively, by the definition of PPR, $\pi(s,t)$ can be computed as the sum of the probabilities that 
an $\alpha$-random walk from $s$ stops at $t$ with {\em exactly} $j$ steps, for all possible length $j = 0, 1, 2, \cdots$.
That is, 
\begin{equation}\label{eq:pis} \small
	\pis = \sum_{j = 0}^\infty \alpha \cdot \vga\itj\,.
\end{equation}
The basic idea of $\powitr$ is to iteratively maintain an underestimate $\hpis = \vpisjnext$ in the $(j+1)$-th iteration (for $j = 0, 1, 2, \cdots$) such that:
\begin{equation}\label{eq:pij} \small
	\vpisjnext = \sum_{k=0}^{j} \alpha \cdot \vga^{(k)} \,,
\end{equation}
where the $i$-th coordinate of $\vpisjnext$ is the probability that a random walk from $s$ stops at $v_i$ with {\em at most} $j$ steps.

\vspace{1mm}
\noindent
{\bf The $\ell_1$-Error Bound.}
The $\ell_1$-error at the end of the $(j+1)$-th iteration is given by:
\begin{align}\label{eq:powitr-error} \small
	\abserror &= \|\sum_{k = j + 1}^\infty \alpha \cdot \vga^{(k)}\|_1
		  = \sum_{k = j + 1}^\infty \alpha \cdot (1 - \alpha)^{k} \cdot \|\vec{e}_s \cdot \mP^{k}\|_1 \nonumber \\
		  &= \sum_{k = j + 1}^\infty \alpha \cdot (1 - \alpha)^{k} \cdot \|\vec{e}_s\|_1  = (1 - \alpha)^{j+1} \,.
\end{align}
Two observations follow immediately from Equation~\eqref{eq:powitr-error}. 
First, the $\ell_1$-error $(1-\alpha)^{j+1}$ is exactly equal to $\|\vga\itjnext\|_1$, 
that is, the total probability mass of a random walk alive at the $(j+1)$-th steps.
This is intuitive because this is exactly the total amount of the probability mass 
that is not yet converted to the PPR values.
Second, the $\ell_1$-error decreases by a 
factor of $(1 - \alpha)$ in each iteration; after at most $O(\log \frac{1}{\lmd})$ iterations, $\abserror \leq \lmd$.
Since it is easy to verify that the computational cost of each iteration is bounded by $O(m)$,
the overall running time of $\powitr$ is thus bounded by $O(m \cdot \log \frac{1}{\lmd})$.

\subsection{The Forward Push Approach}\label{sec:fwdpush}

$\fwdpush$ {\em conceptually} considers a random walk from $s$ and observes the state of this walk in terms of probability mass.
Given a specified parameter $\rmax \in [0, 1]$, 
the basic idea of $\fwdpush$ is to maintain, for each node $v \in V$, the following information:
\begin{itemize}[leftmargin = *]
	\item a {\em reserve} $\hpi(s,v)$: it is an underestimate of $\pi(s,v)$ and
	\item a {\em residue} $r(s,v)$: 
		it is the {\em unprocessed} probability mass of the random walk from $s$ {\em alive} at $v$ at the current state.
\end{itemize}
Initially, $\hpi(s,v) = 0$ for all $v \in V$ and $r(s,v) = 0$ for all $v \neq s$ while $r(s,s) = 1$: 
at the initial state,  
the unprocessed probability mass of the random walk from $s$ alive at $s$ is $1$. 

\vspace{1mm}
\noindent
{\bf Active Nodes.}
A node $v$ is {\em active} if it satisfies $r(s,v) > d_v \cdot \rmax$; otherwise, it is {\em inactive}. 

\vspace{1mm}
\noindent
{\bf The Push Operation.}
A crucial primitive in $\fwdpush$ is the {\em push operation}, which is to 
{\em process} a node's residue.
Specifically, a push operation on a node $v$ works as follows:
\begin{itemize}[leftmargin = *]
%
	\item First, $\alpha$ portion of $v$'s residue $r(s,v)$ is converted to $\hpi(s,v)$, i.e., $\hpi(s, v) \leftarrow \hpi(s,v) +  \alpha \cdot r(s,v)$.
		This represents the fact that
		with probability $\alpha$, the alive random walk at $v$ stops at $v$.
	\item Second, the rest $(1-\alpha)$ portion of $r(s,v)$ is {\em evenly} distributed to the residues of $v$'s out-neighbors. 
		That is, the residue of each out-neighbor of $v$ is increased by $\frac{(1-\alpha)\cdot r(s,v)}{d_v}$, 
		which is the probability that, conditioned on $r(s,v)$, the random walk at $v$ moves to this out-neighbor and is alive at this out-neighbor at the current state.
	\item Third, after the residue of $v$ is processed, $r(s,v) \leftarrow 0$, indicating that currently there is no unprocessed probability mass of the random walk from $s$ currently alive at $v$.
\end{itemize}

The process of $\fwdpush$ is to repeatedly pick an {\em arbitrary} {\em active} node,
and perform a push operation on it.
The algorithm terminates until there is no active node.
Algorithm~\ref{algo:fwdpush} shows the pseudo-code.

\vspace{-3mm}
	\begin{algorithm}[h] 
	\caption{Forward Push}
	\label{algo:fwdpush}
	\KwIn{$G$, $\alpha$, $s$, $\rmax$}
	\KwOut{an estimation $\hpis$ of $\pis$}
	$\hpi(s,v) \leftarrow 0$ and $r(s,v) \leftarrow 0$ for all $v \in V$;
	$r(s, s) \leftarrow 1$\;
	\While{there exists a node $v$ with $r(s, v) > d_v \cdot \rmax$}{
		pick an {\em arbitrary} such node $v$ with $r(s, v) > d_v \cdot \rmax$\;
		$\hpi(s,v) \leftarrow \hpi(s,v) + \alpha \cdot r(s, v)$\;
		\For{each $u \in N_{out}(v)$}{
			$r(s, u) \leftarrow r(s,u) +  \frac{(1 - \alpha) \cdot r(s, v)}{d_v}$\;
		}
		$r(s,v) \leftarrow 0$\;
 
 	}
	\Return $\hpi(s,v)$ for all $v\in V$ as a vector $\hpis$\;
\end{algorithm} 

 \vspace{-4mm}
\noindent
{\bf A Running Example.} 
Figure~\ref{fig:fwdpush} shows a running example.
At the beginning,
only $v_1$ is active; thus it is picked to perform a push operation,
in which $\hpi(s, v_1) = \alpha \cdot 1 = 0.2$ and 
the residues of $v_1$'s out-neighbors $v_2$ and $v_3$ are increased by $(1-\alpha) \cdot 1 / 2 = 0.4$, respectively.
After this push operation on $v_1$, both $v_2$ and $v_3$ are now active.
The algorithm picks one of them arbitrarily; in this example, $v_3$ is picked.
After the push operation on $v_3$, $\hpi(s,v_3) = 0.2 \cdot 0.4 = 0.08$ and each of its out-neighbors, 
i.e., $v_2$ and $v_4$, has residue increased by $0.8 \cdot 0.4 / 2 = 0.16$.
Next, $v_2$ becomes the only active node; 
after the push operation on $v_2$, no node is active and thus the algorithm terminates.

\vspace{1mm}
\noindent
{\bf The $\ell_1$-Error Bound.}
When $\fwdpush$ terminates, it holds that $r(s,v) \leq d_v \cdot \rmax$ for all $v\in V$. 
By definition, the residues 
are the probability mass of the alive random walk that are not yet converted to $\hpi(s,v)$'s.
Hence,
\begin{equation} \small
\abserror  = \sum_{v \in V} r(s,v) \leq \sum_{v \in V} d_v \cdot \rmax = m \cdot \rmax \,.
\end{equation}
In order to achieve an $\ell_1$-error at most $\lmd$,
one needs to set $\rmax \leq \frac{\lmd}{m}$. 

\vspace{1mm}
\noindent
{\bf The Open Question.}
The only known time complexity of $\fwdpush$ is 
$O(\frac{1}{\rmax})$~\cite{AndersenCL06}.
Unfortunately, this bound implies that the overall running time becomes $O(\frac{m}{\lmd})$ with $\rmax = \lmd / m$, 
which is worse than the $O(m \cdot \log \frac{1}{\lmd})$-bound of $\powitr$. 
Despite of its practicality in certain applications, it still remains an open question:
Does $\fwdpush$ admit a running time bound with a weaker dependency on $\lmd$, just like what $\powitr$ does?

\section{A Tighter Analysis of Forward Push} \label{sec:analysis}

\begin{figure}
	\centering
	\includegraphics[width = \linewidth]{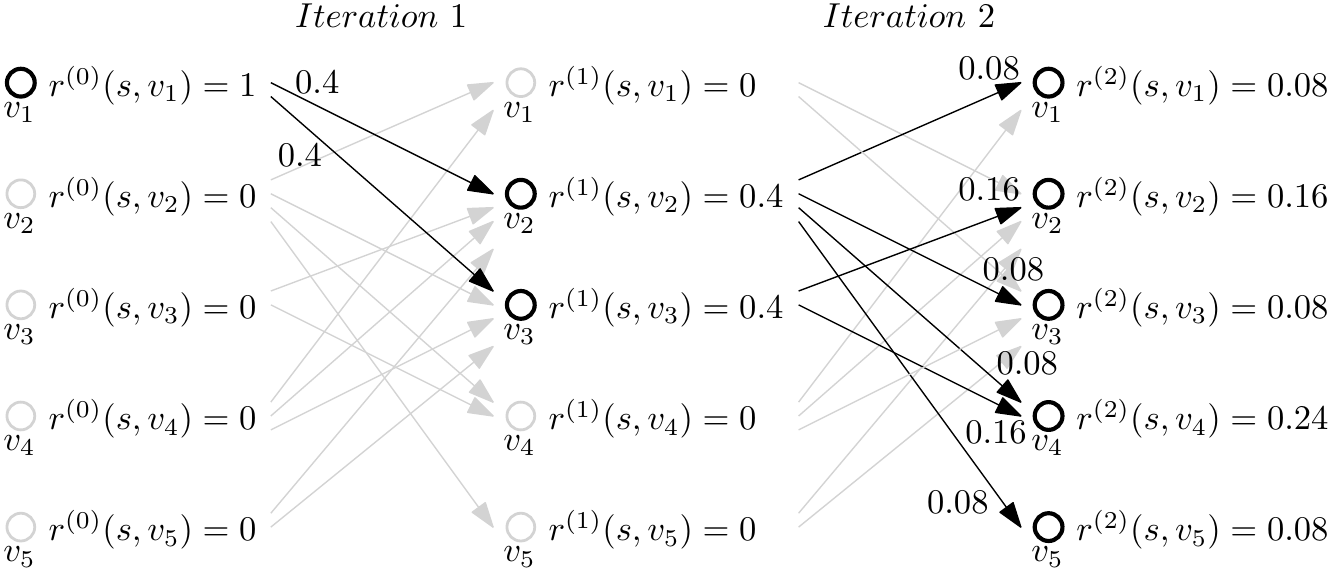}
	\vspace{-4mm}
	\caption{A running example of $\simfwdpush$ on $G$ with $s = v_1$, $\alpha =0.2$ and $\rmax =0$.
		Active nodes are highlighted in bold.
}
	\label{fig:simfwdpush}
	\vspace{-4mm}
\end{figure}

In this section, we give a {\em positive} answer to the open question regarding to the running time of $\fwdpush$.
More specifically, we prove that under a {\em proper strategy} to pick active nodes to perform push operations,
the overall running time of $\fwdpush$ can be bounded by $O(m \cdot \log \frac{1}{\lmd})$ with $\rmax = \lmd / m$. 
This finding stems from an observation on a subtle equivalence connection between $\powitr$ and $\fwdpush$ as we discuss next.

\subsection{Equivalence  Connection to Power Iteration}
Recall that in each iteration, 
$\powitr$ essentially computes $\vga\itjnext = (1 - \alpha) \cdot \vga\itj \cdot \mP$.
Thus, the alive random walks considered in the same iteration are all with the same lengths.
Such a well structured process makes the error bound analysis of $\powitr$ relatively clear. 
In contrast, the process of $\fwdpush$ is a lot less structured. 
Due to the fact that $\fwdpush$ allows to perform push operations on arbitrary active nodes, 
the residues of the nodes actually mix up the probability mass of the alive random walk from $s$ at the states of different lengths. 
Despite of the similar rationale of moving alive random walks one-step forward in both of the algorithms, 
the arbitrary push operation ordering of $\fwdpush$ 
makes the analysis of the error bound during the algorithm very challenging. 
To overcome this challenge, we {\em conceptually} restrict $\fwdpush$ to perform push operations in iterations.

\vspace{1mm}
\noindent
{\bf A Special $\fwdpush$ Variant.}
As the first step, we reveal the subtle equivalence connection between $\powitr$ and $\fwdpush$.
In the following, we show a special variant of $\fwdpush$ which can perform exactly the same computation for $\vga\itjnext$ and $\pis\itjnext$ in $\powitr$. 
This variant is called {\em Simultaneous Forward Push} ($\simfwdpush$), and has the following modifications on Algorithm~\ref{algo:fwdpush}:
\begin{itemize}[leftmargin = *]
	\item All the nodes with {\em non-zero} residues are active, i.e., $\rmax = 0$.
	\item The $\simfwdpush$ algorithm works in iterations:
		\begin{itemize}
			\item At the beginning of the $(j+1)$-th iteration (for integer $j \geq  0$), the residue of node $v$ is denoted by $r\itj(s,v)$.
			\item In each iteration, the algorithm performs a push operation on every active node {\em simultaneously} based on $r\itj(s,v)$.
		\end{itemize}
	\item At the end of the $(j+1)$-th iteration, the algorithm terminates if the $\ell_1$-error $r_{sum} = \sum_{v\in V} r\itjnext(s, v) \leq \lmd$.
\end{itemize}

\noindent
{\bf A Running Example.} 
Figure~\ref{fig:simfwdpush} shows a running example of $\simfwdpush$.
At the beginning of the first iteration, only $v_1$ has non-zero residue, i.e.,  $r^{(0)}(s,v_1) = 1$, and thus, it is the only active node in this iteration.
After the push operation on $v_1$ 
$r^{(1)}(s,v_2) = r^{(1)}(s,v_3) = \frac{(1-\alpha) \cdot r^{(0)}(s,v_1)}{d_{v_1}}  = 0.4$.
Hence, $v_2$ and $v_3$ are the two active nodes in the second iteration. 
The algorithm then performs push operations {\em simultaneously} on both $v_2$ and $v_3$, 
where the operation on $v_2$ pushes $\frac{(1-\alpha) \cdot r^{(1)}(s,v_2)}{d_{v_2}} = 0.08$ probability mass to each $v_2$'s out-neighbor,
while the operation on $v_3$ pushes $\frac{(1-\alpha) \cdot r^{(1)}(s,v_3)}{d_{v_3}} =  0.16$ to its out-neighbors accordingly.
The resulted residue of each node is shown in the figure. 

\vspace{1mm}
\noindent
{\bf The Connection.}
Define $\vrs\itj \in \mathbb{R}^{1 \times n}$ as the {\em residue vector} of all the nodes, whose the $i$-th coordinate is $r\itj(s,v_i)$.
%
The crucial observation on $\simfwdpush$ is that performing simultaneous push operations on all the active nodes in the $(j + 1)$-th iteration is equivalent to the following computation:
{\small
\begin{align}\label{eq:simfwdpush}
	\vrs\itjnext & = \sum_{\text{active}\, v} \frac{(1 - \alpha) \cdot r\itj(s,v)}{d_{v}} \cdot \vec{A}_{v}
	= \sum_{v \in V} \frac{(1 - \alpha) \cdot r\itj(s,v)}{d_{v}} \cdot \vec{A}_{v} \nonumber \\
	&= \sum_{v \in V} (1-\alpha) \cdot r\itj(s, v) \cdot \vec{P}_{v} 
	= \sum_{v \in V} \left((1-\alpha) \cdot r\itj(s, v) \cdot \vec{e}_{v} \cdot \mP  \right) \nonumber \\
		     &= (1 - \alpha) \cdot \left(\sum_{v \in V} r\itj(s, v) \cdot \vec{e}_{v}\right) \cdot \mP 
		     	     = (1 - \alpha) \cdot \vrs\itj \cdot \mP \,. 
\end{align}
}%
We have the following lemmas.
\begin{lemma}\label{lmm:equivalent}
	The residue vector $\vrs\itjnext$ and underestmate PPR vector $\hpis\itjnext$ obtained by $\simfwdpush$ in the $(j+1)$-th iteration are exactly the same as $\vga\itjnext$ and $\vpisjnext$ computed in the $(j+1)$-th iteration 
in $\powitr$, for all integer $j \geq 0$.
\end{lemma}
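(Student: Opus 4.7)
The plan is to prove the lemma by induction on the iteration index $j$, leveraging the key algebraic identity already derived in Equation~\eqref{eq:simfwdpush} for $\simfwdpush$ and the recurrence in Equation~\eqref{eq:aliveitr} for $\powitr$. These two identities have exactly the same form, namely $\vec{x}^{(j+1)} = (1-\alpha) \cdot \vec{x}^{(j)} \cdot \mP$, so once the two sequences agree at initialization they must agree forever. The careful bookkeeping lies in handling the two quantities (residues versus $\vga$, and reserves versus $\vpi$) in tandem, so I would state and prove the combined invariant $\vrs\itj = \vga\itj$ \emph{and} $\epis\itj = \vpisj$ by simultaneous induction.

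For the base case ($j = 0$), both algorithms start from the same configuration: $\vrs^{(0)} = \ves = \vga^{(0)}$ since initially only $s$ has residue $1$, and $\epis^{(0)} = \vec{0} = \vpis^{(0)}$ (under the natural convention that the empty sum in Equation~\eqref{eq:pij} is zero). For the inductive step, assume $\vrs\itj = \vga\itj$ and $\epis\itj = \vpisj$. Applying Equation~\eqref{eq:simfwdpush} together with the inductive hypothesis gives
\begin{equation*}
\vrs\itjnext = (1-\alpha)\cdot \vrs\itj \cdot \mP = (1-\alpha)\cdot \vga\itj \cdot \mP = \vga\itjnext,
\end{equation*}
where the last equality is Equation~\eqref{eq:aliveitr}. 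For the reserves, I would observe that in the $(j+1)$-th iteration each active node $v$ contributes $\alpha \cdot r\itj(s,v)$ to $\epis(s,v)$, and inactive nodes (those with zero residue) contribute nothing, so in vector form $\epis\itjnext = \epis\itj + \alpha \cdot \vrs\itj$. Combining with the inductive hypotheses yields
\begin{equation*}
\epis\itjnext = \vpisj + \alpha \cdot \vga\itj = \sum_{k=0}^{j-1}\alpha\cdot \vga^{(k)} + \alpha\cdot \vga\itj = \sum_{k=0}^{j}\alpha\cdot \vga^{(k)} = \vpisjnext,
\end{equation*}
which closes the induction.

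The main subtlety I anticipate is not the algebra itself, which follows directly from Equations~\eqref{eq:aliveitr} and~\eqref{eq:simfwdpush}, but rather justifying cleanly that the ``simultaneous'' semantics of $\simfwdpush$ really lets us collapse the per-node push operations into the single matrix-vector product in Equation~\eqref{eq:simfwdpush}. In particular, I must argue that because all pushes in iteration $j+1$ read from $\vrs\itj$ (not from a partially updated vector), the restriction of the sum in Equation~\eqref{eq:simfwdpush} to active nodes is the same as summing over all of $V$, since inactive nodes have zero residue and thus contribute zero. Once this point is made explicit, both halves of the invariant propagate identically, completing the proof and establishing the precise equivalence between the global $\powitr$ and the local $\simfwdpush$.
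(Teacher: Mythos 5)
Your proof is correct and follows essentially the same route as the paper's: a simultaneous induction on $j$ with base case $\vrs^{(0)} = \vga^{(0)} = \ves$, $\epis^{(0)} = \vpis^{(0)} = \vec{0}$, the residue step closed via Equation~\eqref{eq:simfwdpush}, and the reserve step via $\epis\itjnext = \epis\itj + \alpha\cdot\vrs\itj$. The extra remark about why the sum over active nodes equals the sum over all of $V$ (inactive nodes have zero residue when $\rmax = 0$) is exactly the justification the paper builds into the derivation of Equation~\eqref{eq:simfwdpush} itself.
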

\begin{proof}
	We prove this lemma with a mathematical induction argument. 
	Clearly, the base case $\vrs^{(0)} = \vga^{(0)} = \ve_s$  and $\hpis^{(0)} = \pis^{(0)} = \vec{0}$ holds.
	For the inductive case, assuming that $\vrs\itj = \vga\itj$  and $\hpis\itj = \pis\itj$ holds,
	by Equation~\eqref{eq:simfwdpush}, we have:
	{\small
	\begin{align*}
	\vrs\itjnext =  (1 - \alpha) \cdot \vrs\itj \cdot \mP =  (1 - \alpha) \cdot \vga\itj \cdot \mP = \vga\itjnext\,;
	\end{align*}
	}%
	and according to the push operations, 
	{\small
	\begin{align*}
	\hpis\itjnext = \hpis\itj + \alpha \cdot \vrs\itj = \sum_{k = 0}^{j} \alpha \cdot \vrs^{(k)} = \sum_{k = 0}^{j} \alpha \cdot \vga^{(k)} = \pis\itjnext\,.
	\end{align*}
	}%
	Therefore, the inductive case holds, and the lemma follows.
\end{proof}

\begin{lemma}\label{lmm:simtime}
	The overall running time of $\simfwdpush$ is bounded by $O(m \cdot \log \frac{1}{\lmd})$.
\end{lemma}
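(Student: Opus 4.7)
The plan is to combine the two ingredients already established in this section: the equivalence between $\simfwdpush$ and $\powitr$ from Lemma~\ref{lmm:equivalent}, which controls the number of iterations required, together with a simple per-iteration work bound of $O(m)$ that follows from the structure of a push operation. Multiplying the two gives the claimed $O(m \cdot \log \frac{1}{\lmd})$ bound.

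First, I would bound the number of iterations. By Lemma~\ref{lmm:equivalent}, the residue vector $\vrs\itjnext$ produced by $\simfwdpush$ coincides with $\vga\itjnext$ from $\powitr$. Since $\|\vga\itjnext\|_1 = (1-\alpha)^{j+1} \cdot \|\ves \cdot \mP^{j+1}\|_1 = (1-\alpha)^{j+1}$ (as already used in the derivation of Equation~\eqref{eq:powitr-error}), the termination quantity at the end of iteration $j+1$ satisfies
\begin{equation*}
r_{sum} \;=\; \sum_{v\in V} r\itjnext(s,v) \;=\; \|\vrs\itjnext\|_1 \;=\; (1-\alpha)^{j+1}.
\end{equation*}
Hence, after at most $J = \lceil \log_{1/(1-\alpha)} \tfrac{1}{\lmd} \rceil = O(\log \tfrac{1}{\lmd})$ iterations, the stopping condition $r_{sum} \leq \lmd$ is triggered.

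Second, I would bound the cost of a single iteration. Within one iteration, a push on an active node $v$ reads $r\itj(s,v)$, updates $\hpi(s,v)$, and distributes the fraction $(1-\alpha) \cdot r\itj(s,v)/d_v$ to each of $v$'s $d_v$ out-neighbors; this takes $O(d_v)$ time. Because $\simfwdpush$ pushes from the vector $\vrs\itj$ simultaneously, each node is pushed at most once per iteration, so the total per-iteration work is
\begin{equation*}
\sum_{v \,:\, r\itj(s,v) > 0} O(d_v) \;\leq\; \sum_{v \in V} O(d_v) \;=\; O(m).
\end{equation*}
Identifying the set of active nodes and maintaining the next iteration's residue vector can be folded into the same $O(m)$ scan, e.g.\ by keeping an adjacency-list traversal alongside an array of residues.

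Combining the two bounds yields total running time $O(m) \cdot O(\log \tfrac{1}{\lmd}) = O(m \cdot \log \tfrac{1}{\lmd})$, as claimed. There is no genuine obstacle here; the only delicate point is observing that the simultaneous-push discipline prevents a node from being pushed twice within an iteration, which is exactly what keeps each iteration's cost at $O(m)$ rather than potentially blowing up as in the unrestricted $\fwdpush$ schedule. This is precisely why restricting to the $\simfwdpush$ schedule was useful: it lets us inherit both the per-iteration cost bound of $\powitr$ and its geometric error decay, unified through Lemma~\ref{lmm:equivalent}.
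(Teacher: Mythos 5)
Your proof is correct and follows essentially the same route as the paper's: bound each iteration's push cost by the total degree $O(m)$, and invoke Lemma~\ref{lmm:equivalent} together with the $\powitr$ error analysis to conclude that $r_{sum} = (1-\alpha)^{j+1} \leq \lmd$ after $O(\log\frac{1}{\lmd})$ iterations. Your write-up simply spells out the geometric decay and the once-per-iteration push discipline more explicitly than the paper does.
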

\begin{proof}
	The cost of each push operation on a node $v$ is $O(d_v)$.
	Thus, in each iteration, the total cost is bounded by the total degree $O(\sum_{v\in V} d_v) = O(m)$.
	According to the analysis of $\powitr$ and Lemma~\ref{lmm:equivalent}, after at most $O(\log \frac{1}{\lmd})$ iterations,
	the $\ell_1$-error $r_{sum}  \leq \lmd$. 
\end{proof}

Putting Lemmas~\ref{lmm:equivalent} and ~\ref{lmm:simtime} together,
we conclude that:
	$\simfwdpush$ is equivalent to $\powitr$.

\subsection{A Tighter Analysis}
Unfortunately, the equivalence between $\simfwdpush$ and $\powitr$ 
is not sufficient to answer the open question regarding to the running time of $\fwdpush$. 
The reasons are as follows:
\begin{itemize}[leftmargin = *]
	\item First, the push operations in $\simfwdpush$ are performed simultaneously in each iteration, 
		while in $\fwdpush$, they are performed in an {\em asynchronous} way.
	\item Second, the crucial parameter $\rmax$ does not make much effect in $\simfwdpush$, but it determines which node is eligible for a push operation in $\fwdpush$. 
	\item Third, the stop condition in $\simfwdpush$ that requires $r_{sum} \leq \lmd$ is not a sufficient condition to achieve
		$r(s,v) \leq d_v \cdot \rmax$ for all $v\in V$, where the latter is the original stop condition in $\fwdpush$.
\end{itemize}
In this subsection, we remove all these restrictions.
The only requirement in our analysis for $\fwdpush$ is that the algorithm is performed in iterations (just as what $\simfwdpush$ does and as defined below). 
We note that considering the algorithm in iterations makes the entire process more structured 
and thus allows us to bound the decrease rate of the $\ell_1$-error.
Nonetheless, as mentioned earlier, 
such a requirement is not a strong restriction; 
and it indeed can be implemented as simple as with a {\em First-In-First-Out} queue to organize the active nodes during the algorithm. 
Interestingly enough, this is actually a common implementation of $\fwdpush$ in practice -- 
people have {\em unconsciously} implemented $\fwdpush$ in an efficient way! 
From  our analysis, it explains why $\fwdpush$ is often found to have a weaker dependency on $\lmd$ than as what its previous running time complexity suggested in applications.

\begin{algorithm}[t]	
	\caption{First-In-First-Out Forward Push}
	\label{algo:itrfwdpush}
	\KwIn{$G$, $\alpha$, $s$, $\rmax$}
	\KwOut{an estimation $\hpis$ of $\pis$ and the resulted residues $\vrs$}
	$\hpi(s,v) \leftarrow 0$ and $r(s,v) \leftarrow 0$ for all $v \in V$;
	$r(s, s) \leftarrow 1$\;
	initialize a {\em first-in-first-out} queue $Q \leftarrow \emptyset$\;
	$Q.append(s)$; // append $s$ at the end of $Q$\;
	\While{$Q \neq \emptyset$}{
		$v \leftarrow Q.pop()$; // pop and remove the front node from $Q$\;
		$\hpi(s,v) \leftarrow \hpi(s,v) + \alpha \cdot r(s, v)$\;
		\For{each $u \in N_{out}(v)$}{
			$r(s, u) \leftarrow r(s,u) +  \frac{(1 - \alpha) \cdot r(s, v)}{d_v}$\;
			\If{$r(s,u) > d_u \cdot \rmax$ and $u \not\in Q$}{ 
				// $u$ is active and not in $Q$\;
				$Q.append(u)$; // append $u$ at the end of $Q$\;
			}
		}
		$r(s,v) \leftarrow 0$\; 
 	}
	\Return $\hpi(s,v)$ for all $v\in V$ as a vector $\hpis$, and $r(s,v)$ for all $v\in V$ as the resulted residue vector $\vrs$\;
\end{algorithm}

In the following, we analyse the running time of an implementation of $\fwdpush$, called {\em First-In-First-Out Forward Push} ($\itrfwdpush$), whose pseudo-code is shown in Algorithm~\ref{algo:itrfwdpush}.
We answer the open question regarding to the running time of $\fwdpush$ by proving the following theorem.
\begin{theorem}\label{thm:itrfwdpush}
	Given $0 < \rmax < \frac{1}{2\cdot m}$ and $\lmd = m \cdot \rmax$, the overall running time of $\itrfwdpush$ is bounded by $O(m \cdot \log \frac{1}{\lmd})$.
\end{theorem}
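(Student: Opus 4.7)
The plan is to analyse $\itrfwdpush$ by grouping its push operations into rounds dictated by the FIFO structure, thereby mirroring the iteration-by-iteration analysis of $\simfwdpush$. Specifically, I would set round $0$ to be $Q^{(0)} = \{s\}$ and, inductively, define $Q^{(j+1)}$ to consist of exactly those nodes appended to $Q$ while the nodes of $Q^{(j)}$ were being processed. Because the queue is FIFO, every node in $Q^{(j)}$ is popped before any node in $Q^{(j+1)}$, so the rounds partition the full push sequence. A quick observation is that within a single round each node is popped at most once (if it is re-enqueued during the round, it lands in a later round), so the work of round $j$ is $O(\sum_{v \in Q^{(j)}} d_v) = O(m)$.

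The central step is to track $R^{(j)} := \sum_{v\in V} r(s,v)$ at the start of round $j$ and show that it contracts geometrically. The key structural fact is that any node $v \notin Q^{(j)}$ must satisfy $r(s,v) \leq d_v \cdot \rmax$ (otherwise it would already be in $Q$), so $\sum_{v \notin Q^{(j)}} r(s,v) \leq m \cdot \rmax = \lmd$ and therefore $\sum_{v \in Q^{(j)}} r(s,v) \geq R^{(j)} - \lmd$. Because the asynchronous in-round pushes only increase the residues of nodes yet to be processed, the residue of each $v \in Q^{(j)}$ at its actual push time is no smaller than its value at the start of round $j$. Consequently, the total probability mass converted into $\hpi$ during round $j$ is at least $\alpha(R^{(j)} - \lmd)$, giving the recurrence
\begin{equation*}
R^{(j+1)} \leq (1-\alpha) R^{(j)} + \alpha \lmd,
\end{equation*}
equivalently $R^{(j+1)} - \lmd \leq (1-\alpha)(R^{(j)} - \lmd)$.

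From this geometric contraction, together with $R^{(0)} = 1$ and $\lmd < 1/2$ (which follows from the hypothesis $\rmax < \tfrac{1}{2m}$), after $j^* = O(\log \tfrac{1}{\lmd})$ rounds we have $R^{(j^*)} \leq 2\lmd$. The main obstacle here is that the recurrence only drives $R$ down to the fixed point $\lmd$, not to zero, so the same argument cannot bound the number of rounds needed to empty $Q$. I resolve this by switching the accounting for the tail of the algorithm: once $R \leq 2\lmd$, I invoke the classical $O(1/\rmax)$ argument restricted to the remaining operations — each subsequent push on $v$ decreases $R$ by at least $\alpha d_v \rmax$, so the total out-degree summed over all further pushes is at most $R/(\alpha \rmax) \leq 2\lmd/(\alpha \rmax) = 2m/\alpha = O(m)$.

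Putting the two phases together gives $O(m)$ work per round for the first $O(\log \tfrac{1}{\lmd})$ rounds plus $O(m)$ work in the tail, which totals $O(m \log \tfrac{1}{\lmd})$ and proves the theorem. The two most delicate technical points to write out carefully will be (i) rigorously justifying the residue-monotonicity claim used to lower bound the per-round mass conversion despite asynchronous intra-round pushes, and (ii) verifying that the round decomposition is well defined even when a node may bounce in and out of $Q$ across multiple rounds.
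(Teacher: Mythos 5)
Your proposal is correct, and it shares the paper's overall skeleton: the same FIFO-induced round decomposition $S^{(0)}, S^{(1)}, \dots$ (identical to your $Q^{(j)}$), a geometric-contraction phase, and then the classical $O(1/\rmax)$ accounting for the tail once $\rsum \leq O(\lmd)$ (this is exactly the paper's Lemma~\ref{lmm:itrfwdpush-stoptime}, which it invokes with $\lmd$ where you use $2\lmd$). Where you genuinely diverge is in the contraction lemma. The paper's Observation~\ref{obs1} is a mediant-type inequality: since every active node has $r(s,v)/d_v \geq \rmax$ and every inactive node has $r(s,v)/d_v \leq \rmax$, the active set captures at least a $\frac{1}{m}\sum_{v\in S\itj} d_v$ fraction of $\rsum\itj$, giving the work-proportional contraction $\rsum\itjnext \leq (1 - \frac{\alpha}{m}\sum_{v\in S\itj} d_v)\rsum\itj$; the paper then bounds the \emph{total pushed degree} $T$ directly via $\lmd \geq e^{-\alpha T/m}$ rather than counting rounds. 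You instead observe that the inactive nodes carry at most $m\cdot\rmax = \lmd$ total residue, so each round converts at least $\alpha(R^{(j)}-\lmd)$, yielding the fixed-point recurrence $R^{(j+1)}-\lmd \leq (1-\alpha)(R^{(j)}-\lmd)$, and you pay $O(m)$ per round for $O(\log\frac{1}{\lmd})$ rounds. Both give $O(\frac{m}{\alpha}\log\frac{1}{\lmd})$; your argument is more elementary (no mediant inequality, no $1-x\leq e^{-x}$ step) and cleanly isolates why the process stalls at the fixed point $\lmd$, while the paper's work-sensitive amortization charges only the degree actually pushed, which is slightly sharper when rounds touch few edges. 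Your two flagged delicate points are both sound as stated: residues of not-yet-popped nodes are never decreased within a round (only a pop zeroes a residue), and the ``$u\notin Q$'' check guarantees each $Q^{(j)}$ is a duplicate-free subset of $V$, with any re-activated node landing in $Q^{(j+1)}$; the paper relies on the same two facts, the first implicitly in the inequality of its Equation~\eqref{eq:rsum-error}.
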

It should be noted that when $\rmax \geq \frac{1}{2 \cdot m}$, the bound $O(\frac{1}{\rmax}) = O(m)$ is already good enough. 
Furthermore, as aforementioned, the goal is to obtain high-precision results; the $\lmd$ value of interest is often even smaller than $10^{-8}$ 
and thus, $\rmax$ is often far smaller than $\frac{1}{2 \cdot m}$ in practice.

\vspace{1mm}
\noindent
{\bf The Iterations.} For the ease of analysis, we first define the iterations of $\itrfwdpush$ based on Algorithm~\ref{algo:itrfwdpush}.
In particular, we define $S\itj$ as the set of all the active nodes at the beginning of the $(j+1)$-th iteration, where $j = 0, 1, 2, \cdots$.
Specifically, we define $S\itj$ in an inductive way:
\begin{itemize}[leftmargin = *]
	\item Initially, $S^{(0)} = \{s\}$: the source node is the only active node at the beginning of the first iteration.
	\item $S\itjnext$ is the set of all the nodes appended to $Q$ at Line $11$ in Algorithm~\ref{algo:itrfwdpush} when processing the nodes in $S\itj$.
\end{itemize}
Furthermore, in the $(j+1)$-th iteration, the algorithm performs a push operation for every active node in $S^{(j)}$.

Under this definition, the iterations are exactly the same as those we considered in $\simfwdpush$, except that the push operations are now performed in an asynchronous way.
Consider the example in Figure~\ref{fig:simfwdpush} and assume $\rmax$ is sufficiently small, e.g., $0.001$; 
$S^{(0)}$ contains $v_1$ only.
After the push operation on $v_1$, only $v_2$ and $v_3$ are appended to $Q$; thus, $S^{(1)} = \{v_2, v_3\}$. 
In the second iteration, during the push operations on $v_2$ and $v_3$,
all the five nodes are appended to $Q$. 
Hence, $S^{(2)} = \{v_1, v_2, v_3, v_4, v_5\}$.

\vspace{1mm}
\noindent
{\bf An Overview of the  Analysis.} 
Let $\rsum\itj = \|\vrs\itj\|_1$, the total residues of all the nodes at the beginning of the $(j+1)$-th iteration. 
Initially, $\rsum^{(0)} = 1$.
According to the analysis for $\fwdpush$, 
we know that $\rsum\itjnext$ is exactly the $\ell_1$-error after the $(j+1)$-th iteration.
When the iteration number is not important, we use $\rsum$ to denote the $\ell_1$-error at the current state.

Our analysis on the overall running time of $\itrfwdpush$ consists of two main steps.
Firstly, we show that the following lemma:
\begin{lemma}\label{lmm:itrfwdpush-error}
	In $O(m \cdot \log \frac{1}{\lmd} + m)$ time, $\itrfwdpush$ can make the $\ell_1$-error $\rsum \leq \lmd$. 
\end{lemma}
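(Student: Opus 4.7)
The plan is to analyze $\itrfwdpush$ iteration-by-iteration using the round partition $\{S^{(j)}\}$ already defined, and to split the argument into two parts: a geometric-contraction phase that drives $\rsum$ down to $O(\lmd)$ in $O(\log \frac{1}{\lmd})$ iterations, followed by a short amortized phase that accounts for the extra ``$+m$'' in the bound.

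First, I would establish a one-step recurrence on $\rsum\itj = \|\vrs\itj\|_1$. The key structural observation is that every node $v \notin S\itj$ is inactive at the beginning of iteration $j+1$, so $r\itj(s,v) \leq d_v \cdot \rmax$; summing over all such nodes yields a total inactive residue of at most $\sum_{v \in V} d_v \cdot \rmax = m \cdot \rmax = \lmd$, and hence the active residue at the start of iteration $j+1$ is at least $\rsum\itj - \lmd$. When the algorithm later pops a node $v \in S\itj$, its current residue is at least $r\itj(s,v)$, because any earlier asynchronous pushes in the same iteration can only have increased it. Each such push converts an $\alpha$ fraction of $v$'s current residue into $\hpi(s,v)$, so summing over $v \in S\itj$ gives a total residue-to-reserve conversion in iteration $j+1$ of at least $\alpha \cdot (\rsum\itj - \lmd)$, which yields
\begin{equation*}
\rsum\itjnext \;\leq\; (1 - \alpha) \cdot \rsum\itj + \alpha \cdot \lmd \,.
\end{equation*}
Setting $T\itj = \rsum\itj - \lmd$ gives $T\itjnext \leq (1 - \alpha) \cdot T\itj$, and since $T^{(0)} = 1 - \lmd \leq 1$, unrolling yields $\rsum\itj \leq \lmd + (1 - \alpha)^j$. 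Choosing $j^{*} = O(\log \frac{1}{\lmd})$ (using that $\alpha$ is a positive constant) makes $(1-\alpha)^{j^{*}} \leq \lmd$, so $\rsum^{(j^{*})} \leq 2\lmd$.

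Second, I would bound the total running time. Since each node of $S\itj$ is popped at most once in iteration $j+1$ and its push costs $O(d_v)$, the cost of iteration $j+1$ is $\sum_{v \in S\itj} d_v \leq m$. Therefore the first $j^{*}$ iterations take $O(m \cdot \log \frac{1}{\lmd})$ time in total. To bridge from $\rsum \leq 2\lmd$ down to $\rsum \leq \lmd$, I would switch from iteration-based accounting to a standard push-based amortized argument: every remaining push on an active node $v$ costs $O(d_v)$ and decreases $\rsum$ by $\alpha \cdot r(s,v) \geq \alpha \cdot d_v \cdot \rmax$, so to further reduce $\rsum$ by at most $\lmd$, the total degree-work is bounded by $\lmd / (\alpha \cdot \rmax) = m / \alpha = O(m)$. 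Combining the two phases yields the claimed $O(m \cdot \log \frac{1}{\lmd} + m)$ bound.

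The main obstacle I anticipate is handling the asynchrony inside a single iteration: naively one might fear that the contribution of a later-processed node in $S\itj$ is ``inflated'' by in-iteration push-ins from earlier nodes and thus double-counts toward the decrease in $\rsum$. I would resolve this by the monotonicity observation above---that $v$'s residue at push time is at least $r\itj(s,v)$---so my lower bound on the iteration-level conversion uses only the start-of-iteration snapshot and avoids any double counting. A second minor subtlety is to verify that the FIFO discipline together with the ``$u \notin Q$'' guard in Line~11 exactly realizes the round structure given by $\{S\itj\}$; I would justify this by a short induction showing that the moment all of $S\itj$ has been popped, $Q$ contains precisely the nodes that were appended during iteration $j+1$, namely $S\itjnext$.
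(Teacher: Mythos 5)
Your proof is correct, but it takes a genuinely different route from the paper's. The paper establishes a \emph{work-sensitive} multiplicative contraction: using the mediant-type inequality $\frac{\sum_{v\in S\itj} r\itj(s,v)}{\sum_{v\in S\itj} d_v}\geq \frac{\rsum\itj}{m}$ (its Observation~1), it shows $\rsum\itjnext \leq (1-\frac{\alpha}{m}\sum_{v\in S\itj} d_v)\cdot \rsum\itj$, telescopes this into $\rsum\itjnext\leq \exp(-\frac{\alpha}{m}T\itjnext)$ where $T\itjnext$ is the cumulative degree-work, and reads off $T\itjnext\leq \frac{m}{\alpha}\ln\frac{1}{\lmd}+m$ directly---it never needs to count iterations or charge each one $O(m)$. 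You instead derive the additive recurrence $\rsum\itjnext\leq(1-\alpha)\rsum\itj+\alpha\lmd$ from the simpler observation that the inactive residue totals at most $m\cdot\rmax=\lmd$, get $O(\log\frac{1}{\lmd})$ iterations each costing at most $m$, and then close the residual gap from $2\lmd$ to $\lmd$ with the push-based amortization (which is essentially the paper's Lemma~5 argument reused). Your version is more elementary (no ratio comparison between active and inactive sets) and your handling of asynchrony via the start-of-iteration snapshot is exactly the right fix; the paper's version is tighter in spirit because an iteration that does little work is charged little, whereas you charge every iteration $O(m)$ regardless---asymptotically this makes no difference here, but the paper's potential-style accounting is the one that generalizes if the per-iteration work is far below $m$. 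Both arguments are valid and yield the stated $O(m\cdot\log\frac{1}{\lmd}+m)$ bound.
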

As aforementioned, $\rsum \leq \lmd$ is not sufficient to guarantee that $r(s,v) \leq d_v \cdot \rmax$ holds for all $v\in V$.
Thus, the $\itrfwdpush$ algorithm may not stop and keep running until there is no more active node. 
To bound the running time of this part, in the second step, we prove:
\begin{lemma}\label{lmm:itrfwdpush-stoptime}
	Starting from the state of $\rsum \leq \lmd$, $\itrfwdpush$ stops in $O(m)$ time.
\end{lemma}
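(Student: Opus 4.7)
The plan is to use the classical amortized argument that bounds the running time of $\fwdpush$ by $O(1/\rmax)$, but to apply it only to the residue mass that remains once we reach the state $\rsum \leq \lmd$. The key insight is that, although $\rsum \leq \lmd$ is not a stopping condition, it \emph{does} upper-bound the total residue mass that $\itrfwdpush$ still has to process from this point onward. Since every push operation on an active node strictly decreases $\rsum$ by a substantial amount proportional to $d_v \cdot \rmax$, the total work is controlled by dividing the remaining mass budget by this per-operation decrease.

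First, I would observe that from the state $\rsum \leq \lmd = m \cdot \rmax$ and onward, no external mass is ever injected into the residues; $\rsum$ can only decrease (by $\alpha \cdot r(s,v)$ per push operation on $v$) or stay the same. Next, I would use the fact that at the moment a push operation is performed on $v$, the node is active, i.e., $r(s,v) > d_v \cdot \rmax$. Therefore each such push decreases $\rsum$ by at least $\alpha \cdot d_v \cdot \rmax$. At the same time, this push costs $O(d_v)$ time in Algorithm~\ref{algo:itrfwdpush}.

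Let $\mathcal{S}$ be the multiset of nodes on which push operations are performed after the state $\rsum \leq \lmd$ is reached. Summing the guaranteed decrease over all such operations yields
\begin{equation*}
    \alpha \cdot \rmax \cdot \sum_{v \in \mathcal{S}} d_v \;\leq\; \rsum \;\leq\; m \cdot \rmax,
\end{equation*}
so $\sum_{v \in \mathcal{S}} d_v \leq m/\alpha$. Since the total running time after reaching the state is $O\!\left(\sum_{v \in \mathcal{S}} d_v\right)$ and $\alpha$ is a constant, the remaining work is $O(m)$, which establishes the lemma.

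I do not expect a serious obstacle here: the argument is essentially the textbook $O(1/\rmax)$ analysis of $\fwdpush$, but restarted from a state whose residue mass is $m \cdot \rmax$ rather than $1$. The only subtlety worth double-checking is that the FIFO queue discipline does not interfere with the per-push decrease guarantee — it does not, because the bound $r(s,v) > d_v \cdot \rmax$ at the time of popping $v$ is enforced by the activeness condition at Line~$9$ of Algorithm~\ref{algo:itrfwdpush}, independently of the queueing order.
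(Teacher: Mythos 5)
Your proposal is correct and follows essentially the same route as the paper: both arguments charge each push on an active node $v$ a guaranteed decrease of at least $\alpha \cdot d_v \cdot \rmax$ in $\rsum$ against a cost of $O(d_v)$, and then divide the remaining residue budget $\lmd = m \cdot \rmax$ by the per-unit decrease $\alpha \cdot \rmax$ to get $O(m)$ total work. The one subtlety you flagged — that activeness at pop time is not undermined by the FIFO discipline — is handled correctly, since a node's residue can only grow between the moment it is appended and the moment it is popped.
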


Theorem~\ref{thm:itrfwdpush} follows immediately from these two lemmas.
In the rest of this subsection, we prove Lemmas~\ref{lmm:itrfwdpush-error} and ~\ref{lmm:itrfwdpush-stoptime}, respectively.

\vspace{1mm}
\noindent
{\bf Proof of Lemma~\ref{lmm:itrfwdpush-error}.}
Consider the $(j+1)$-th iteration; $\itrfwdpush$ performs a push operation on each node $v \in S\itj$.
For each such push operation, an amount of $\alpha\cdot r\itj(s,v)$ probability mass is converted to $\hpi\itjnext(s,v)$, and hence,
$\rsum\itj$ is decreased by $\alpha \cdot r\itj(s,v)$.
Therefore, at the end of this $(j+1)$-th iteration,
the net decrease of $\rsum\itj$ is:
\begin{equation}\label{eq:rsum-error} \small
	\rsum\itj - \rsum\itjnext \geq  \alpha \cdot \sum_{v\in S\itj} r\itj(s,v)\,.
\end{equation}
The key in our proof is to show $\rsum\itjnext \leq (1 - \frac{\alpha}{m} \cdot \sum_{v \in S\itj} d_v) \cdot \rsum\itj$. 
To achieve this, we show the following observation.

\begin{observation}\label{obs1}
	$\sum_{v\in S\itj} r\itj(s,v) \geq \frac{1}{m} \cdot \sum_{v \in S\itj} d_v  \cdot \rsum\itj\,$.
\end{observation}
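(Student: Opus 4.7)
The plan is to decompose $\rsum\itj$ into the contributions of the active nodes in $S\itj$ and the inactive ones in $V \setminus S\itj$, and then exploit the fact that the regime relevant to Lemma~\ref{lmm:itrfwdpush-error} is precisely $\rsum\itj \geq \lmd = m \cdot \rmax$ (otherwise the lemma is already established and nothing remains to prove). Writing $D = \sum_{v \in S\itj} d_v$, the decomposition gives
\[
\sum_{v \in S\itj} r\itj(s,v) \;=\; \rsum\itj - \sum_{v \notin S\itj} r\itj(s,v).
\]
By the definition of activeness, every $v \notin S\itj$ satisfies $r\itj(s,v) \leq d_v \cdot \rmax$, so the inactive residues sum to at most $\rmax \cdot \sum_{v \notin S\itj} d_v = \rmax \cdot (m - D)$, using that the out-degrees over all nodes sum to $m$. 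This is the first key step I would carry out, and it yields $\sum_{v \in S\itj} r\itj(s,v) \geq \rsum\itj - (m - D) \cdot \rmax$.

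Next, I would substitute the regime inequality $\rmax \leq \rsum\itj / m$ (which is equivalent to $\rsum\itj \geq \lmd$) into the bound above to obtain
\[
\sum_{v \in S\itj} r\itj(s,v) \;\geq\; \rsum\itj - (m - D) \cdot \frac{\rsum\itj}{m} \;=\; \frac{D}{m} \cdot \rsum\itj,
\]
which is exactly the desired inequality. The computation is elementary once the correct operating regime has been identified, and the proof should fit in just a few lines.

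The main conceptual obstacle, I expect, is not the algebra but pinning down the hypothesis under which the observation is invoked. The bound genuinely requires $\rsum\itj \geq m \cdot \rmax$: if all residues sat just below the activeness threshold, the left-hand side could drop essentially to zero while the right-hand side stays positive, so the inequality would fail without this assumption. Fortunately, this matches precisely the phase in which Observation~\ref{obs1} will be used inside Lemma~\ref{lmm:itrfwdpush-error}, so the conditional nature of the claim causes no issue in the subsequent running-time analysis. I would therefore state the observation explicitly under the hypothesis $\rsum\itj \geq \lmd$ (or remark that it is implicit from the context of the current phase) to keep the logical dependency transparent.
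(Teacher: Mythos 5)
Your algebra is correct and the inequality you prove does suffice for Lemma~\ref{lmm:itrfwdpush-error}, but your route differs from the paper's and your side-claim about necessity is wrong. The paper observes that the definition of active nodes sandwiches $\rmax$ between the two ``residue-per-degree'' ratios,
\[
\frac{\sum_{v \in S} r(s, v) }{ \sum_{v \in S} d_v } \;\geq\; \rmax \;\geq\; \frac{ \sum_{v \not\in S} r(s, v) }{ \sum_{v \not\in S} d_v }\,,
\]
and then applies the mediant inequality to conclude $\frac{\sum_{v \in S} r(s, v) }{ \sum_{v \in S} d_v } \geq \frac{\rsum}{m}$ with no hypothesis on $\rsum\itj$ at all (the degenerate cases $\sum_{v\in S}d_v\in\{0,m\}$ are handled separately). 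Your version instead subtracts the inactive mass and then imports $\rmax \leq \rsum\itj/m$; this proves only a conditional statement, though the condition does hold in the phase of the analysis where the observation is invoked, so the running-time bound still goes through. What you get wrong is the claim that the observation ``would fail'' without the hypothesis $\rsum\itj \geq m\cdot\rmax$: in your hypothetical where all residues sit just below the activeness threshold, $S\itj$ is empty and both sides are zero; more generally, whenever $S\itj\neq\emptyset$ the inactive residues are capped at $\rmax\cdot\sum_{v\notin S\itj}d_v$ while the active ones exceed $\rmax\cdot\sum_{v\in S\itj}d_v$, which already forces $\rsum\itj \leq \frac{m}{\sum_{v\in S\itj}d_v}\sum_{v\in S\itj}r\itj(s,v)$ --- exactly the unconditional claim. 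So the observation is true as stated, and the cleaner mediant argument is preferable because it keeps the observation free of any coupling to the surrounding lemma.
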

\begin{proof}
	In the following calculation, we omit all the superscripts in  $S\itj$, $\rsum\itj$ and the residues $r\itj(s,v)$ 
	as they are all with respect to $j$.
%
	Clearly, when $\sum_{v \in S} d_v = 0$ or $\sum_{v \in S} d_v = m$, the observation holds. Otherwise, by the definition of active nodes, we have:  
	{\small
	\begin{equation*}
		\frac{\sum_{v \in S} r(s, v) }{ \sum_{v \in S} d_v } \geq \rmax  \geq \frac{ \sum_{v \not \in S} r(s, v) }{ \sum_{v \not \in S} d_v } \,.
	\end{equation*}
	}%
	Therefore, it follows that:
	{\small
	\begin{equation*} 
	\frac{\sum_{v \in S} r(s, v) }{ \sum_{v \in S} d_v } \ge \frac{\sum_{v \in S} r(s, v) + \sum_{v \notin S} r(s, v) }{ \sum_{v \in S} d_v + \sum_{v \notin S} d_v } = \frac{r_{sum} }{ m}\,. 
	\end{equation*}
	}%
	The observation follows.
\end{proof}

Substituting Observation~\ref{obs1} to Equation~\eqref{eq:rsum-error}, we have:
{\small
\begin{align}\label{eq:rsum-error2}
	\rsum\itjnext &\leq \rsum\itj - \alpha \sum_{v \in S\itj} r\itj(s,v)
		      \leq (1 - \frac{\alpha}{m} \cdot \sum_{v \in S\itj} d_v) \cdot \rsum\itj \nonumber \\
		      &\leq \prod_{k = 0}^{j} (1 -  \frac{\alpha}{m} \cdot \sum_{v \in S^{(k)}} d_v) \cdot \rsum^{(0)} \nonumber \\
		      &\leq \exp\left(-\sum_{k = 0}^j (\frac{\alpha}{m} \cdot \sum_{v \in S^{(k)}} d_v) \right)
		      = \exp\left(-\frac{\alpha}{m} \cdot (\sum_{k = 0}^j \sum_{v \in S^{(k)}} d_v) \right)  \,,
\end{align}
}%
where the last inequality follows from the fact that $1 - x \leq e^{-x}$ holds for all $x \in \mathbb{R}$. 

Let $T\itjnext = \sum_{k = 0}^j \sum_{v \in S^{(k)}} d_v$ be the total degree of the node in each push operation performed in the first $(j+1)$ iterations. 
By Equation~\eqref{eq:rsum-error2}, in order to make $\rsum\itjnext \leq \lmd$, it suffices to find the {\em smallest} $j$ such that
{\small
\begin{align*}
\exp\left(-\frac{\alpha}{m} \cdot T\itjnext \right) \leq  \lmd \leq \exp\left(-\frac{\alpha}{m} \cdot T\itj \right) \,.
\end{align*}
}%
Thus, we have:
{\small
\begin{align*}
	T\itj \leq \frac{m}{\alpha} \cdot \ln \frac{1}{\lmd} \leq T\itjnext\,.
\end{align*}
}%
By the fact that $T\itjnext - T\itj = \sum_{v \in S\itj} d_v \leq m$, we further have:
{\small
\begin{align}
	T\itjnext \leq T\itj + m \leq \frac{m}{\alpha} \cdot \ln \frac{1}{\lmd} + m \,. 
\end{align}
}%
Finally, since the cost of a push operation on $v$ is bounded by $O(d_v)$, 
thus $O(T\itjnext)$ is actually an upper bound on the overall cost in the first $(j+1)$ iterations.
Therefore, the overall running time to achieve $\rsum\itjnext \leq \lmd$ is bounded by $O(m \cdot \log \frac{1}{\lmd} + m)$.
This completes the whole proof for Lemma~\ref{lmm:itrfwdpush-error}.

\vspace{1mm}
\noindent
{\bf Proof of Lemma~\ref{lmm:itrfwdpush-stoptime}.}
Let $\rbefore= \rsum\itjnext \leq \lmd$ be the $\rsum$ at the current state, and $\rafter$ be the $\rsum$ when the algorithm terminates.
Recall that each push operation on an active node $v$ decreases $\rsum$ by $\alpha \cdot r(s,v) \geq \alpha \cdot d_v \cdot \rmax$, and
the corresponding running time cost is $O(d_v)$.
Therefore, after paying a total running time cost of $O(T)$, the net decrease of $\rsum$ is at least $\alpha \cdot T \cdot \rmax$.
As the net decrease is at most $\rbefore - \rafter \leq \lmd$,
it follows that $\alpha \cdot T \cdot \rmax$ cannot be greater than $\lmd$.
Hence, $T \leq \frac{\lmd}{\alpha \cdot \rmax} = O(m)$.
Thus, the largest possible running time of $\itrfwdpush$ starting from the state of $\rsum \leq \lmd$ is bounded by $O(m)$. 
Lemma~\ref{lmm:itrfwdpush-stoptime} thus follows. 

\section{A New Efficient Power Iteration}\label{sec:powforpush}

In the previous section, we show that: (i) $\powitr$ is equivalent to a special variant of $\fwdpush$, 
and (ii) a simple implementation $\itrfwdpush$ of $\fwdpush$ can achieve time complexity $O(m \cdot \log \frac{1}{\lmd})$.
Based on these theoretical findings, in this section, we design an efficient implementation of $\powitr$, call {\em Power Iteration with Forward Push} ($\powforpush$), from an engineering point of view.
Our optimizations in the design of $\powforpush$ unifies the global-approach $\powitr$ and local-approach $\fwdpush$ and incorporates both their strengths.
Algorithm~\ref{algo:powforpush} is the pseudo-code of $\powforpush$. 
We introduce some crucial optimizations in $\powforpush$ in below.

\vspace{1mm}
\noindent
{\bf Asynchronous Pushes.}
Unlike $\powitr$,
our $\powforpush$ uses asynchronous push operations.
We note that asynchronous push operations can be possibly more effective.
This is because during the $(j+1)$-th iteration, if there is a push operation on an in-neighbor $u$ of a node $v$ before the push of $v$, 
when $v$ pushes, its current residue is greater than $r\itj(s,v)$, and hence, this push operation can send out more residue.
To see this, in the second iteration in Figure~\ref{fig:simfwdpush}, the simultaneous push operation on $v_2$ is performed based on a residue of $0.4$
but in the same iteration in Figure~\ref{fig:fwdpush}, the push on $v_2$ is based on a residue of $0.56$. 
This is because $v_3$ pushed before $v_2$, and hence, $v_2$'s residue has been increased by $0.16$. 
Moreover, after this asynchronous push, the residue of $v_2$ becomes $0$ in the next iteration, while in contrast, $v_2$ still has $0.16$ (obtained from $v_3$) under the simultaneous pushes. 
In other words, this asynchronous push on $v_2$ has equivalently processed the residues of $v_2$ in two iterations under simultaneous pushes.

\vspace{1mm}
\noindent
{\bf Global Sequential Scan v.s. Local Random Access.}
One of the biggest optimizations in $\powforpush$ is the strategy of switching to a global sequential scan from using the queue to access active nodes.
The key observation is that after a few iterations, in $\itrfwdpush$, there would be a large number of active nodes which are stored in the queue according to their ``append-to-queue'' order.
As a result, to perform push operations on these nodes, it requires a large number of {\em random access} in both the node list and the edge list, incurring a substantial overhead.

To remedy this, in $\powforpush$, when the current number of active nodes is greater than a specified $scanThreshold$, it switches to sequential scan the node list to perform push operations on the active nodes (as shown in Algorithm~\ref{algo:powforpush} Line $15$ - $24$).
Moreover, to further facilitate this idea, $\powforpush$ stores all the nodes sorted by id's and concatenates the adjacent lists of the nodes in the same order (i.e., sorted by id's) in a large array.
Thanks to this storage format, in each iteration, $\powforpush$ can perform push operations on active nodes via a sequential scan on this edge array, which in turn has largely make the memory access patterns become cache-friendly.
Interestingly, this idea is borrowed from the implementation of $\powitr$ as a global-approach.

\vspace{1mm}
\noindent
{\bf Dynamic $\ell_1$-Error Threshold.}
Another optimization worth mentioning is the strategy of using dynamic $\ell_1$-error threshold (see Line $17$ in Algorithm~\ref{algo:powforpush}).
The rationale here is that with a larger $\ell_1$-error threshold, it allows us to use a larger $\rmax$. 
We note that $\rmax$ essentially specifies a threshold on the {\em unit-cost benefit} of the push operations.
To see this, recall that a push operation on $v$ takes $O(d_v)$ cost and reduces $\rsum$ by $\alpha \cdot r(s,v)$.
Thus, $\alpha \cdot r(s,v) / d_v$ can be considered as the unit-cost benefit of this operation. 
By definition, a node becomes active only if a push operation on it has unit-cost benefit $\geq \alpha \cdot \rmax$.
The good thing of performing push operations with higher unit-cost benefits first is that 
it allows other nodes to accumulate their residues before pushing.
In this way, the number of push operations to achieve $\ell_1$-error can be considerably reduced. 
Motivated by this, we perform $\powforpush$ in epochs. 
In the $i$-th ($1 \leq i \leq epochNum$) epoch, 
an $\ell_1$-error $\lmd^{\frac{i}{epochNum}} \geq \lmd$ is adopted to perform those push operations with higher unit-cost benefits.


\vspace{1mm}
\noindent
{\bf Remark.}
The $\hpis$ and $\vrs$ returned by $\powforpush$ can be further refined to ensure $r(s,v) \leq d_v \cdot \rmax$, where $\rmax = \frac{\lmd}{m}$, holds for all $v\in V$.
By Lemma~\ref{lmm:itrfwdpush-stoptime}, such refinement only takes $O(m)$ time. 

\begin{algorithm}[t]
	\caption{Power Iteration with Forward Push}
	\label{algo:powforpush}
	\KwIn{$G$, $\alpha$, $s$, $\lmd$}
	\KwOut{an estimation $\hpis$ of $\pis$ and the resulted residues $\vrs$}
	$epochNum = 8$; // a tunable constant\;
	$scanThreshold = n / 4$; // a threshold to use sequential scan\;
	$\hpi(s,v) \leftarrow 0$ and $r(s,v) \leftarrow 0$ for all $v \in V$;
	$r(s, s) \leftarrow 1$\;
	initialize a {\em first-in-first-out} queue $Q \leftarrow \emptyset$\;
	$Q.append(s)$\; 
	$\rmax \leftarrow \lmd/m$\;
	\While{$Q \neq \emptyset$ and $Q$.size() $\leq$ $scanThreshold$ and $\rsum > \lmd$}{
			$v \leftarrow Q.pop()$\; 
			$\hpi(s,v) \leftarrow \hpi(s,v) + \alpha \cdot r(s, v)$\;
			\For{each $u \in N_{out}(v)$}{
				$r(s, u) \leftarrow r(s,u) +  \frac{(1 - \alpha) \cdot r(s, v)}{d_v}$\;
				if $u\not \in Q$ is active w.r.t. $\rmax$, $Q.append(u)$\;
			}
				$r(s,v) \leftarrow 0$\; 
	}
	\If{$\rsum > \lmd$}{
	// Switch to using sequential scan\;
	\For{$i \leftarrow 1; i \leq epochNum; i \leftarrow i + 1$}{
		$\rmax' \leftarrow \lmd^{\frac{i}{epochNum}} / m$; // allow a larger $\ell_1$-error\;
		\While{$\rsum > m \cdot \rmax'$}{	
			\For{each $v\in V$}{
				\If{$v$ is active w.r.t. $\rmax'$}{		
					$\hpi(s,v) \leftarrow \hpi(s,v) + \alpha \cdot r(s, v)$\;
					\For{each $u \in N_{out}(v)$}{
						$r(s, u) \leftarrow r(s,u) +  \frac{(1 - \alpha) \cdot r(s, v)}{d_v}$\;
					}
					$r(s,v) \leftarrow 0$\;
					}
				}
			}
		}
	}
	\Return $\hpi(s,v)$ for all $v\in V$ as a vector $\hpis$, and $r(s,v)$ for all $v\in V$ as the resulted residue vector $\vrs$\;
\end{algorithm}

\section{Improved Approx-SSPPR Algorithm}\label{sec:speedppr}
In this section, we propose a new algorithm, called $\speedppr$, for answering approximate SSPPR queries.

\subsection{Preliminaries on Approx-SSPPR}
In this subsection, we first introduce some preliminaries on two relevant algorithms:
$\mc$  and $\fora$.
The ideas of these two algorithms would help understand the key idea of the design of our $\speedppr$.
Recall that an Approx-SSPPR query aims to compute an estimation $\hpi(s,v)$ for every node $v$ with $\pi(s,v) \geq \mu$ within {\em relative error} $\eps$ with a succeed probability at least $1 - \frac{1}{n}$.

\vspace{1mm}
\noindent
{\bf The Monte Carlo Method.}
Perhaps, one of the most straightforward ways to answer Approx-SSPPR query is the $\mc$ method.
The basic idea is to generate $W$ independent $\alpha$-random walks from $s$, and utilise the {\em empirical} number $f(s,v)$ out of these random walks that stop at a node $v$ to estimate its expectation $\pi(s,v) \cdot W$.
Thus, $\hpi(s,v) = \frac{f(s,v)}{W}$ gives an estimation of $\pi(s,v)$.
By the standard Chernoff Bound~\cite{chung2006}, it is known that setting
{\small
\begin{align}\label{eq:W}
	W = \frac{2 \cdot (2\cdot \eps/ 3 + 2) \cdot \log n}{\eps^2 \cdot \mu} = O(\frac{\log n}{\eps^2 \cdot \mu})
\end{align}
}%
suffices to obtain a correct estimation for every node $v$ with $\pi(s,v) \geq \mu$ with probability at least $1 - \frac{1}{n}$.  
Furthermore, as the expected length of an $\alpha$-random walk is at most $\frac{1}{\alpha}$, 
the overall expected running time of $\mc$ is bounded by $O(\frac{\log n}{\eps^2 \cdot \mu})$.
When $\mu = \frac{1}{n}$, this bound can be written as $O(\frac{n \cdot \log n}{\eps^2})$. 

In the rest of this section, without loss of generality,  we assume that $m < W$, because otherwise, i.e., $m \geq W$, 
one can always switch their algorithm to the $\mc$ method and guarantee a time complexity $O(W)$ no worse than $O(m)$. 

\vspace{1mm}
\noindent
{\bf FORA.}
$\fora$~\cite{WYXWY17} is a state-of-the-art representative algorithm for answering Approx-SSPPR queries.
It adopts a two-phase framework and combines $\fwdpush$ and $\mc$.
In the first phase, it runs $\fwdpush$ with a specified $\rmax$ (whose value is to be determined shortly)
to obtain an estimation $\hpis$ of $\pis$ with $\abserror = \rsum = \sum_{v \in V} r(s,v)  \leq m \cdot \rmax$. 
In the second phase, it performs the $\mc$ method. 
Specifically, it works as follows.
For each node $v$ with $r(s,v) > 0$, $\fora$ generates $W_v = \lceil r(s,v) \cdot W \rceil$ random walks from $v$,
where $W$ is set by Equation~\eqref{eq:W}. 
Among these $W_v$ random walks from $v$, if $f(v,u)$ out of them had stopped at a node $u$, then increase $\hpi(s,u)$ by:
{\small
\begin{align}\label{eq:fora1}
	\frac{r(s,v) \cdot W}{W_v} \cdot \frac{f(v,u)}{W} = r(s,v) \cdot \frac{f(v,u)}{W_v}\,.
\end{align}
}%
In summary, the final estimation $\hpi'(s,u)$ of $\pi(s,u)$ is computed as: 
{\small
\begin{equation}\label{eq:fora2}
	\hpi'(s,u) = \hpi(s,u) + \sum_{v\in V : r(s,v) > 0} r(s,v) \cdot \frac{f(v,u)}{W_v}\,,
\end{equation}
}%
where $\hpi(s,u)$ is obtained in the first $\fwdpush$ phase, and the second term is the net increase based on $\hpi(s,u)$ in the $\mc$ phase.

\vspace{1mm}
\noindent
{\bf Running Time Analysis.}
According to the previous bound on the running time of $\fwdpush$, the cost of the first phase in $\fora$ is bounded by $O(\frac{1}{\rmax})$;
and in the second phase, $\fora$ needs to generate at most $\sum_{v \in V: r(s,v) >0} 
\lceil r(s,v) \cdot W \rceil \leq \rsum \cdot W + n \leq m \cdot \rmax \cdot W + n$ random walks. 
Therefore, the overall expected time of $\fora$ is bounded by $O(\frac{1}{\rmax} + m \cdot \rmax \cdot W + n)$, 
which can be minimized to $O(\sqrt{m \cdot W} + n)$ by setting $\rmax = \frac{1}{\sqrt{m \cdot W}}$.
When $\mu = \frac{1}{n}$ and the graph is {\em scale-free}, i.e., $m = O(n\cdot \log n)$,
this bound can be further simplified to $O(\frac{n \cdot \log n}{\eps})$. 
In this case, $\fora$ improves the $\mc$ method by a factor of $\frac{1}{\eps}$, 
Furthermore, this $O(\frac{n \cdot \log n}{\eps})$-bound is actually state-of-the-art; 
none of the existing algorithms can overcome this barrier.

\vspace{1mm}
\noindent
{\bf Pre-Computing the Random Walks.}
An optimization of $\fora$ is to {\em pre-compute} $K_v$ random walks for each node $v\in V$, where $K_v = d_v \cdot \sqrt{\frac{W}{m}} + 1 \geq W_v$;
when answering a query, it just needs to read the pre-computed random walk results to perform the second $\mc$ phase.
Therefore, the actual query cost can be further reduced. 
Such an index-based variant is called $\fora$+.
The space consumption of all the pre-computed random walk results is  $\sum_{v\in V} K_v = \sqrt{m \cdot W} + n$.
When $ m = O(n\cdot \log n)$, this gives the overall space consumption bound $O(\frac{1}{\eps} \cdot n \cdot \log n)$.
Unfortunately, as the number of pre-computed random walks for each node depends on $W$ and hence on the relative error $\eps$,
the index of $\fora$+ constructed for an $\eps = \eps_1$ is not sufficient to answer queries with relative error $\eps_2 < \eps_1$.
Moreover, to support queries with small $\eps$, the index requires a substantial space consumption.
These drawbacks have significantly limited the applicability of $\fora$+.

\subsection{Our Improved Algorithm}
Next, we propose a new Approx-SSPPR algorithm, called $\speedppr$, which not only improves $\fora$'s running time complexity, but also admits an index with size independent to $\eps$.  
While it eventually turns out that $\speedppr$ is as simple as substituting $\powforpush$ along with a $O(m)$-time post-refinement (to ensure that no node is active with respect to $\rmax = \frac{1}{W}$) in 
the first phase of $\fora$, 
it is our new $\powforpush$ technique to make these improvements of $\speedppr$ over $\fora$ become possible. 
The pseudo-code of $\speedppr$ is shown in Algorithm~\ref{algo:speedppr}.

\begin{theorem}\label{thm:speedppr-time}
	The overall expected running time of $\speedppr$ is bounded by 
	$O(m \cdot \log \frac{W}{m})$, where $W$ is computed as Equation~\eqref{eq:W}. 
	When the graph is {\em scale-free}, i.e., $m = O(n\cdot \log n)$ and $\mu = \frac{1}{n}$, this bound can be written to $O(n \cdot \log n \cdot \log \frac{1}{\eps})$. 
\end{theorem}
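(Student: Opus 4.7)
The plan is to bound the two phases of $\speedppr$ separately and then combine them. Recall from the text that $\speedppr$ is obtained from $\fora$ by replacing the first $\fwdpush$ phase with $\powforpush$ (followed by the $O(m)$-time post-refinement from Lemma~\ref{lmm:itrfwdpush-stoptime}), and by setting the residue threshold to $\rmax = 1/W$ so that the $\ell_1$-error after the first phase satisfies $\rsum \leq m \cdot \rmax = m/W$. Under the standing assumption $m < W$ introduced at the start of Section~\ref{sec:speedppr}, this threshold is valid and $\log(W/m) \geq 1$.

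First I would bound the first phase. Plugging $\rmax = 1/W$ and $\lmd = m/W$ into Theorem~\ref{thm:itrfwdpush} (which applies to $\powforpush$ by its equivalent $\itrfwdpush$ analysis, together with Lemma~\ref{lmm:itrfwdpush-stoptime} absorbing the post-refinement), the running time of the first phase is $O\!\left(m \cdot \log \frac{1}{\lmd}\right) = O\!\left(m \cdot \log \frac{W}{m}\right)$. Second I would bound the Monte Carlo phase. Since the refinement guarantees $r(s,v) \leq d_v \cdot \rmax$ for all $v$, the total number of random walks generated is
\begin{equation*}
\sum_{v \in V:\, r(s,v) > 0} \lceil r(s,v) \cdot W \rceil \; \leq \; \rsum \cdot W + n \; \leq \; m \cdot \rmax \cdot W + n \; = \; m + n \; = \; O(m).
\end{equation*}
Since each $\alpha$-random walk has expected length $1/\alpha = O(1)$, the expected cost of the second phase is $O(m)$. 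Summing the two phases yields $O\!\left(m \cdot \log \frac{W}{m}\right) + O(m) = O\!\left(m \cdot \log \frac{W}{m}\right)$, where absorption is justified by $\log(W/m) \geq 1$.

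Finally, for the scale-free specialization I would substitute $m = O(n\cdot \log n)$ and $\mu = 1/n$ into Equation~\eqref{eq:W}, giving $W = O\!\left(\frac{n \log n}{\eps^2}\right)$ and hence $W/m = O(1/\eps^2)$. Therefore $\log(W/m) = O(\log \frac{1}{\eps})$, and the overall bound becomes $O\!\left(n \cdot \log n \cdot \log \frac{1}{\eps}\right)$, as claimed.

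\textbf{Main Obstacle.} The two phases are each individually straightforward given our prior results, so the conceptual work is really in making sure that the choice $\rmax = 1/W$ is correctly paired with the $\ell_1$-error regime needed by $\powforpush$. In particular, the subtle point is justifying that $\powforpush$ (rather than plain $\itrfwdpush$) retains the $O(m \log(1/\lmd))$ bound in this setting and that the $O(m)$-time post-refinement is sufficient to enforce the per-node residue condition $r(s,v) \leq d_v \cdot \rmax$ required to control the number of Monte Carlo walks per node via $W_v = \lceil r(s,v) \cdot W \rceil$. Once these are explicitly invoked via Theorem~\ref{thm:itrfwdpush} and Lemma~\ref{lmm:itrfwdpush-stoptime}, the arithmetic above goes through cleanly.
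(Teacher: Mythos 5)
Your proof is correct and follows essentially the same route as the paper: bound the $\powforpush$ phase by $O(m\cdot\log\frac{W}{m})$ via the $O(m\log\frac{1}{\lmd})$ analysis with $\lmd = m/W$, and bound the Monte Carlo phase by $O(m)$ expected time. The only cosmetic difference is that the paper bounds the walks per node ($W_v \leq \lceil d_v\cdot\rmax\cdot W\rceil = d_v$, summing to $m$) whereas you bound the total directly via $\rsum\cdot W + n \leq m + n$; both are valid and yield the same result.
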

\begin{proof}
The correctness of $\speedppr$ follows immediately from $\fora$. It thus suffices to bound the expected running time. 


In the first phase, the cost of running $\powforpush$ with $\rmax = \frac{1}{W}$ is bounded by $O(m \cdot \log \frac{W}{m} + m)$.
In the second phase, for each node $v$ with $r(s,v) > 0$, $\speedppr$ needs to perform $W_v = \lceil r(s,v) \cdot W \rceil \leq \lceil d_v \cdot \rmax \cdot W \rceil = d_v$ random walks. Thus, in total, there are at most $m$ random walks needed, and hence, the expected running time for performing them is $O(m)$.
Putting the two cost together, the overall expected running time of $\speedppr$ is bounded by $O(m \cdot \log \frac{W}{m})$.
Furthermore, when $m = O(n\cdot \log n)$ and $\mu = \frac{1}{n}$, the bound is simplified to $O(n \cdot \log n \cdot \log \frac{1}{\eps})$.
\end{proof}

\vspace{-2mm}
\noindent
{\bf Improvements over $\fora$.} 
Despite of the analogous algorithm framework, $\speedppr$ has two significant improvements over $\fora$.
\begin{itemize}[leftmargin = *]
	\item First, the overall expected running time of $\speedppr$ improves $\fora$'s state-of-the-art $O(\frac{1}{\eps} \cdot n \cdot \log n)$-bound by {\em almost} a factor of $\frac{1}{\eps}$.
		Given the importance of Approx-SSPPR queries, our improved $\speedppr$ not only reduces the computational cost of the tasks, but also offers an opportunity for users to obtain more accurate results (by setting $\eps$ smaller)  with the same running time budget.  
	\item Second, in the $\mc$ phase of $\speedppr$, only at most $d_v$ random walks are needed for each node $v \in V$.
		As a result, an index with at most $m$ pre-computed random walk results suffices to support $\speedppr$ to answer any Approx-SSPPR queries with any $\eps$.	
		In contrast, as aforementioned, the index size of $\fora$+ depends on $\eps$.
		The index of $\speedppr$ can consume an order-of-magnitude less space than that of $\fora$ when $\eps$ is small.
		More importantly, $\speedppr$ has no need to re-build the index for different $\eps$'s. 
\end{itemize}

\begin{algorithm}[t]	
	\caption{$\speedppr$}
	\label{algo:speedppr}
	\KwIn{$G$, $\alpha$, $s$, $\eps$, $\mu$}
	\KwOut{an estimation $\hpis$ of $\pis$}
	$W \leftarrow  \frac{2 \cdot (2\cdot \eps/ 3 + 2) \cdot \log n}{\eps^2 \cdot \mu}$\;
	$\hpis, \vrs \leftarrow$ invoke $\powforpush$ with $G$, $\alpha$, $s$ and $\lmd = \frac{m}{W}$\;
	refine $\hpis$ and $\vrs$ to ensure no node is active w.r.t. $\rmax = \frac{1}{W}$\; 
	\For{each $v \in V$ with $r(s,v) > 0$}{
		$W_v \leftarrow \lceil r(s,v) \cdot W \rceil$\;
		perform $W_v$ random walks from $v$\;
		\For{each walk stopping at a node $u$}{
			$\hpi(s, u) \leftarrow \hpi(s, u) + \frac{r(s,v)}{W_v}$\;
		}
	}
	\Return $\hpi(s,v)$ for all $v\in V$ as a vector $\hpis$\;
\end{algorithm}


\section{Other Related Work} \label{sec:related-work}

Single-source Personalized PageRank queries have been extensively studied for the past decades~\cite{JehW03,jung2017bepi,WuJZ14,WYXWY17,coskun2016efficient,AndersenBCHMT07,AndersenCL06,FujiwaraNSMO13, FujiwaraNSMO13sigmod,FujiwaraNYSO12,YuM16,FujiwaraNOK12,YuL13,fogaras2005towards,BackstromL11,SarmaMPU15,maehara2014computing,ZhuFCY13,ShinJSK15,BahmaniCX11,Chakrabarti07, GuoCCLL17,Ren2014clude,BahmaniCG10,OhsakaMK15,ZhangLG16,GuptaPC08,lofgren2014fast,lofgren2015personalized,LofgrenBG15}.
Among these works, \cite{page1999pagerank,maehara2014computing,ZhuFCY13,ShinJSK15,jung2017bepi,Chakrabarti07} consider exact SSPPR queries, which is most relevant to our work. 
The vanilla $\powitr$  algorithm is proposed in \cite{page1999pagerank} to compute high precision results of SSPPR queries. 
\cite{maehara2014computing} improves the efficiency of $\powitr$ by introducing a core-tree decomposition. 
BEAR~\cite{ShinJSK15} preprocess the adjacency matrix so that it contains a large and easy-to-invert submatrix,  and precomputes several matrices required for inverting the submatrix to form an index. 
BePI~\cite{jung2017bepi} is the state-of-the-art matrix-based index-oriented algorithm for computing the exact values of SSPPR. 
Like BEAR, BePI achieves high efficiency by precomputing several matrices required by $\powitr$ algorithm and storing them as an index. 
BePI improves over BEAR by employing $\powitr$ instead of matrix inversion, which avoids the $O(n^3)$ complexity.
However, the index size of BePI and BEAR could exceed the graph size by orders of magnitude, which limits their scalability on large graphs. 

There are also several methods \cite{ZhuFCY13,maehara2014computing,lofgren2014fast,lofgren2015personalized,LofgrenBG15,WYXWY17,WangTXYL16} for approximate SSPPR queries. Among them, {\em BiPPR}~\cite{lofgren2015personalized} combines Backward Search with the Monte-Carlo method to obtain a more accurate estimation for SSPPR. {\em HubPPR}~\cite{WangTXYL16} precomputes Forward and Backward Search results for "hub" nodes to speed up the PPR computation. 
{\em FORA}~\cite{WYXWY17} combines Forward Search with the Monte-Carlo method, which avoids performing Backward Search on each node in the graph.  
ResAcc~\cite{lin2020index} accelerates $\fora$ by accumulating the residues that returned to the source node in the $\fwdpush$ phase and ``distribute'' this residue to other nodes proportionally based on $\hpis$ prior to the Monte-Carlo phase. 

Another line of research on PPR focuses on top-$k$ PPR queries~\cite{FujiwaraNOK12,FujiwaraNYSO12,FujiwaraNSMO13,FujiwaraNSMO13sigmod,YuL13,WuJZ14,coskun2016efficient}. Local update based methods~\cite{FujiwaraNOK12,FujiwaraNYSO12,FujiwaraNSMO13,FujiwaraNSMO13sigmod,YuL13,WuJZ14} performs a local search from the source node $s$ while maintaining lower and upper bounds of each node's PPR, and stops the search once the lower and upper bounds give the top-$k$ results.  For example, \cite{coskun2016efficient} improves Power Iteration by utilizes Chebyshev polynomials for acceleration. TopPPR~\cite{wei2018topppr} combines Forward Search, Backward Search, and the Monte-Carlo method to obtain exact top-$k$ results. These methods focus on refining the lower and upper bounds of the top-$k$ PPR values and thus are orthogonal to the techniques discussed in this paper.

\begin{table}[t]
    \begin{tabular}{|l|c|c|c|c|}
    \hline
    Name             & n     & m     & m/n  & Type       \\ \hline
    \textit{DBLP}    & 317K  & 2.10M & 6.62 & undirected \\ \hline
    \textit{Web-St}  & 282K  & 2.31M & 8.20 & directed   \\ \hline
    \textit{Pokec}   & 1.63M & 30.6M & 18.8 & directed   \\ \hline
    \textit{LJ}      & 4.85M & 68.4M & 14.1 & directed   \\ \hline
    \textit{Orkut}   & 3.07M & 234M  & 76.3 & undirected \\ \hline
    \textit{Twitter} & 41.7M & 1.47B & 35.3 & directed   \\ \hline
    \end{tabular}
    \caption{The Six Datasets ($K = 10^3, M = 10^6, B = 10^9$) }
    \label{tab:datasets}
    \vspace{-8mm}
\end{table}





 \begin{figure*}{t}
	 \vspace{-2mm}
	 \hspace{-8mm}
   \includegraphics[width=.8\linewidth]{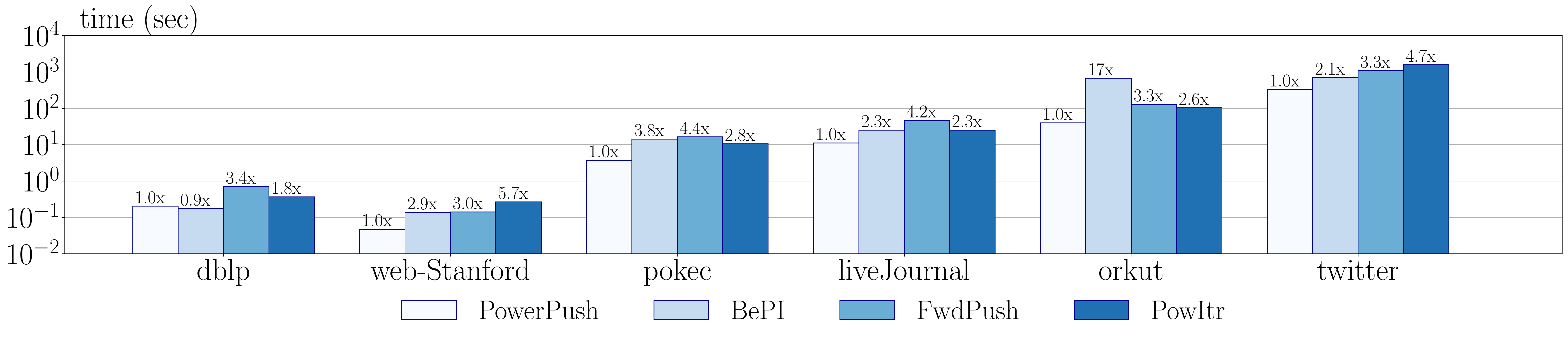} \\
	\vspace{-1mm}
     \figcapup 
     \caption{running time v.s. datasets, where the number $c.c \times$ over each bar means $c.c$ times of $\powforpush$'s running time } 
     \label{fig:time_vs_datasets}
 	\vspace{-3mm}
 \end{figure*}

  \begin{figure*}
   \includegraphics[width=0.5\linewidth]{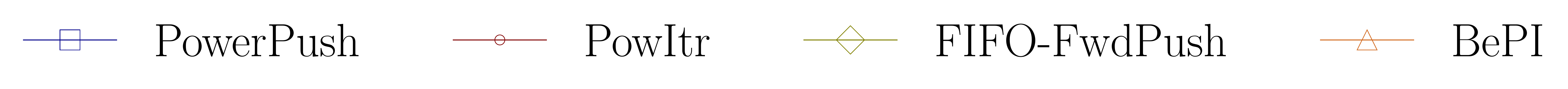} \\
\resizebox{.9\linewidth}{!}{
	 \begin{tabular}{cccccc} 
	\hspace{-6mm} 
    \includegraphics[width=0.16\linewidth]{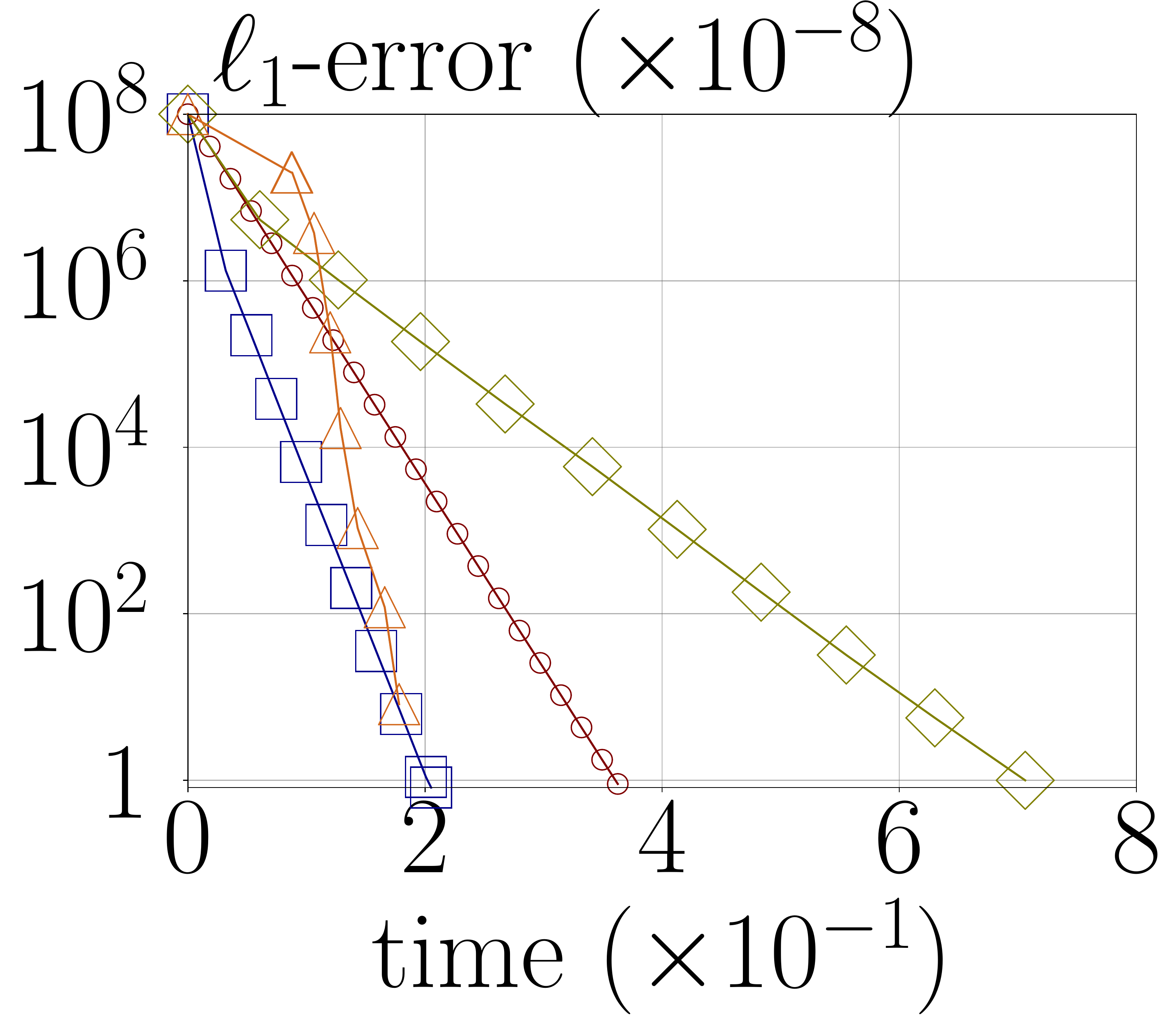} & 
   \hspace{-4.5mm}
	\includegraphics[width=0.16\linewidth]{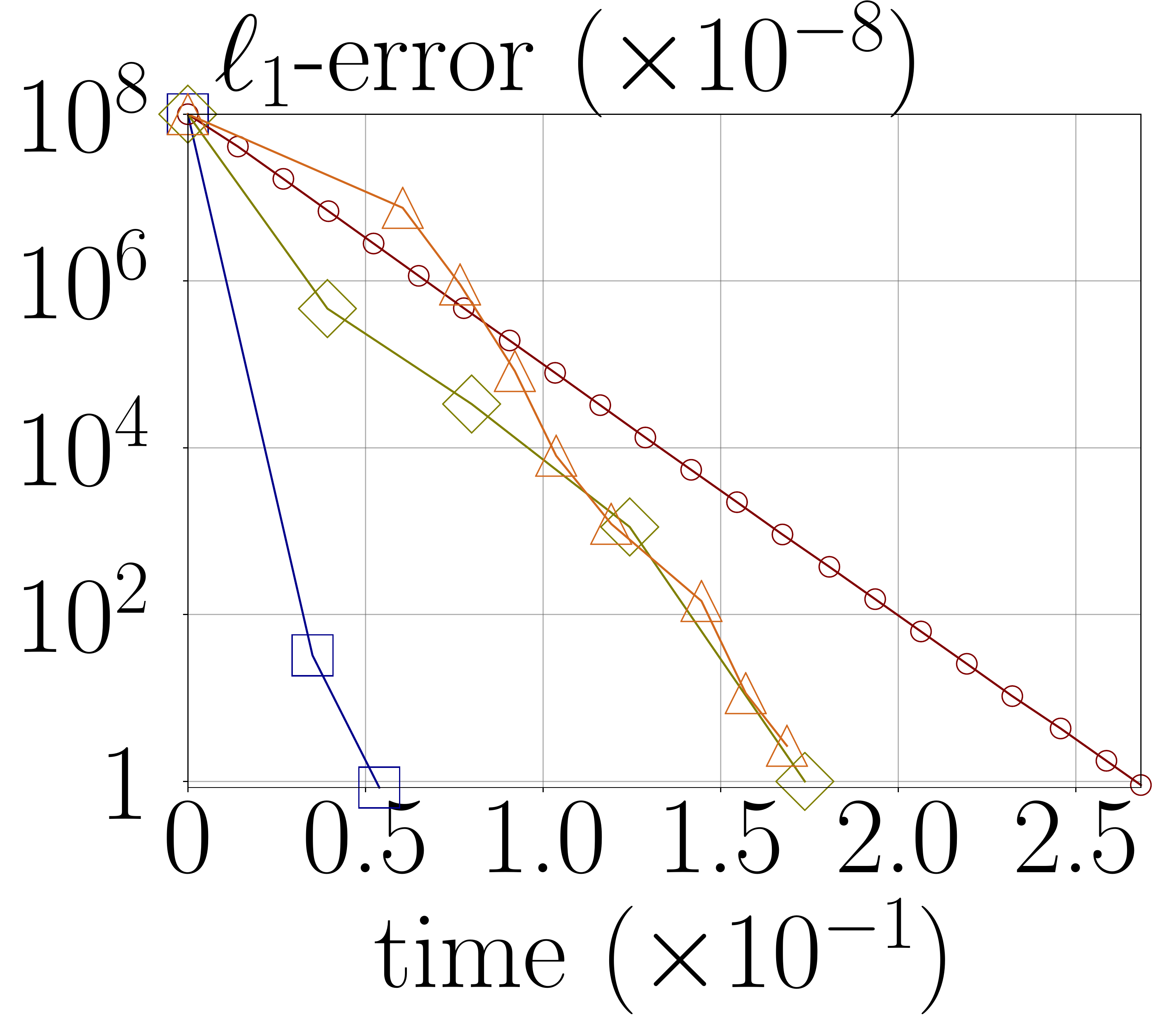} & 
   \hspace{-4.5mm}
	\includegraphics[width=0.16\linewidth]{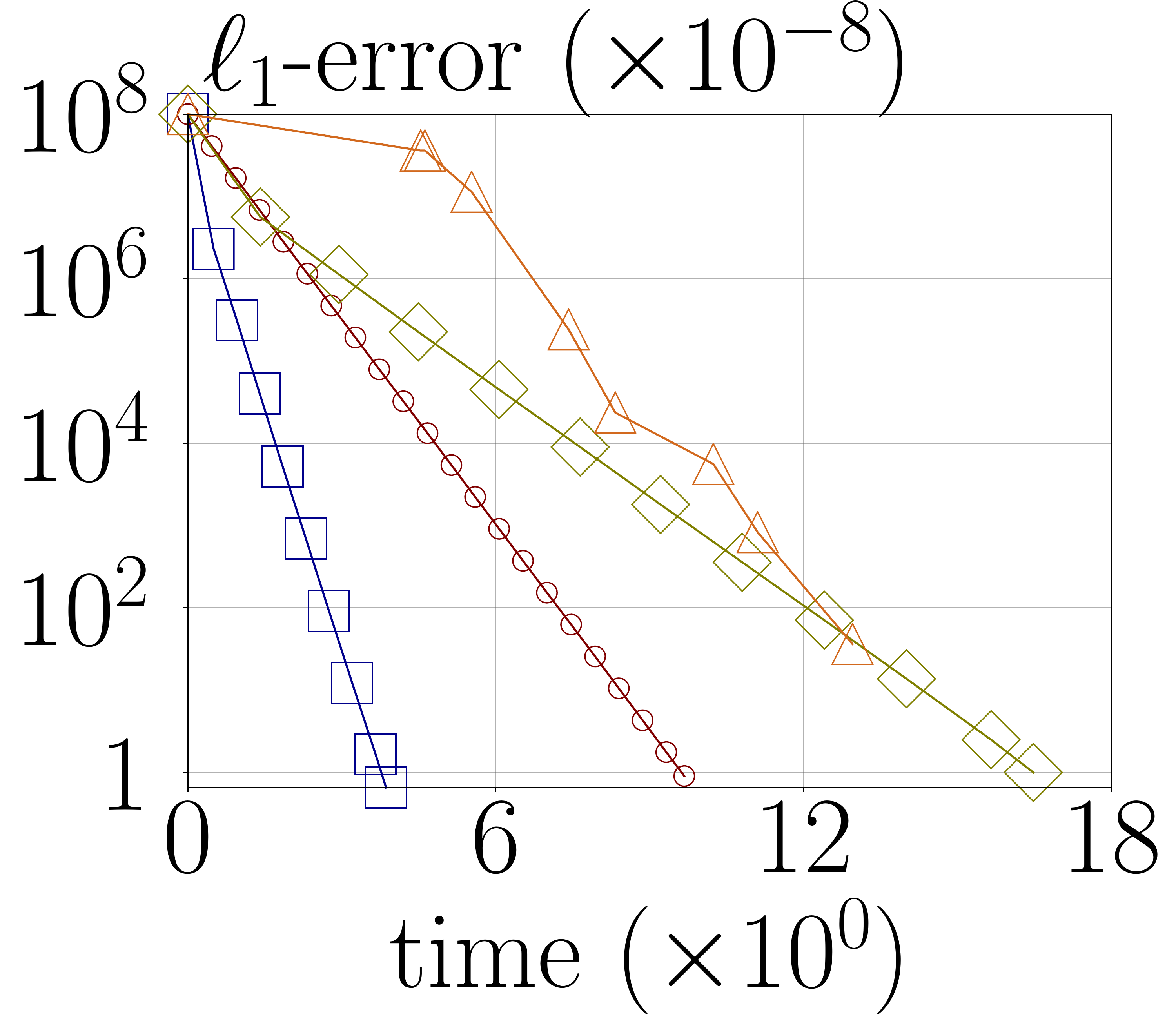} & 
   \hspace{-4.5mm}
	\includegraphics[width=0.16\linewidth]{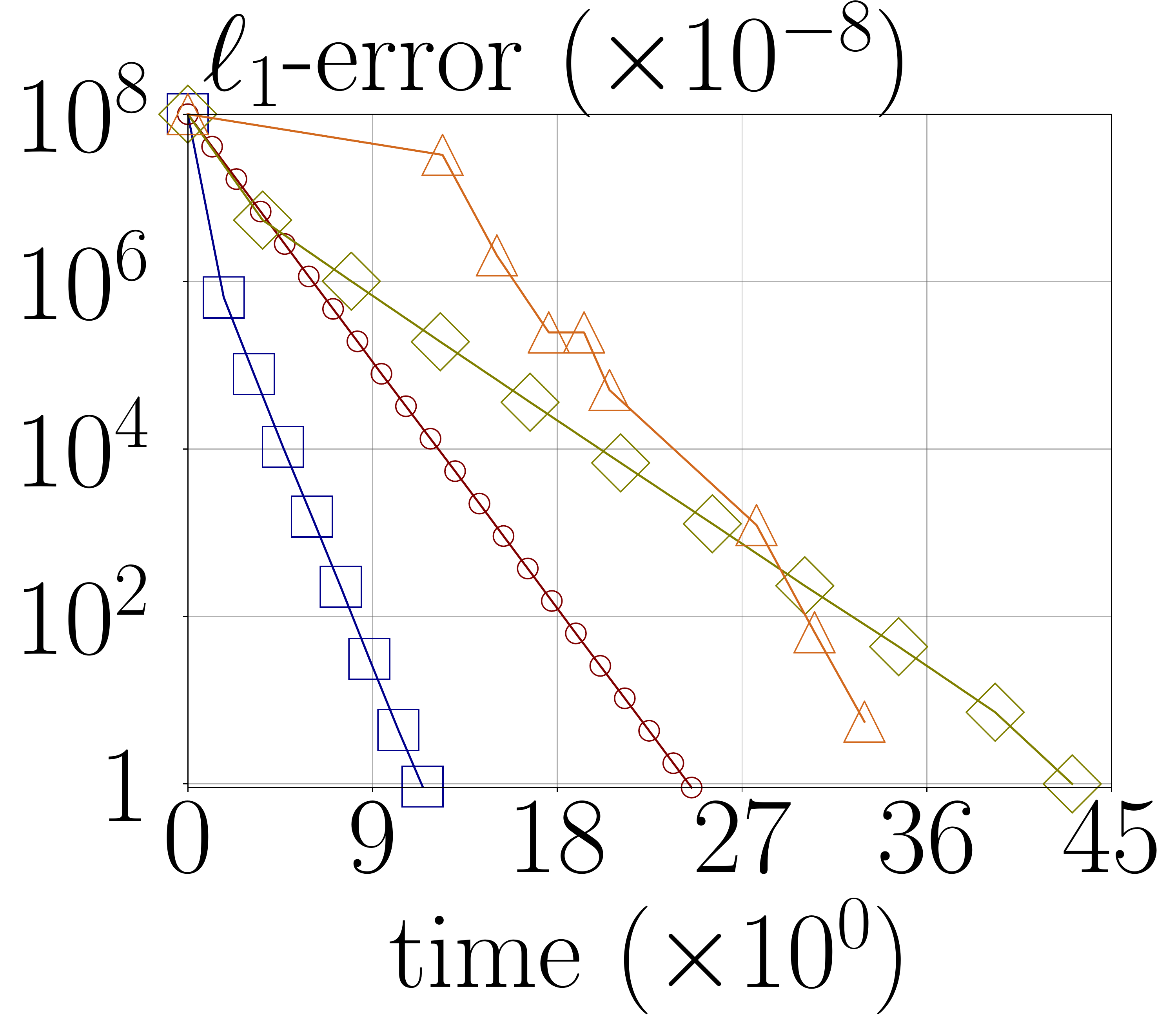} & 
   \hspace{-4.5mm}
	\includegraphics[width=0.16\linewidth]{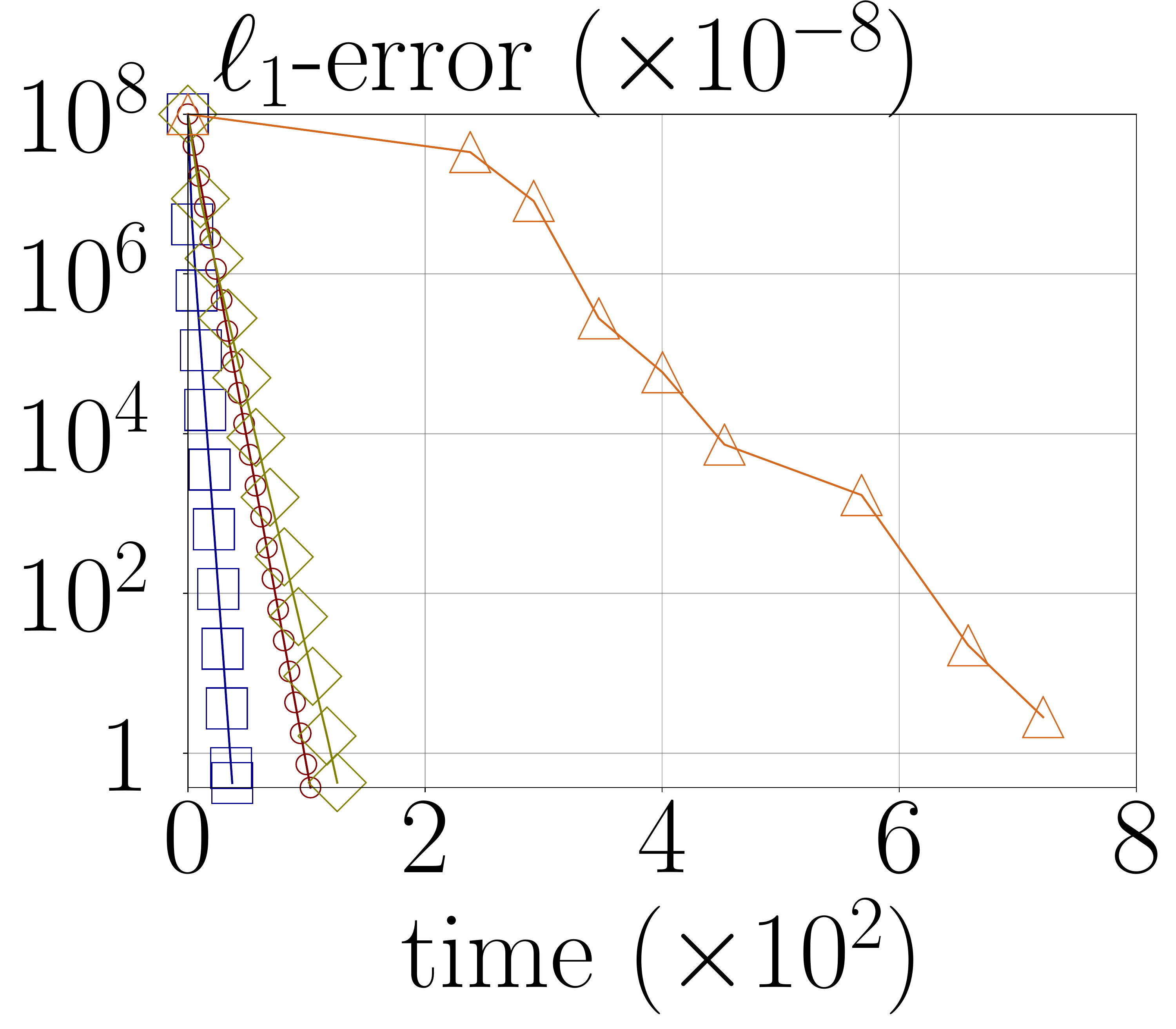} & 
   \hspace{-4.5mm}
   \includegraphics[width=0.16\linewidth]{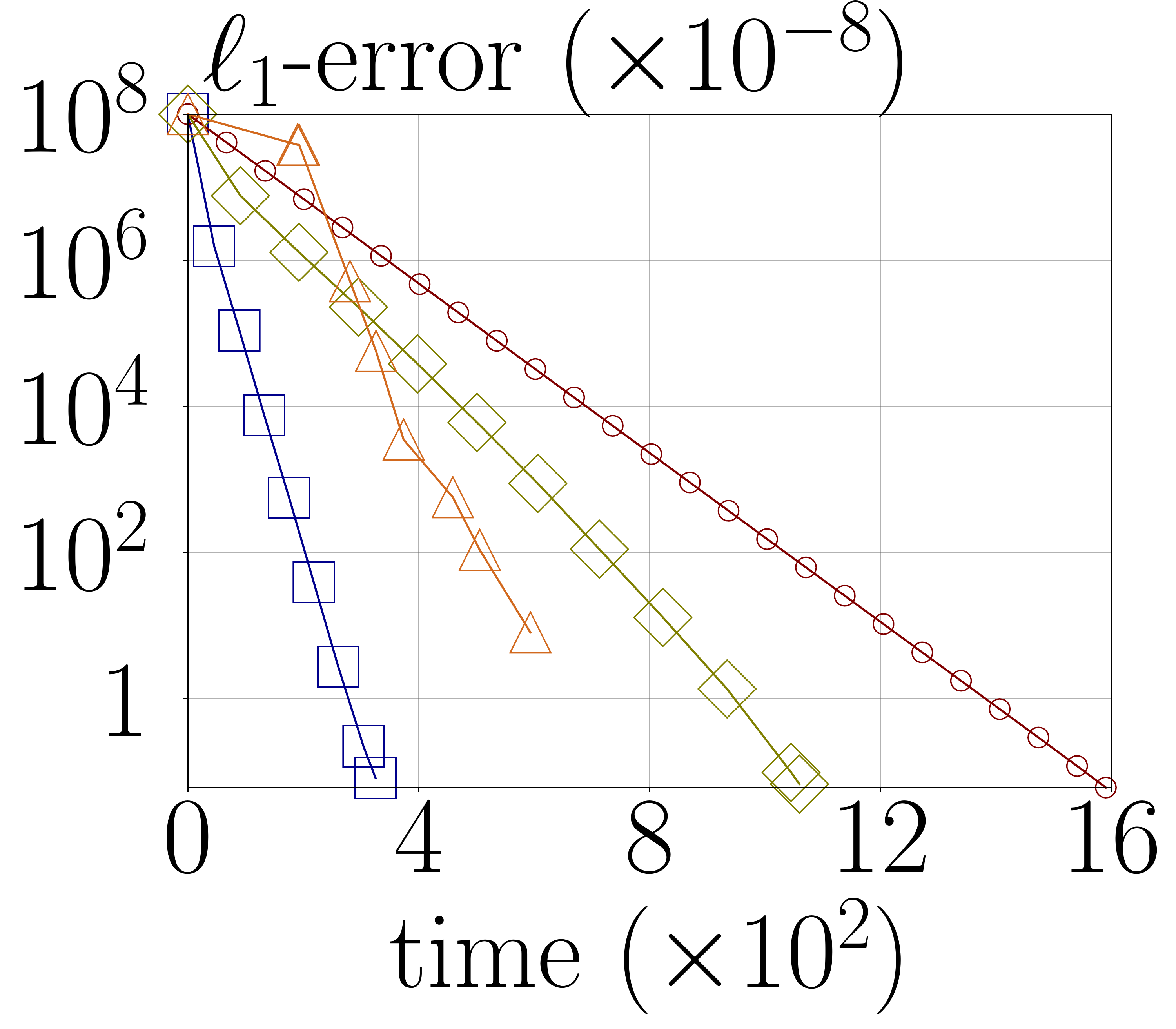} \\[-2mm]
	
	\hspace{-10mm} (a) {\em dblp} & \hspace{-4.6mm} (b) {\em web-Stanford} & \hspace{-4.6mm} (c) {\em pokec}  & \hspace{-4.6mm} (d)  {\em liveJournal}  & \hspace{-4.6mm}  (e)  {\em orkut}  & \hspace{-4.6mm} (f)  {\em twitter}  \\[1mm]
	
	\end{tabular}
}
	\vspace{-5mm}
     \caption{actual $\ell_1$-error v.s. execution time (seconds)} 
     \label{fig:time_vs_r_sum}
 	\vspace{-2mm}
 \end{figure*}

 \begin{figure*}
	 \vspace{-2mm}
\resizebox{.9\linewidth}{!}{
	 \begin{tabular}{cccccc} 
	\hspace{-6mm} 
    \includegraphics[width=0.18\linewidth]{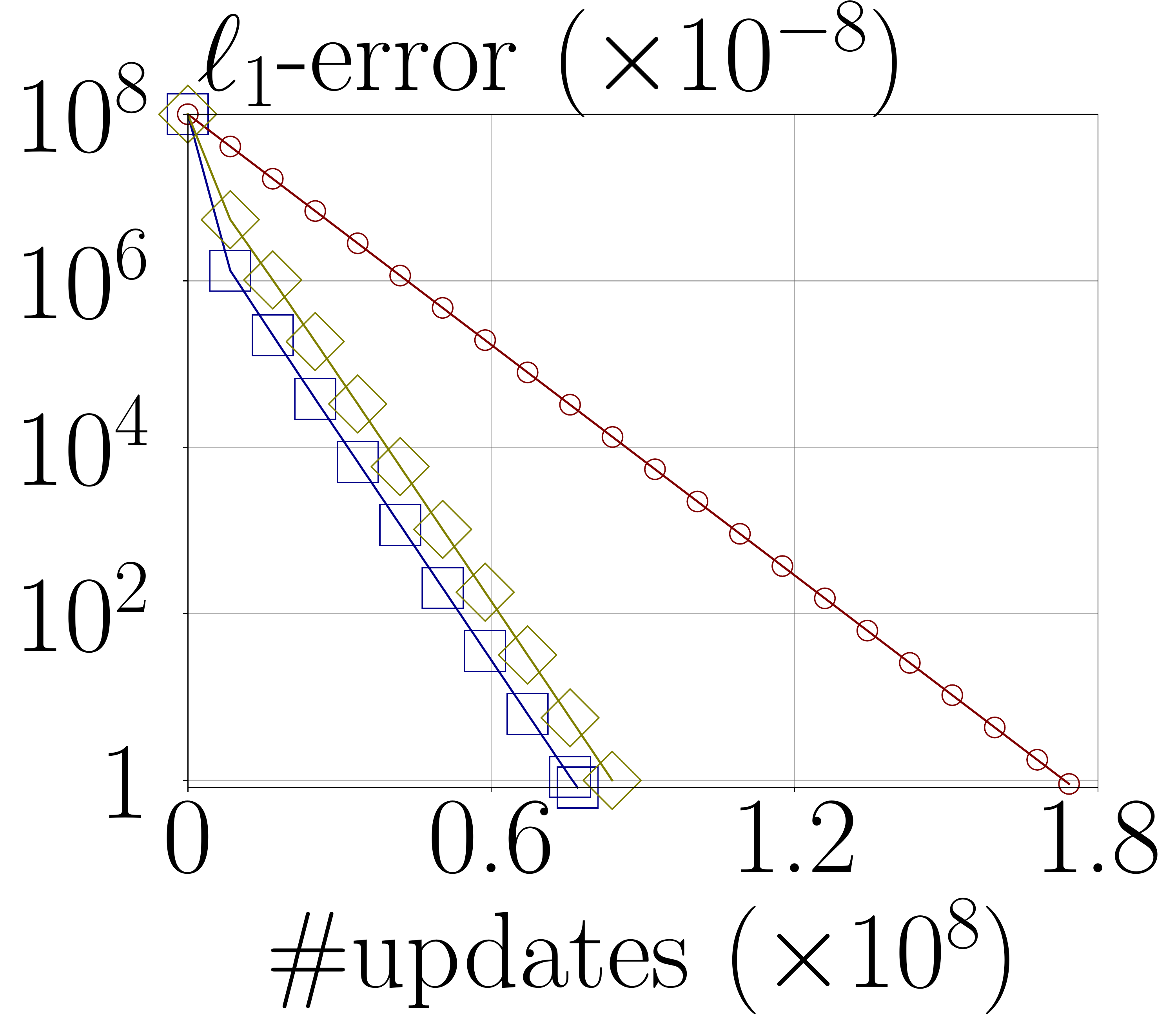} & 
   \hspace{-4.5mm}
	\includegraphics[width=0.18\linewidth]{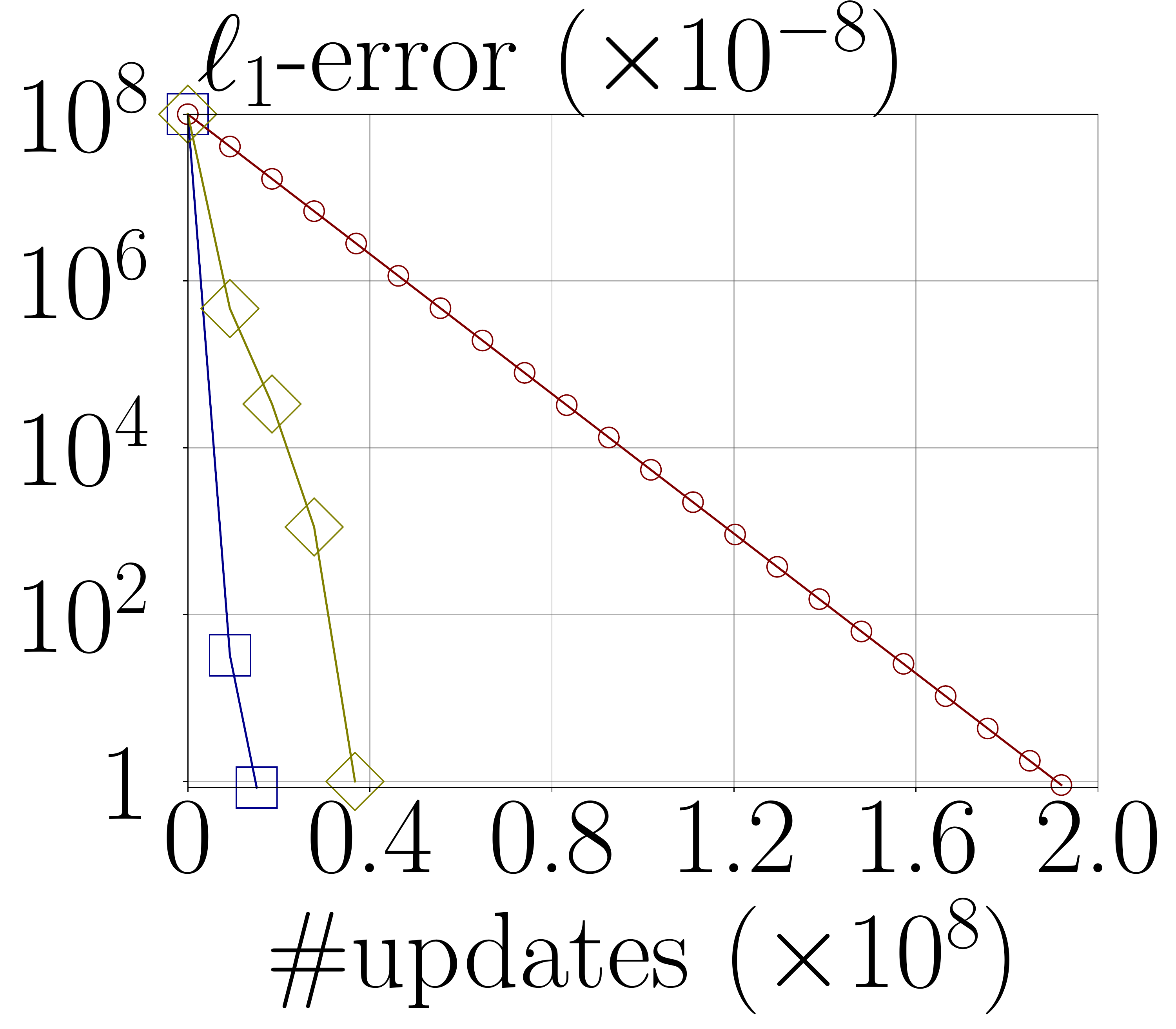} & 
   \hspace{-4.5mm}
	\includegraphics[width=0.18\linewidth]{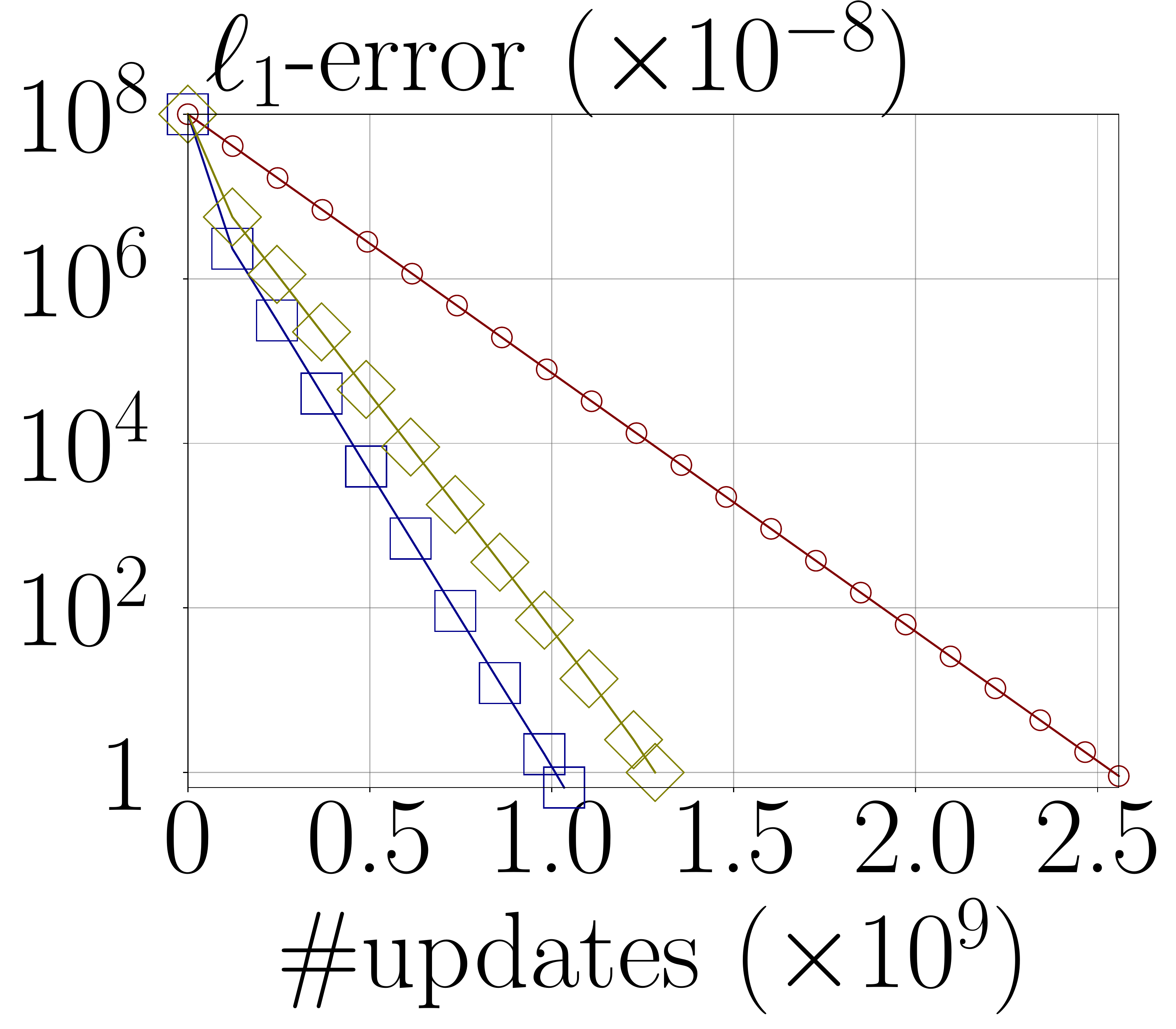} & 
   \hspace{-4.5mm}
	\includegraphics[width=0.18\linewidth]{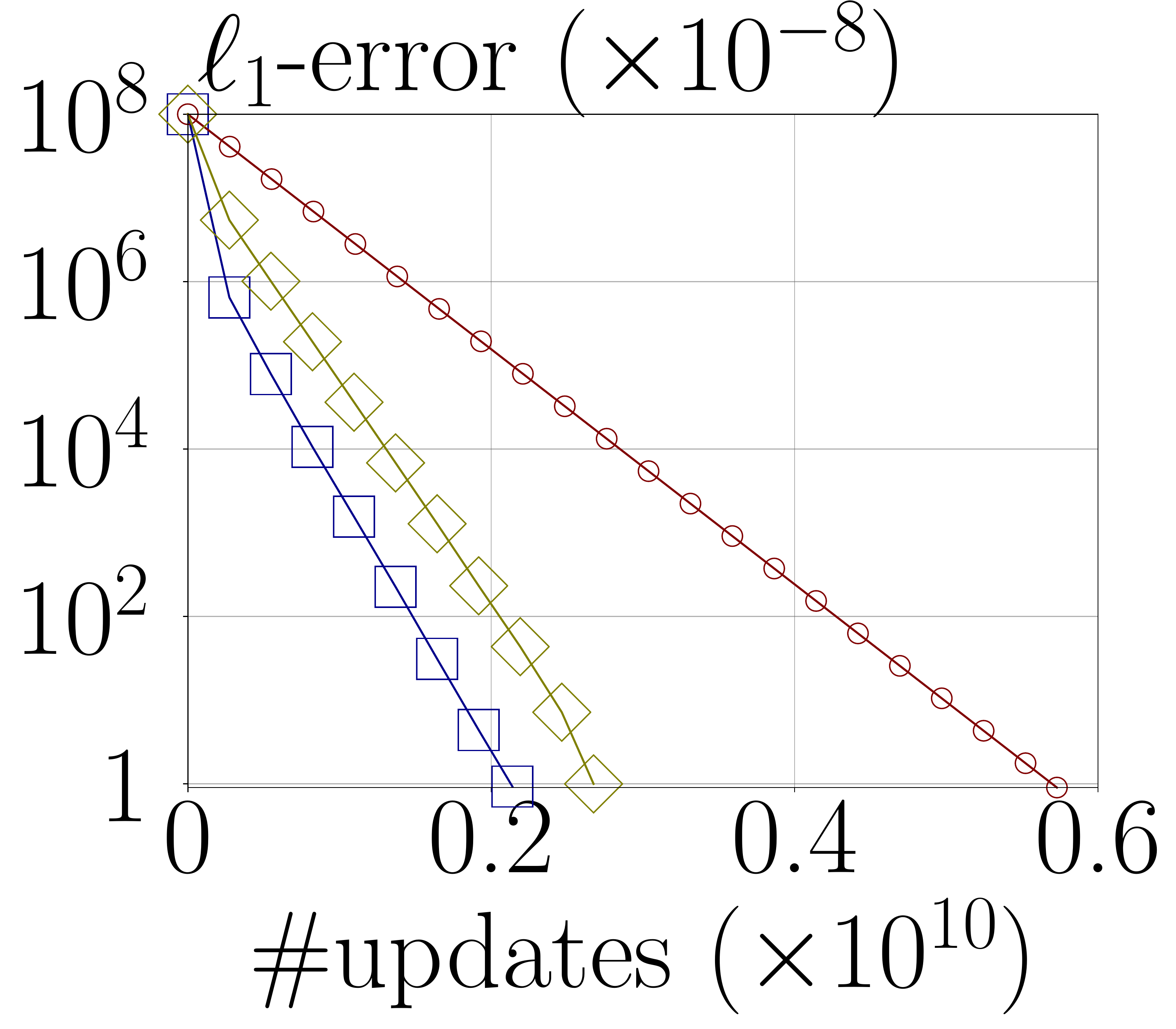} & 
   \hspace{-4.5mm}
	\includegraphics[width=0.18\linewidth]{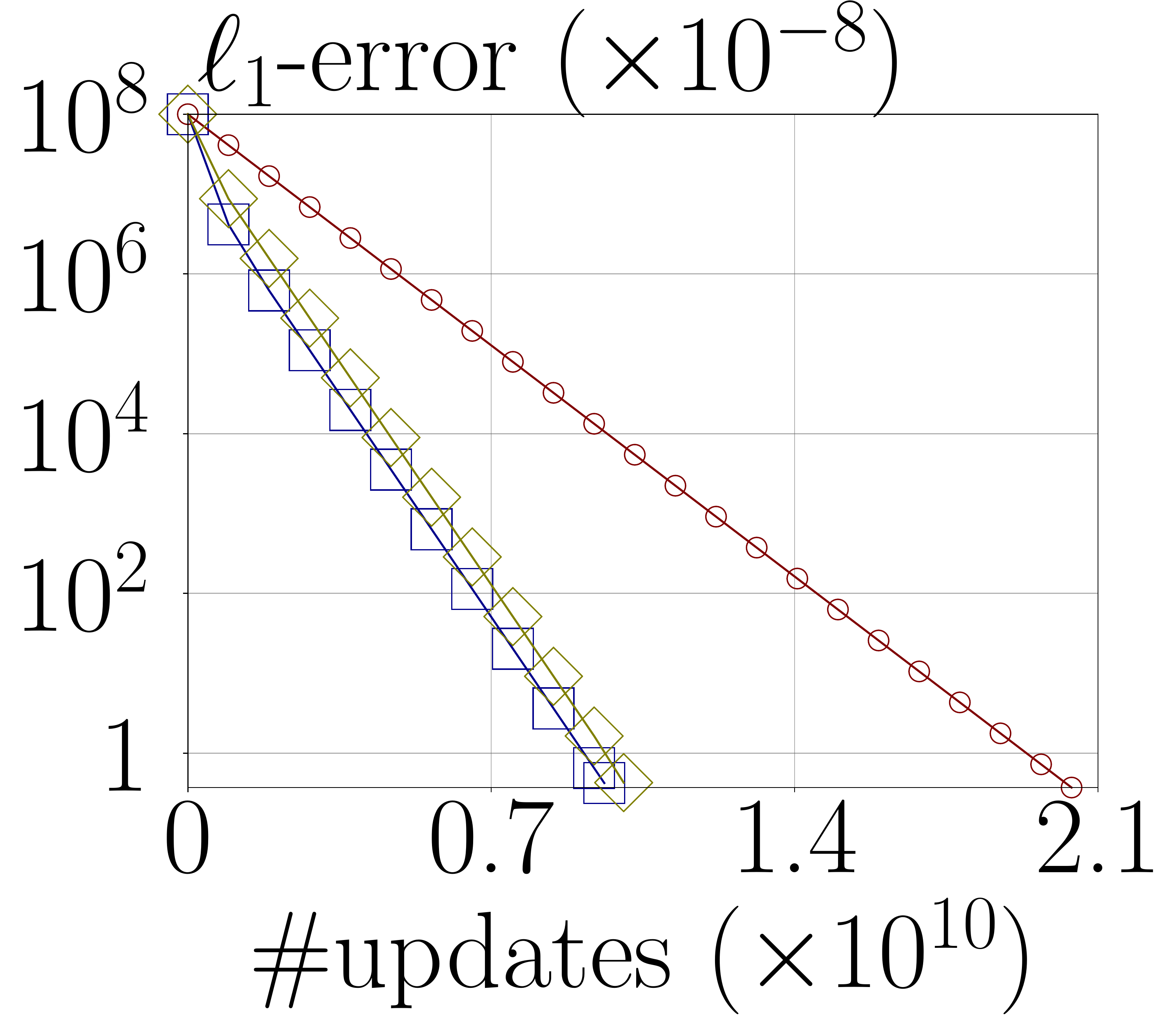} & 
   \hspace{-4.5mm}
   \includegraphics[width=0.18\linewidth]{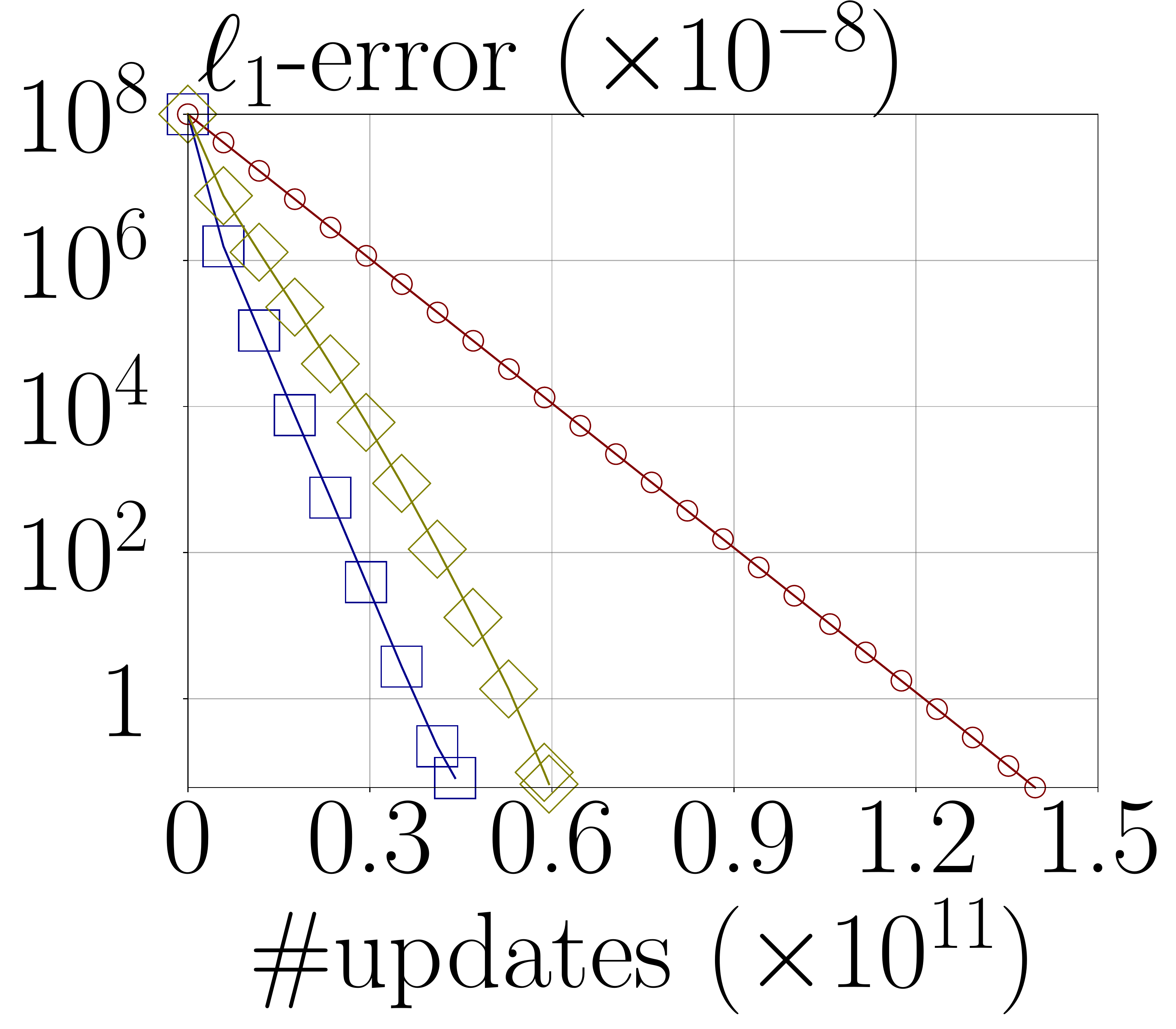} \\[-2mm]
	
	\hspace{-10mm} (a) {\em dblp} & \hspace{-4.6mm} (b) {\em web-Stanford} & \hspace{-4.6mm} (c) {\em pokec}  & \hspace{-4.6mm} (d)  {\em liveJournal}  & \hspace{-4.6mm}  (e)  {\em orkut}  & \hspace{-4.6mm} (f)  {\em twitter}  \\[1mm]
	
	\end{tabular}
}
	\vspace{-5mm}
     \caption{actual $\ell_1$-error  v.s. \#residue updates} 
     \label{fig:r_sum_vs_updates}
 	\vspace{-3mm}
 \end{figure*}

 \vspace{2mm}
\section{Experiments}\label{sec:exp}
 
In this section, we evaluate our proposed algorithms and verify our theoretical analysis with experiments. 

\vspace{1mm}
\noindent \textbf{Datasets}. We use six real datasets\footnote{~All these datasets could be found at \url{https://snap.stanford.edu/data/}}: 
\textit{DBLP}~\cite{YangL12}, 
\textit{Web Stanford (Web-St)}~\cite{LeskovecLDM09}, 
\textit{Pokec}~\cite{Takac12}, 
\textit{Live Journal (LJ)}~\cite{BackstromHKL06}, 
\textit{Orkut}~\cite{YangL12}, and 
\textit{Twitter}~\cite{KwakLPM10}.
These datasets have been commonly used in the experiments in the previous work \cite{WangTXYL16, WYXWY17, wei2018topppr, lin2020index, lofgren2015personalized}, on which the algorithm performance are considered as benchmarks.
While the graphs in {\em DBLP} and {\em Orkut} are un-directed, we replace each un-directed edge with two directed edges in both directions. 
For each dataset, we remove the isolated nodes, i.e., the nodes have no in-coming nor out-going edges; 
for the rest nodes, we relabel their id's with integers starting from $0$. 
Table~\ref{tab:datasets} shows the statistics of the datasets after the above cleaning process. 
Finally, in the experiments for evaluating the query efficiency, for each dataset, 
we perform queries on $30$ query source nodes generated uniformly at random for all the competitors and take the average query time.

\begin{table}[t]
\resizebox{\linewidth}{!}{%
\begin{tabular}{|l|c|c|c|c|c|c|}
\hline
\multicolumn{1}{|c|}{\multirow{3}{*}{Dataset}} & \multicolumn{3}{c|}{Index Size} & \multicolumn{3}{c|}{Construction Time} \\ \cline{2-7} 
\multicolumn{1}{|c|}{}                         & High-Prec.     & \multicolumn{2}{c|}{Approx.}  & High-Prec.       & \multicolumn{2}{c|}{Approx.}      \\ \cline{2-7}
\multicolumn{1}{|c|}{}                         & BePI     & FORA     & SpeedPPR  & BePI       & FORA       & SpeedPPR      \\ \hline
\textit{DBLP}                                  & 23.9MB   & 139MB    & 8.01MB    & 1.72       & 6.53       & 0.520         \\ \hline
\textit{Web-St}                                & 31.7MB   & 137MB    & 8.82MB    & 1.92       & 4.21       & 0.489         \\ \hline
\textit{Pokec}                                 & 1.13GB   & 1.24GB   & 118MB     & 75.4       & 248        & 16.2          \\ \hline
\textit{LJ}                                    & 2.32GB   & 3.31GB   & 263MB     & 185        & 612        & 38.8          \\ \hline
\textit{Orkut}                                 & 54.5GB   & 4.80GB   & 894MB     & 57988      & 1410       & 173           \\ \hline
\textit{Twitter}                               & 24.5GB   & 47.8GB   & 5.48GB    & 6180       & 19883      & 1256          \\ \hline
\end{tabular}
}
\caption{Index Size and Construction Time (in seconds)}
\label{tab:idx_size_and_time}
\hspace{-6mm}
\vspace{-8mm}
\end{table}

\vspace{1mm}
\noindent \textbf{Competitors.} 
There are two groups of competitors respectively for the experiments on high-precision and approximate SSPPR queries.
For the high-precision queries, we have the four competitors: $\powitr$, $\itrfwdpush$, $\powforpush$ and $\bepi$~\cite{jung2017bepi}: a state-of-the-art high-precision SSPPR algorithm which was reported that it outperforms most of (if not all) other existing works.
%
For the approximate queries, we compare the performance of the following competitors: $\speedppr$, $\speedppri$, $\fora$~\cite{YWXWLY019}, $\forai$~\cite{YWXWLY019}, and $\resacc$ \cite{lin2020index}: a most recent approximate SSPPR algorithm which was reported to have competitive performance comparing to $\fora$.

\vspace{1mm}
\noindent
{\bf Experiment Environment.}
All the experiments are conducted on a cloud based Linux 20.04 server with Intel 2.0 GHz CPU and 144GB memory. 
Except $\bepi$, all the competitors are implemented with C++, 
where 
the source code of the implementations of our algorithms can be found at here\footnote{~\url{https://github.com/wuhao-wu-jiang/Personalized-PageRank}} and 
the implementations of $\fora$, $\forai$ and $\resacc$ are open-source and provided by their respective authors.
Since only the MATLAB P-code\footnote{ A MATLAB file format that hides implementation details. } of $\bepi$ is released, 
we can only run $\bepi$ as a black box. 
All the C++ implementations are complied with GCC 9.3.0 with -O3 optimization.


\subsection{Evaluations of High-Precision SSPPR}

In this experiment, we evaluate the high-precision SSPPR algorithms. 
For $\powitr$, $\itrfwdpush$ and $\powforpush$, we set the $\ell_1$-error threshold $\lambda = \min \{10^{-8}, 1/ m\}$. 
$\bepi$ adopts a different error measurement, which is to compute the $\ell_2$ distance between the obtained results in two consecutive iterations, namely, $\|\hpis\itjnext - \hpis\itj\|_2= \sqrt{\sum_{v \in V} \left(\hpi\itjnext(s,v) - \hpi\itj(s,v))\right)^2}$;
when this $\ell_2$ distance is no more than a specified convergence parameter $\Delta$, it considers the current result $\hpis\itjnext$ converges and thus stops.
For $\bepi$, we set $\Delta = \min \{ 10^{-8}, 1/ m\}$.
It should be noted that under this setting of $\Delta$, 
the results obtained by $\bepi$ do not necessarily meet the requirement that 
the $\ell_1$-error (with respect to the ground truth $\pis$) is at most $\lmd$.
Therefore, its running time reported in the following experiments is an underestimate of $\bepi$'s actual time to achieve the $\ell_1$-error $\lmd$.

Moreover, among all these four competitors, $\bepi$ is the only one that requires pre-computed index.
Table~\ref{tab:idx_size_and_time} shows the pre-processing time and the index space consumption of $\bepi$~\footnote{~We save the pre-processing output in a .mat file and report file size as the index size.}.
$\bepi$ takes $57,988$ seconds (over $15$ hours) to compute the index on {\em Orkut} and $6,180$ seconds on {\em Twitter}, which consume $54.5$GB and $24.5$GB space, respectively. 
This is because $\bepi$ is a matrix-based algorithm and thus affected heavily by the density of the graph.
As shown in Table~\ref{tab:datasets}, the average degree of {\em Orkut} is $76.3$ while the one of {\em Twitter} is $35.3$.
Hence, the pre-processing time (rsp. index size)  of the former is significantly longer (rsp. larger) than that of the latter.

\vspace{1mm}
\noindent
{\bf Average Overall Query Time.}
Figure~\ref{fig:time_vs_datasets} reports the average overall running time of all the algorithms for the randomly generated query source nodes over all the datasets. 
The running time of $\powforpush$ is the smallest on all datasets except \textit{DBLP}, the dataset with fewest edges among the six, where 
$\powforpush$ is slightly worse than $\bepi$.
It is worth pointing out that even taking the advantages of a significant pre-processing (whose cost is not counted in the query time), $\bepi$ is still $2\times$ to $4\times$ slower than our $\powforpush$ in general. 
In particular, on {\em Orkut}, $\powforpush$ is $17\times$ faster than $\bepi$ even without any pre-processing or index.
This shows a significant superiority of $\powforpush$ over $\bepi$.
On the other hand, $\itrfwdpush$ and $\powitr$ have similar performance over all the datasets.
This is reasonable because they are essentially equivalent and having the same time complexity.
Interestingly, as  $\powforpush$ is carefully designed to incorporate both the strengths of $\powitr$ and $\itrfwdpush$, 
$\powforpush$ outperforms both of them in all cases.

\begin{figure*}
	 \vspace{-2mm}
	 \includegraphics[width=0.7\linewidth]{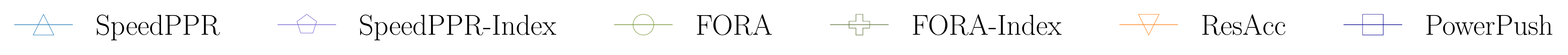} \\[-1mm]
\resizebox{.9\linewidth}{!}{
	 \begin{tabular}{cccccc} 
	\hspace{-6mm} 
	    \includegraphics[width=0.18\linewidth]{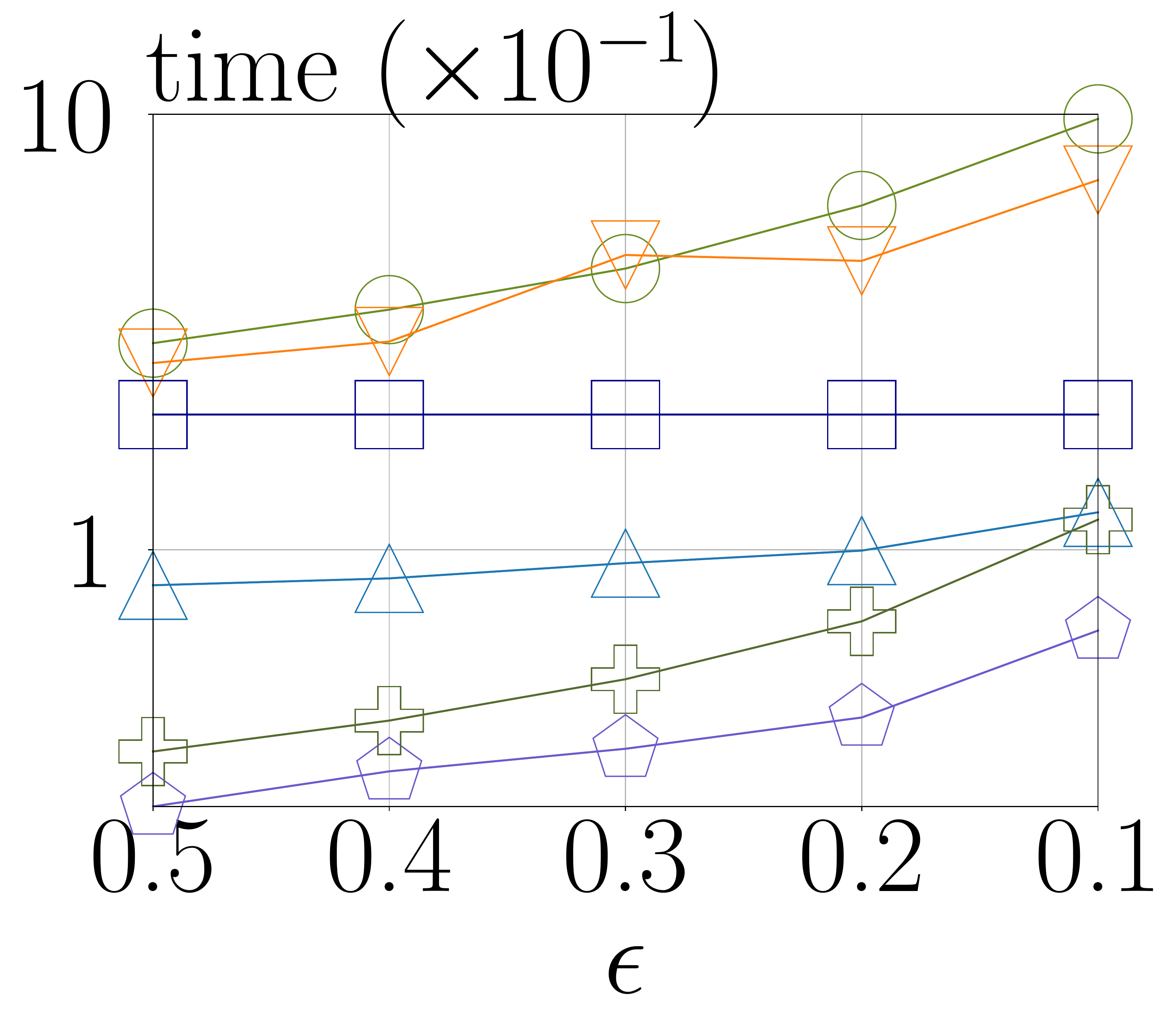} & 
   \hspace{-4mm}
\includegraphics[width=0.18\linewidth]{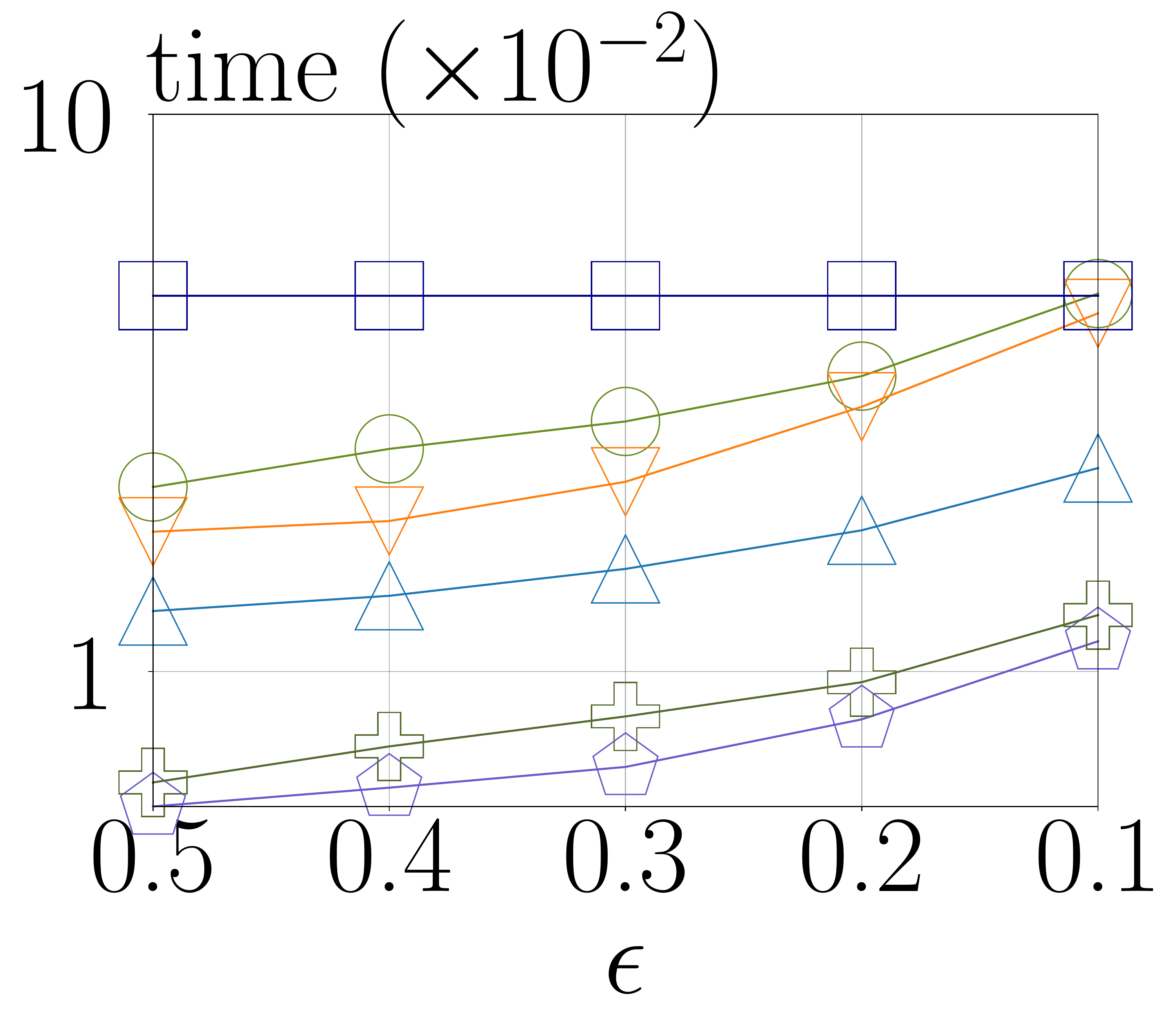} & 
   \hspace{-4mm}
	\includegraphics[width=0.18\linewidth]{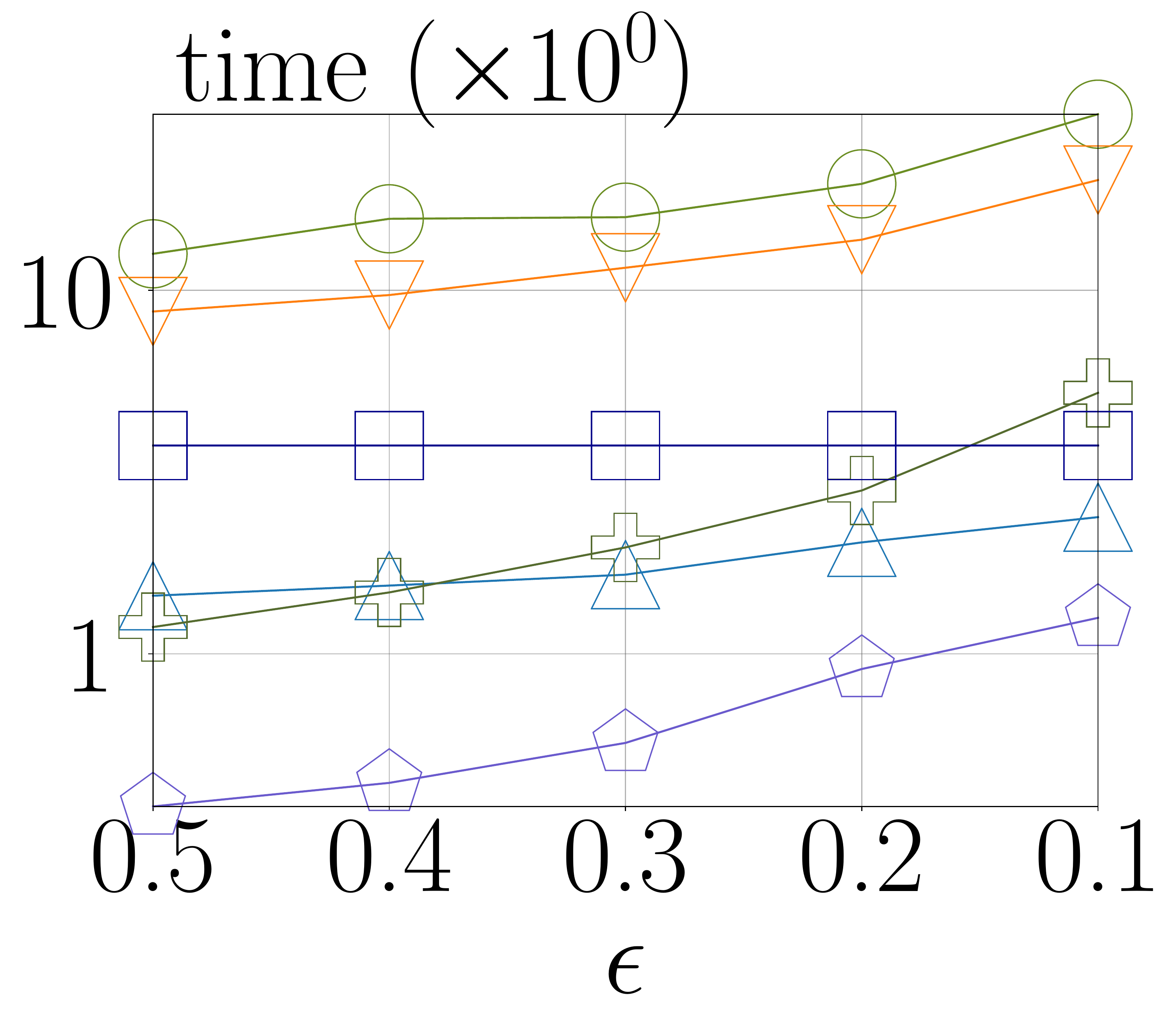} & 
   \hspace{-4mm}
	\includegraphics[width=0.18\linewidth]{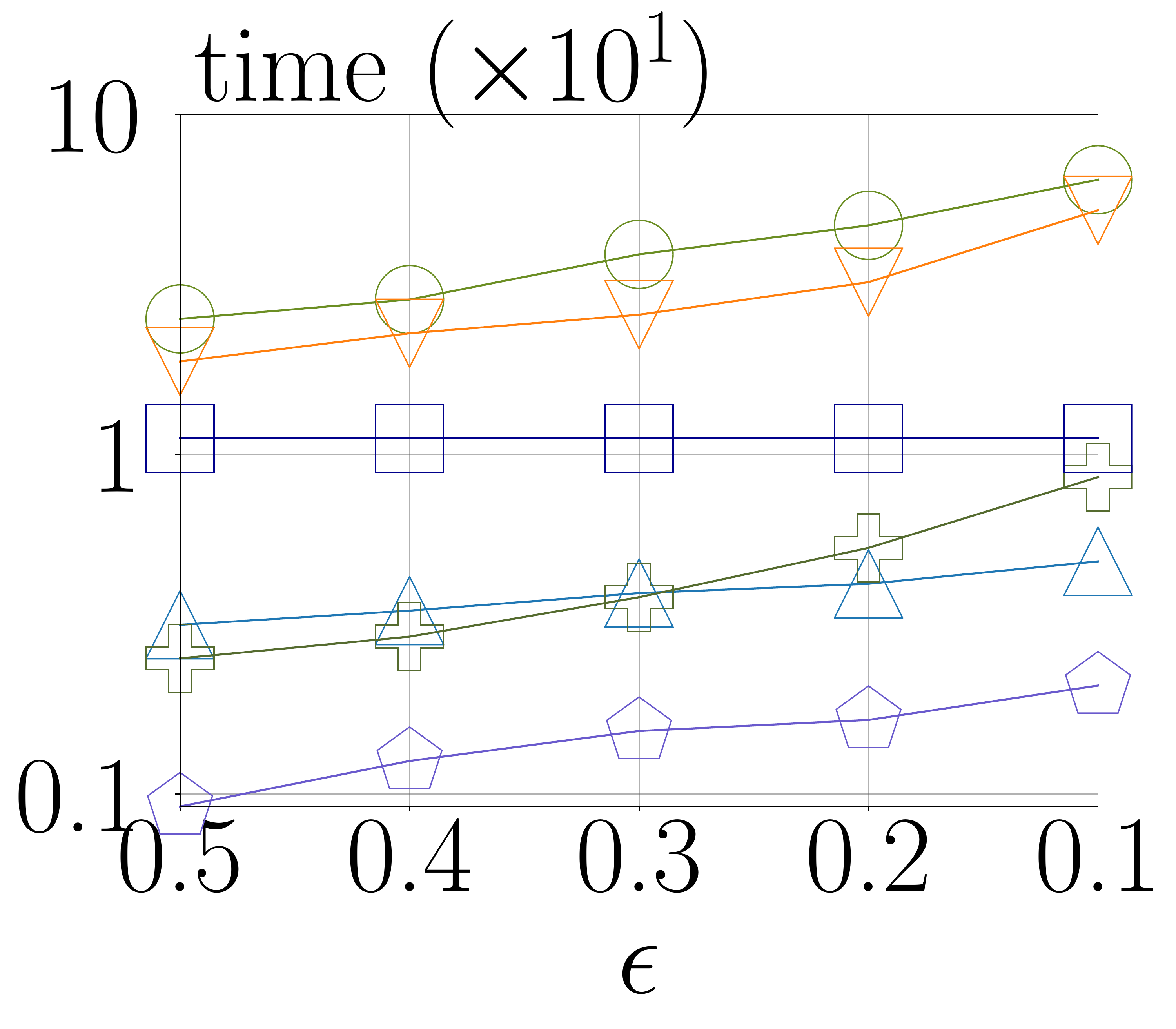} & 
   \hspace{-4mm}
	\includegraphics[width=0.18\linewidth]{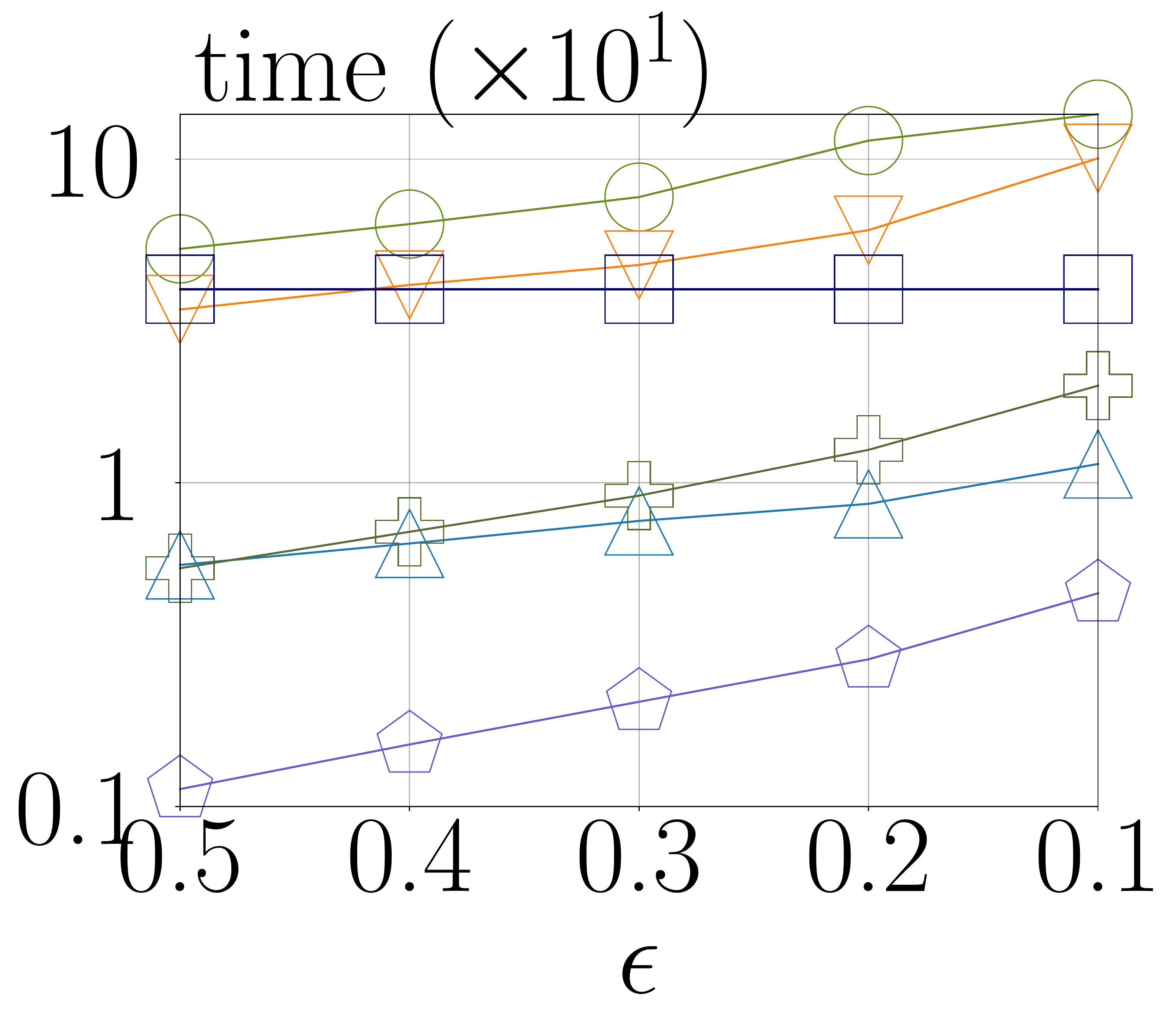} & 
   \hspace{-4mm}
   \includegraphics[width=0.18\linewidth]{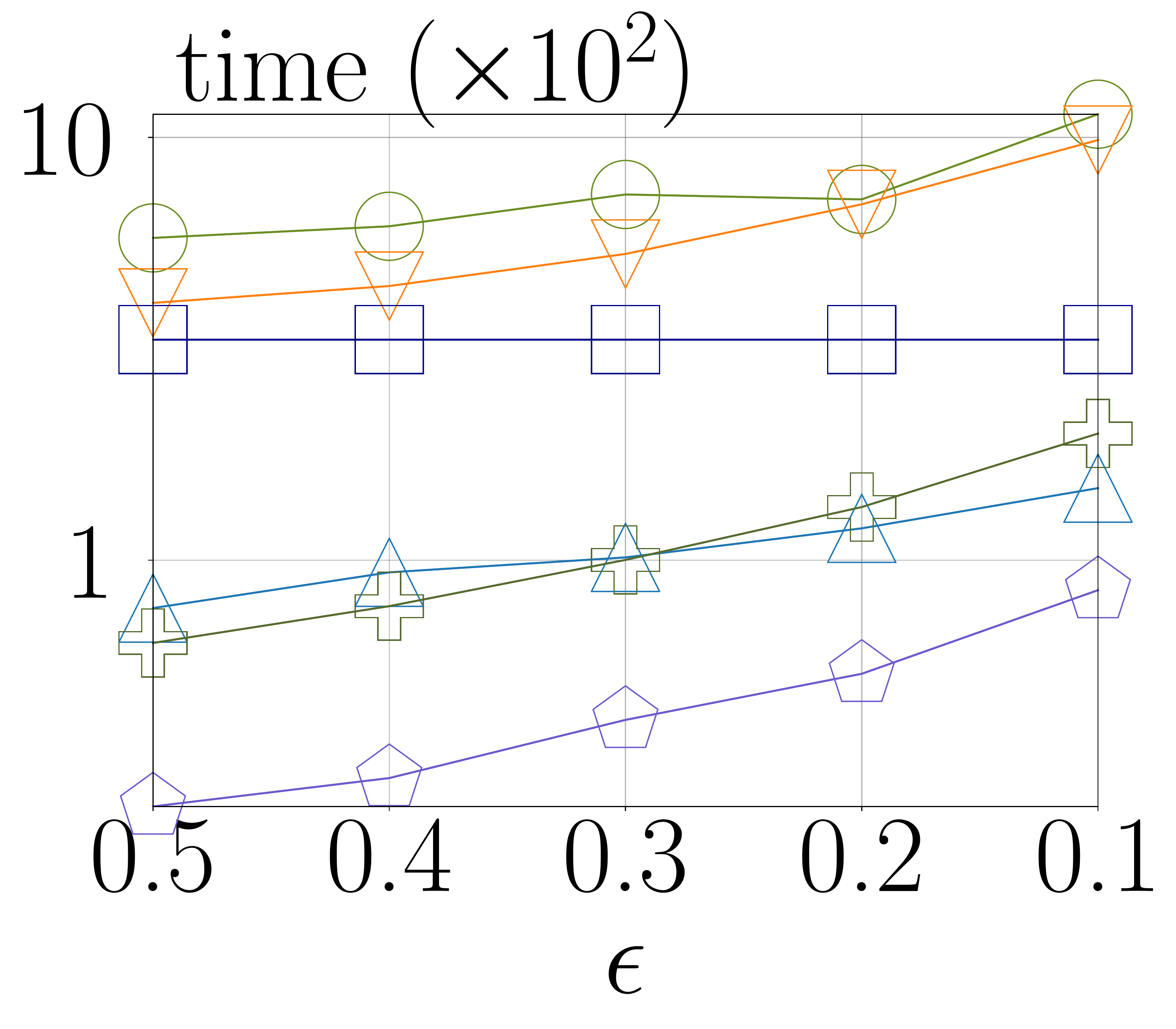} \\ [-2mm]
	\hspace{-10mm} (a) {\em dblp} & \hspace{-4.6mm} (b) {\em web-Standford} & \hspace{-4.6mm} (c) {\em pokec}  & \hspace{-4.6mm} (d)  {\em liveJournal}  & \hspace{-4.6mm}  (e)  {\em orkut}  & \hspace{-4.6mm} (f)  {\em twitter}  \\[1mm]
	\end{tabular}
}
	\vspace{-5mm} 
     \caption{running time (seconds) v.s. $\epsilon$} 
     \label{fig:time_vs_eps}
     \vspace{-2mm}
 \end{figure*}

 \begin{figure*}
	 \vspace{-2mm}
\resizebox{.9\linewidth}{!}{
	 \begin{tabular}{cccccc} 
	\hspace{-6mm} 
    \includegraphics[width=0.18\linewidth]{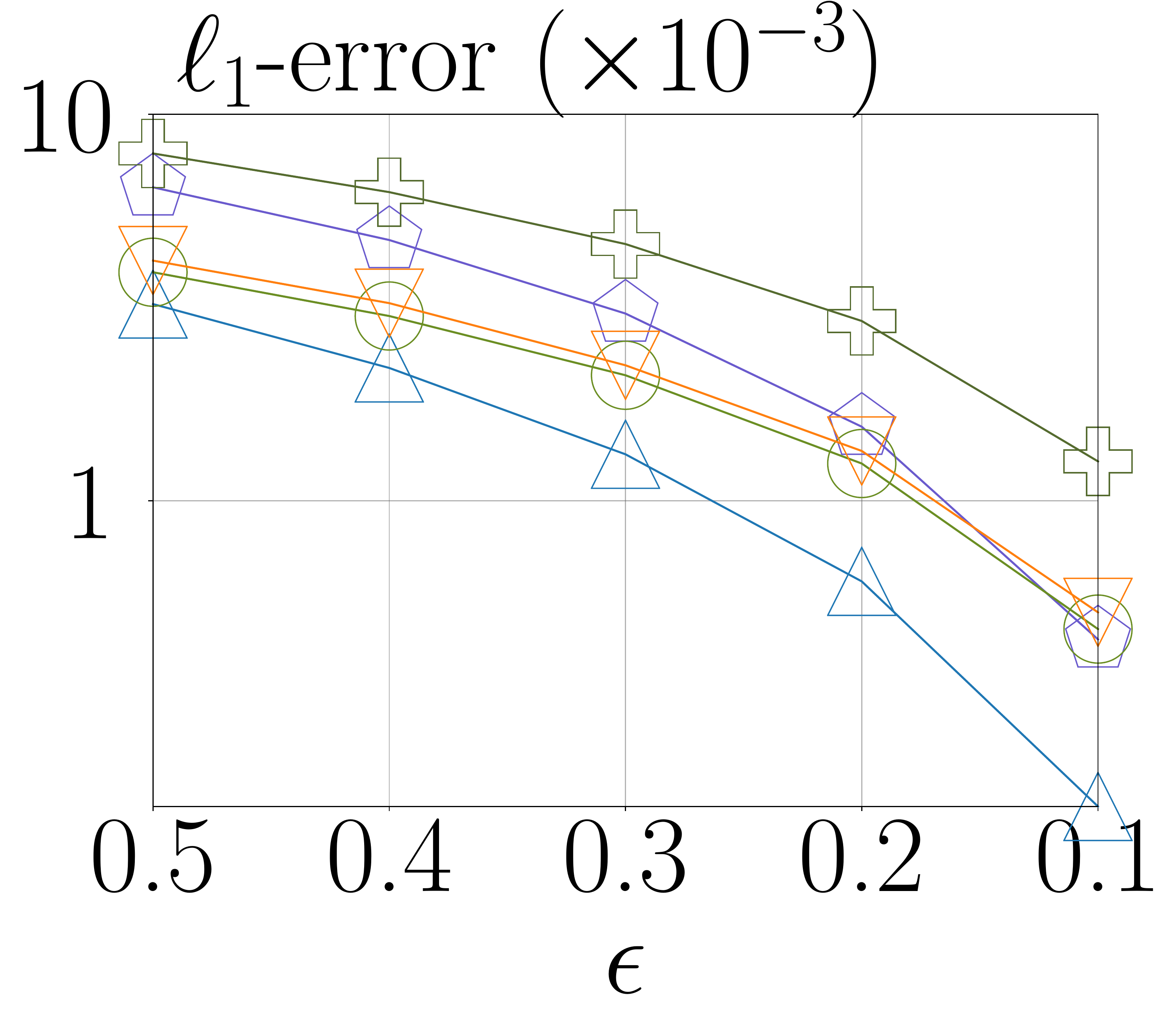} & 
   \hspace{-4mm}
	\includegraphics[width=0.18\linewidth]{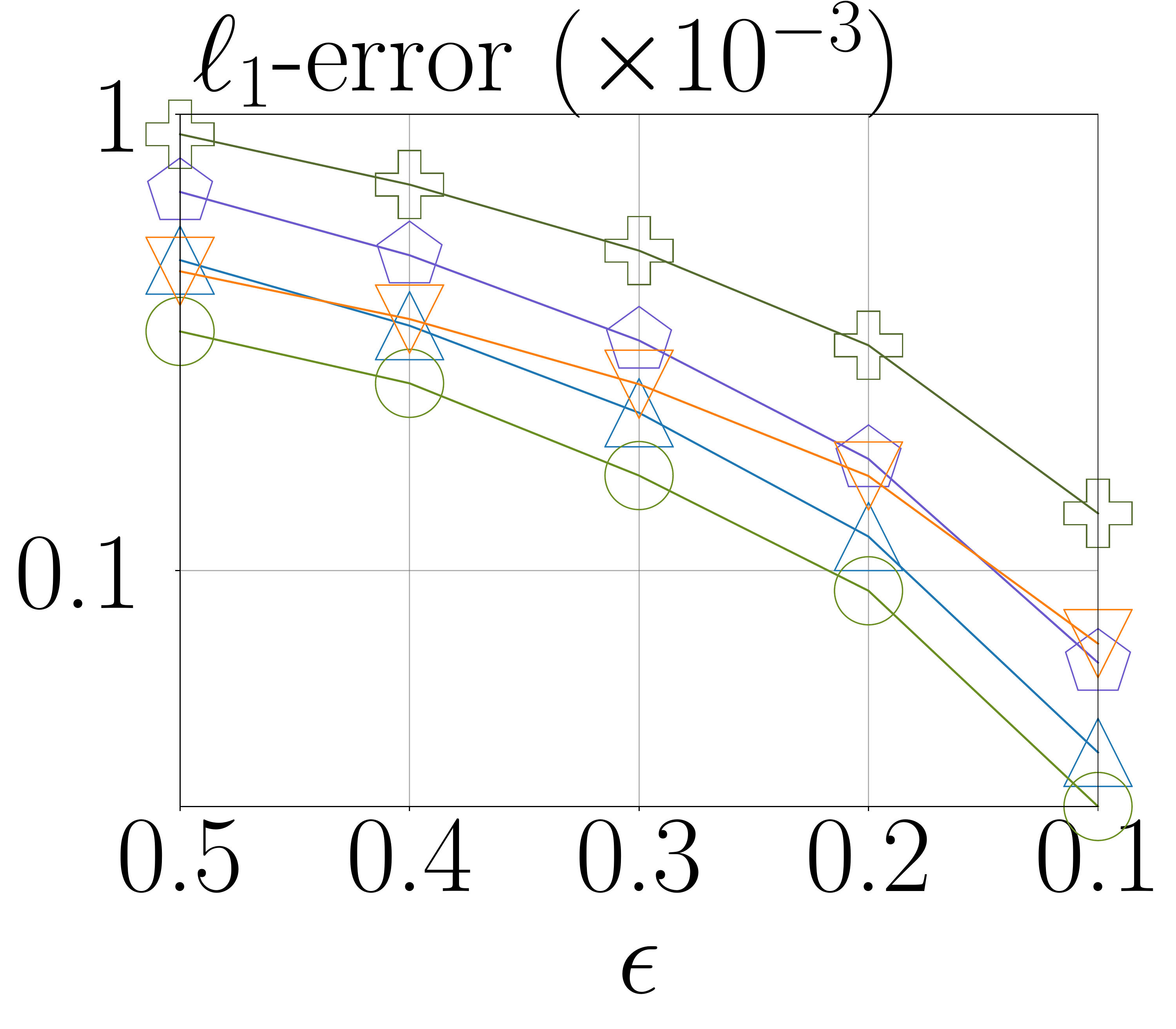} & 
   \hspace{-4mm}
	\includegraphics[width=0.18\linewidth]{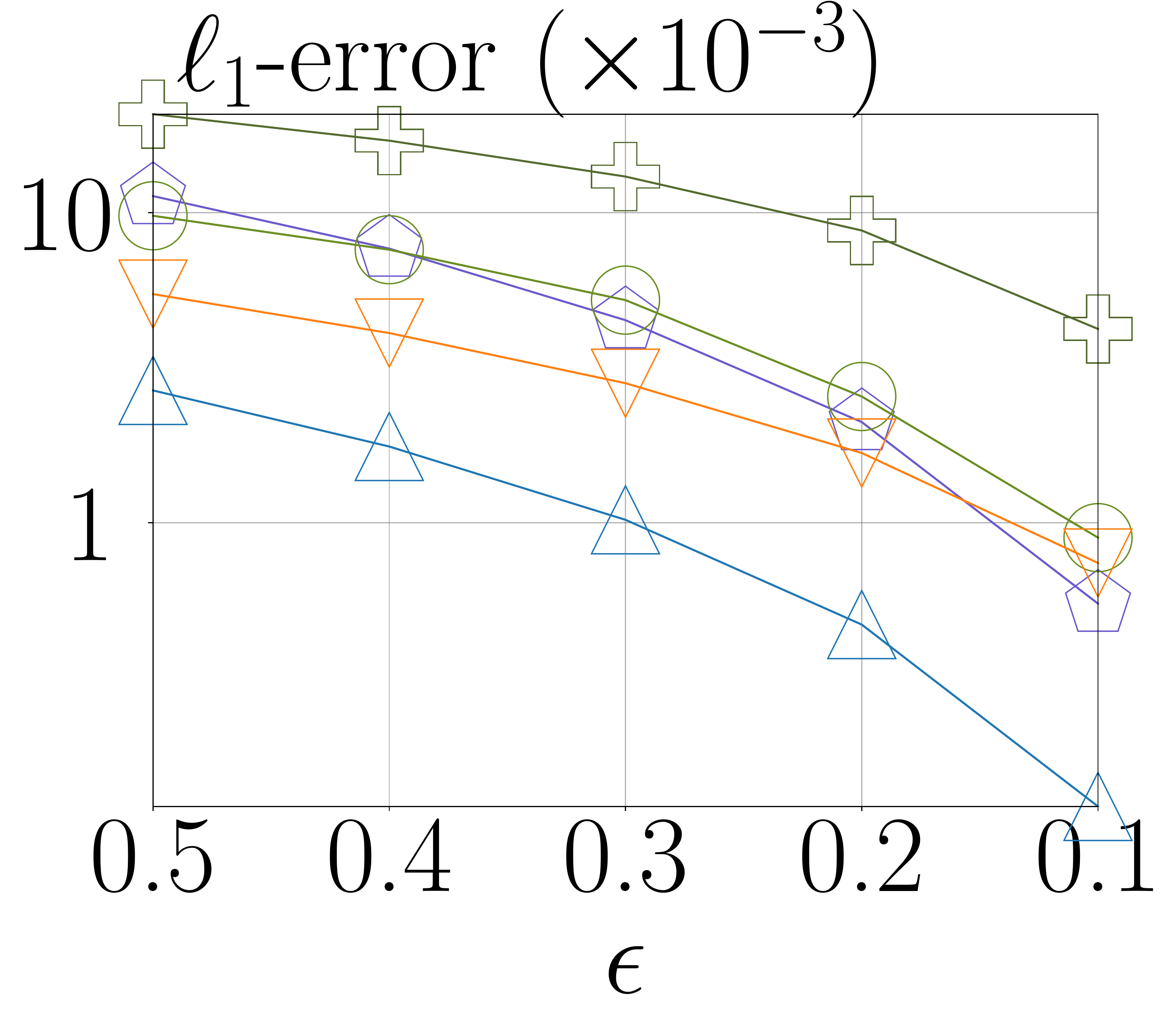} & 
   \hspace{-4mm}
	\includegraphics[width=0.18\linewidth]{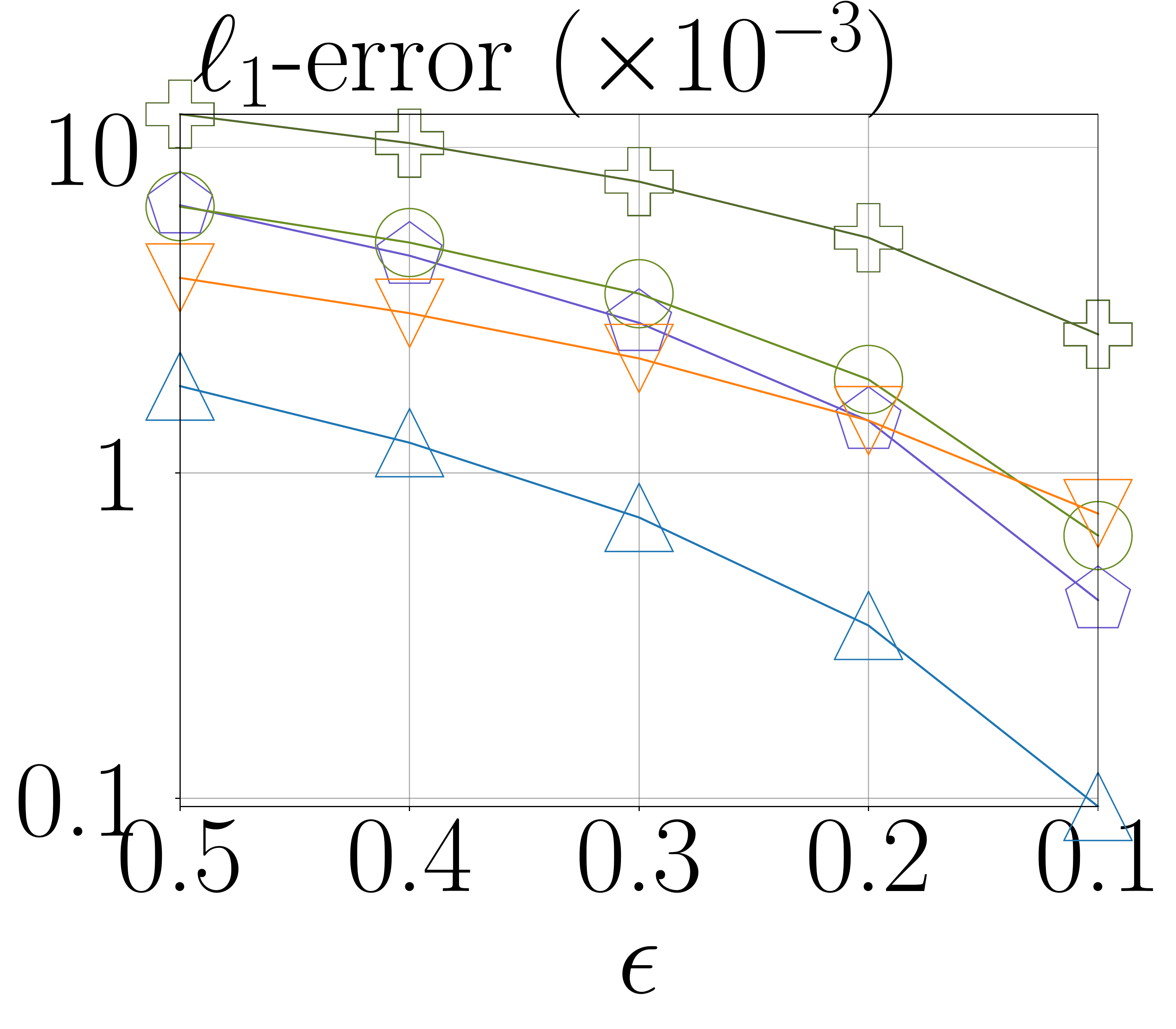} & 
   \hspace{-4mm}
	\includegraphics[width=0.18\linewidth]{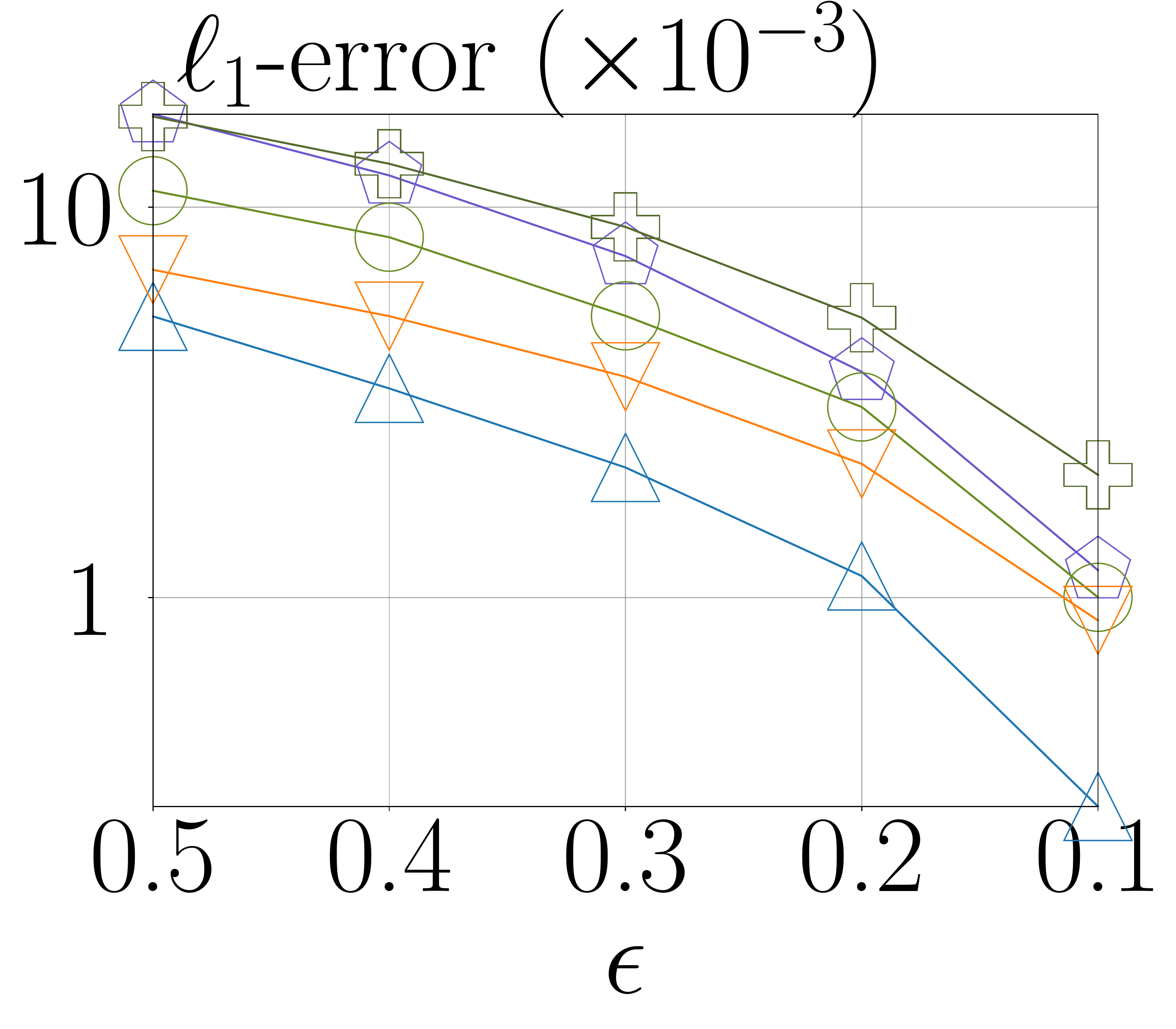} & 
   \hspace{-4mm}
   \includegraphics[width=0.18\linewidth]{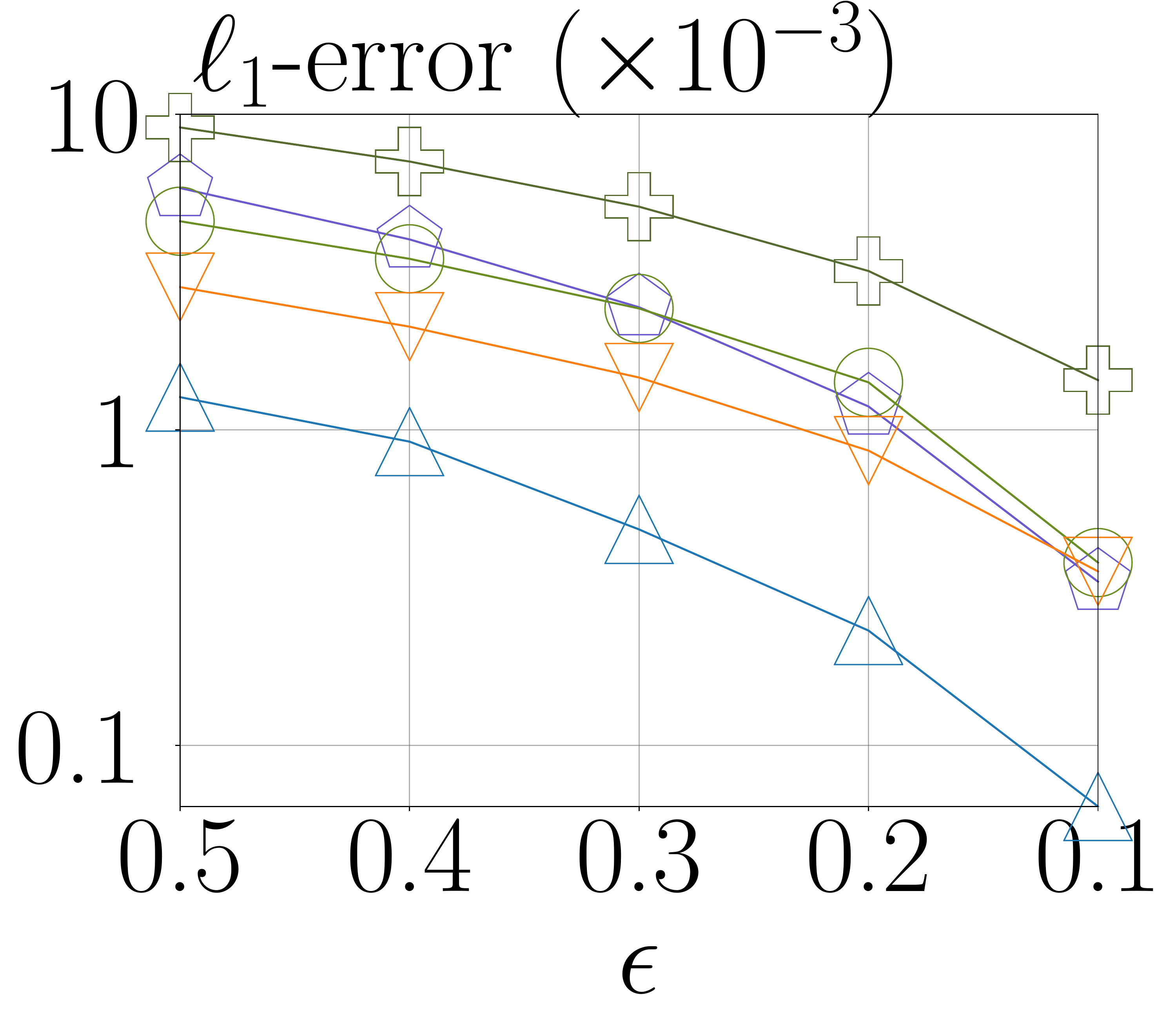} \\[-2mm]

	\hspace{-10mm} (a) {\em dblp} & \hspace{-4.6mm} (b) {\em web-Standford} & \hspace{-4.6mm} (c) {\em pokec}  & \hspace{-4.6mm} (d)  {\em liveJournal}  & \hspace{-4.6mm}  (e)  {\em orkut}  & \hspace{-4.6mm} (f)  {\em twitter}  \\[1mm]
	
	\end{tabular}
}
\vspace{-5mm} 
     \caption{actual $\ell_1$-error  v.s. $\epsilon$} 
     \label{fig:l1_vs_eps}
     \vspace{-3mm}
 \end{figure*}

\vspace{1mm}
\noindent
{\bf Actual $\ell_1$-Error v.s. Execution Time.}
Figure~\ref{fig:time_vs_r_sum} shows the {\em actual} $\ell_1$-error $\rsum$ (in log scale) versus the execution time of all the competitors. 
In this experiment, we take the query that incurs the median running time (among the 30 queries) of $\powforpush$ on each dataset as reference.
Each of these diagrams is plotted based on the execution with the corresponding  median query source node.
Except $\bepi$, 
the data points in the diagrams of each algorithm are plotted for the moments of every $4\cdot m$ edge pushing's (where each push operation on $v$ is counted as $d_v$ edge pushing's).
As $\bepi$ adopts different error measurement, we take a decreasing sequence of $\Delta$ values until $\Delta = \min\{1/m, 10^{-8}\}$  for $\bepi$ 
and compute the corresponding $\ell_1$-error for the obtained results, and plot these $\ell_1$-errors along with the corresponding execution time. 
In the diagrams, some curves of $\bepi$ do not touch the bottom; in these cases, 
$\bepi$ did not manage to obtain an estimation within $\ell_1$-error $\lmd$ under the corresponding parameter setting.

There are three crucial observations from Figure~\ref{fig:time_vs_r_sum}.
First, $\powforpush$ has the fastest convergence speed on all datasets, where it outperforms $\bepi$ by 
(i) an order of magnitude on {\em Orkut}, (ii) roughly two to four times on the other datasets except {\em DBLP}, and  
(iii) having roughly the same running time on {\em DBLP}. 
This is consistent with our observation from Figure~\ref{fig:time_vs_datasets}.
Second, except $\bepi$, the curves of the other three algorithms are pretty straight with the log-scale y-axis.
This implies that their $\ell_1$-errors decrease in an exponential speed with running time, and thus it matches
their $O(m \cdot \log \frac{1}{\lmd})$ time complexity.
Third, $\powitr$ has a faster convergence speed than $\itrfwdpush$ on four out of six datasets.
This is a bit counter-intuitive at the first glance.
But the reason for this is that after a few iterations, there would be a large number of active nodes.
In this case, the global sequential scan performs better than the random access in $\itrfwdpush$. 
This shows the importance of combining the global and local approach in $\powforpush$.

\vspace{1mm}
\noindent
{\bf Actual $\ell_1$-Error v.s. \# of Residue Updates.}
We further investigate the effectiveness of the push operations in the algorithms. 
Figure~\ref{fig:r_sum_vs_updates} 
demonstrates the $\ell_1$-error (in log scale) with respect to the number of edge pushing's, that is the number of residue updates.
Note that $\bepi$ is not applicable to this experiment, as we have no access to the operation number during its execution.
Except the first few updates, the log-scale $\ell_1$-errors of both $\itrfwdpush$ and $\powforpush$  decreases linearly.
This complies with our theoretical analysis. 
As expected, the pushes of $\itrfwdpush$ are more effective than those in $\powitr$, because they are performed in an  \textit{asynchronous} manner. 
Among the three algorithms, the proposed $\powforpush$ requires the least number of residue updates (to achieve the same $\ell_1$-error) in most datasets. 
This is because the {\em dynamic threshold} optimization enables $\powforpush$ to ``accumulate'' the residues of the nodes before pushing. 
And thus, it further reduces the number of the push operations.
Of interest is \textit{Orkut}, in which $\powforpush$ performs similar number of updates as $\itrfwdpush$. 
However, as shown in Figure~\ref{fig:time_vs_r_sum}, $\powforpush$ requires much less time than $\itrfwdpush$ on the same dataset. 
The reason is that the \textit{global sequential scan} technique makes the memory access pattern in $\powforpush$ more cache-friendly and hence more efficient to perform pushes. 
Similar observation can also be found in the comparison between $\powitr$ and $\itrfwdpush$ on {\em Orkut}, where
$\powitr$ performs a much larger number of operations but it achieves a similar execution time as $\itrfwdpush$'s.

\vspace{-2mm}
\subsection{Evaluations of Approximate SSPPR}

	
	
 
Next, 
we evaluate the approximate SSPPR algorithms 
against different $\eps$ values from $0.1$ to $0.5$, 
and report their running time as well as the solution quality in terms of $\ell_1$-error. 
For the index version of $\fora$, we generate its index with the smallest $\epsilon$ in consideration, i.e., $\epsilon = 0.1$, 
and re-use it for other $\epsilon$'s. 
For the index-based $\speedppr$, its index size does not depend on $\epsilon$. 
As shown in Table~\ref{tab:idx_size_and_time}, $\speedppr$ outperforms $\fora$ in both pre-processing time and index size by an order of magnitude.

\vspace{1mm}
\noindent
{\bf Running Time v.s. $\eps$.}
Figure \ref{fig:time_vs_eps} shows the running time (in log scale) of 
all the competitors over the six datasets. 
Note that we deliberately include our high-precision algorithm $\powforpush$ in these diagrams as a base line. 
Interestingly, it shows comparable or even better performance comparing to the state-of-the-art index-free approximate algorithms ($\fora$ and $\resacc$) on some datasets. 
Furthermore, observe that $\speedppri$ demonstrates superior performance over all datasets. 
The index-free version of $\speedppr$ is slightly slower than $\forai$. 
Indeed, 
except for the two smallest datasets, 
the efficiency of $\speedppr$ is comparable or even better than that of $\forai$ with small $\epsilon$'s. 
Both $\speedppr$ and $\speedppri$ show a linear increase on the running time (in log scale), 
especially on \textit{Orkut} and \textit{Twitter}. 


\vspace{1mm}
\noindent
{\bf Actual $\ell_1$-Error v.s. $\eps$.}
Finally, we study the solution quality of the approximate algorithms. 
Figure~\ref{fig:l1_vs_eps} shows the $\ell_1$-error with respect to the ground truth $\pis$ which is computed with $\powforpush$ by setting $\lmd = 10^{-17}$, the highest possible precision for the data type {\em double} in C++.  
Except on the dataset \textit{web-Stanford}, $\speedppr$ offers the best solution quality. 
When $\epsilon$ is small, its solution quality could be an order of magnitude better than other algorithms. 
This is impressive, considering it just takes comparable running time of $\forai$. 
Another observation is that both $\speedppri$ and $\forai$ provide inferior solutions compared to the index-free algorithms. 
The reason for this is that the index-based algorithms tend to use more random walks as the walks can be performed with a relatively small cost.
These algorithms  thus spend less time on the local push phase, which actually computes the estimation {\em deterministically}.
As a result, the random walks are performed based on a larger $\rsum$ leading to a larger variance in the estimations.


\section{Conclusion}\label{sec:conclusion}

%
In this paper, we show an equivalent connection between the two fundamental algorithms $\powitr$ and $\fwdpush$. 
Embarking from this connection, we further prove that the time complexity of a common $\fwdpush$ implementation is $O(m \cdot \log \frac{1}{\lmd})$, where $\lmd$ is the $\ell_1$-error threshold. 
This answers the long-standing open question regarding the time complexity of $\fwdpush$ in the dependency on $\lmd$.
Based on this finding, we propose a new implementation of $\powitr$, called $\powforpush$, which incorporates both the strengths of $\powitr$ and $\fwdpush$.  
Furthermore, we propose a new algorithm, called $\speedppr$ for answering approximate single-source PPR queries. 
The expected time complexity of $\speedppr$ is $O(n\log n \log \frac{1}{\eps})$ on scale-free graphs, improving the state-of-the-art $O(\frac{n\log n}{\eps})$-bound.
In addition, $\speedppr$ admits an index with size always at most $O(m)$ independent on $\eps$.
Our experimental results show that our $\powforpush$ and $\speedppr$ outperform their state-of-the-art competitors by up to an order of magnitude in all evaluation metrics.



\vspace{-2mm}
\section*{Acknowledgement}
In this research, Junhao Gan was in part supported by Australian Research Council (ARC)
Discovery Early Career Researcher Award (DECRA) DE190101118.
Zhewei Wei  was supported in part by National Natural Science Foundation of China (No. 61972401, No. 61932001 and No. 61832017), and by Beijing Outstanding Young Scientist Program NO. BJJWZYJH012019100020098. Zhewei Wei also works at Beijing Key Laboratory of Big Data Management and Analysis Methods, MOE Key Lab of Data Engineering and Knowledge Engineering, and Pazhou Lab, Guangzhou, 510330, China.
Both Junhao Gan and Zhewei Wei are the corresponding authors.

\bibliographystyle{ACM-Reference-Format}
\balance
\bibliography{acmart.bib}  

\end{sloppy}
\end{document}